\pgfplotsset{compat=1.18}
\newcommand{\RN}[1]{%
	\textup{\uppercase\expandafter{\romannumeral#1}}%
}
\def\bfs{\boldsymbol}
\def\EE{\mathcal{E}}
\def\KK{\mathcal{K}}
\def\C{\mathbb{C}}
\def\P{\mathbf{P}}
\def\R{\mathbb{R}}
\newcommand{\re}{\operatorname{Re}}
\newcommand{\ud}{\, \mathrm{d}}
\newcommand{\interior}{\operatorname{Int}}
\theoremstyle{plain}
\newtheorem{thm}{Theorem}[section]
\newtheorem{cor}[thm]{Corollary}
\newtheorem{lem}[thm]{Lemma}
\newtheorem{prop}[thm]{Proposition}
\theoremstyle{remark}
\newtheorem{rem}{Remark}[section]
\newtheorem{defn}{Definition}
\numberwithin{equation}{section}
\begin{document}

\title[Three phases of the Planar Equilibrium Measure]{Three topological phases of \\ the elliptic Ginibre ensembles with a point charge}
\author{Sung-Soo Byun}
\address{Department of Mathematical Sciences and Research Institute of Mathematics, Seoul National University, Seoul 151-747, Republic of Korea}
\email{sungsoobyun@snu.ac.kr}

\author{Eui Yoo}
\address{Department of Mathematical Sciences, Seoul National University, Seoul, Republic of Korea}
\email{yysh0227@snu.ac.kr}

\begin{abstract} 
We consider the complex and symplectic elliptic Ginibre matrices of size $(c+1)N \times (c+1)N$, conditioned to have a deterministic eigenvalue at $ p \in \mathbb{R} $ with multiplicity $ c N $. We show that their limiting spectrum is either simply connected, doubly connected, or composed of two disjoint simply connected components. Moreover, denoting by $\tau \in [0,1]$ the non-Hermiticity parameter, we explicitly characterise the regions in the parameter space $  (p, c, \tau) $ where each topological type emerges. For cases where the droplet is either simply or doubly connected, we provide an explicit description of the limiting spectrum and the corresponding electrostatic energies. As an application, we derive the asymptotic behaviour of the moments of the characteristic polynomial for elliptic Ginibre matrices in the exponentially varying regime. 
\end{abstract}

\maketitle

\section{Introduction and main results}

Despite receiving significant attention in recent years, non-Hermitian random matrix theory has historically been less explored than its Hermitian counterpart. This is partly because many key tools used in Hermitian random matrix theory, such as classical orthogonal polynomial theory and group integral techniques, cannot generally be applied to non-Hermitian random matrices. Nonetheless, the past two decades have seen remarkable progress in non-Hermitian random matrix theory, aided by deep connections to other mathematical areas such as the theory of Coulomb gases \cite{Se24,Fo10,Le22}. We refer the reader to \cite{BF24} for a recent review of the progress in the field of non-Hermitian random matrices.

Not only is it more challenging, but non-Hermitian random matrix theory has also been found to exhibit more fruitful features than Hermitian random matrix theory. One prominent example is its connection to the topological and conformal geometric properties of the limiting spectral distribution, often referred to as the droplet. For instance, the work of Jancovici et al. \cite{JMP94,TF99} in the 1990s introduced the surprising observation that the precise asymptotic behaviour of the free energies is intricately linked to the topological properties of their droplets, as represented by the Euler characteristics (see also \cite{CFTW15}).
Furthermore, recent studies have revealed that the behaviour of these ensembles depends in a highly non-trivial way on the multiple connectivity or the number of disjoint connected components of droplets \cite{BSY24,BKS23,BP24,BKSY25,BFL25, Ch22,Ch23,ACCL24,ACC23,BY23,AC24,ACC23a}. For the critical case involving certain singularities, see \cite{KLY23,CEJ24,BLY21,LZ23,SS22,JV23,BGM18,CL24,CL24a} and references therein.

This, in turn, calls for explicit derivations of droplets that naturally arise in non-Hermitian random matrices, particularly those with rich topological structure. In this direction, two natural models have been actively investigated in the field, both constructed from the Ginibre matrix \cite{BF24}, a random matrix with independent and identically distributed Gaussian entries, or its variants. 
The first model adopts an electrostatic perspective. In this approach, a non-trivial point charge is imposed, or equivalently, one considers conditional Ginibre matrices with a prescribed deterministic eigenvalue, see e.g.  \cite{BBLM15,BSY24,KLY23,CK22,BKP23,KKL24,BGM18,LY17} and references therein. 
The second model takes a more matrix-theoretic approach, involving the addition of a deterministic matrix to the Ginibre matrix, leading to what are known as deformed Ginibre matrices, see e.g. \cite{CEJ24,CCEJ24,EJ23,LZ22,LZ23} and references therein. 

In this work, we take the first approach and investigate the limiting spectrum of elliptic Ginibre matrices with a point charge. The elliptic Ginibre matrices are indexed by a non-Hermiticity parameter and interpolate between the Ginibre matrices and Gaussian Hermitian random matrices---in our case, the Gaussian unitary and symplectic ensembles. For Ginibre matrices with a point charge, the associated droplet is characterised in the seminal work \cite{BBLM15}, where it was shown that the droplet is either simply or doubly connected. Our main results in this paper extend these findings, revealing that, when the non-Hermiticity parameter is considered, an additional third phase emerges: a regime where the droplet consists of two connected components. We explicitly derive the regions in the parameter space \((p, c, \tau)\) where each topological type arises. Furthermore, when the droplet is either simply or doubly connected, we provide an explicit description of it as well as its electrostatic energies.  
As a consequence, we derive the asymptotic behaviours of the moments of the characteristic polynomials of the elliptic Ginibre matrices.

\medskip 

Let us now be more precise in introducing our results. 
We consider configurations of points $\bfs{z}=\{z_j\}_{j=1}^N$ in the complex plane, with joint probability distribution functions
\begin{align}
\ud\P_N^{ \mathbb C }(\boldsymbol{z}) & = \frac{1}{Z_N^{ \mathbb{C} }(W) } \prod_{ 1 \le j<k \le N } |z_j-z_k|^2 \prod_{j=1}^N e^{ -N W (z_j) }   \ud A(z_j), \label{Gibbs complex} 
\\
\ud\P_N^{ \mathbb H }(\boldsymbol{z}) & = \frac{1}{Z_N^{ \mathbb{H} }(W) } \prod_{1 \le j<k \le N} |z_j-z_k|^2 \prod_{1 \le j \le k \le N} |z_j-\overline{z}_k|^2 \prod_{j=1}^N   e^{ -2NW(z_j) }   \ud A(z_j), \label{Gibbs symplectic}
\end{align}
where $\ud A(z)=d^2z/\pi$ is the area measure. Here $W:\mathbb{C} \to \mathbb{R}$ is a given external potential, and ${Z_N^{ \mathbb{C} }(W) } $ and $ {Z_N^{ \mathbb{H} }(W) }  $ are the partition functions. The ensembles \eqref{Gibbs complex} and \eqref{Gibbs symplectic} are known as the random normal matrix ensemble and the planar symplectic ensemble, respectively. Moreover, they are equivalent to two-dimensional Coulomb gases at inverse temperature $\beta=2$, with Dirichlet and Neumann boundary conditions, respectively. We also refer to \cite{LMS19,FJ96,DLM19} and references therein for a realisation as a fermionic system.

The limiting distribution of the point process $\bfs z$ can be effectively described using the logarithmic potential theory.  
Let us briefly recall some basic notions and properties from potential theory, see \cite{ST97} for a comprehensive source. 
For a given probability measure $\mu$, the weighted logarithmic energy is given by 
\begin{align} \label{def of log energy}
    I_W(\mu):=\iint_{\mathbb{C}^2}\log \frac{1}{|z- w|}\ud \mu(z)\ud \mu(w) + \int_\mathbb{C}W(z) \ud \mu(z).
\end{align}
It is well known that for a general admissible potential $W$, there exists a unique measure $\mu_W$ that minimises $I_W$. 
Furthermore, $\mu_W$ is characterised by the variational conditions (Euler-Lagrange equations)
\begin{equation} \label{eq:variational}
  \int_\mathbb{C} \log\frac{1}{|z-w|  }\ud \mu_W(w) + \frac{1}{2}W(z)  \begin{cases}
  = C_W &   z \in \text{supp }\mu_W, 
    \smallskip   
    \\
\ge C_W &  z \in \mathbb{C}. 
  \end{cases}
\end{equation}
Here, $C_W$ is called the (modified) Robin's constant. 
From the structural point of view, Frostman's theorem asserts that $\mu_W$ is absolutely continuous with respect to the area measure $\ud A$, and takes the form
\begin{align} \label{eq for Frostman}
    d\mu_W = \Delta W \cdot \mathbbm{1}_{S_W}\ud A, \qquad (\Delta :=\partial \bar{\partial}), 
\end{align}
where $S_W$ is a certain compact subset of the complex plane called the droplet.

The equilibrium measure $\mu_W$ is closely related to the ensembles \eqref{Gibbs complex} and \eqref{Gibbs symplectic}. By standard equilibrium convergence, the empirical measure $\frac{1}{N} \sum_{j=1}^N \delta_{z_j}$ of the point process $\bfs z$ converges to the equilibrium measure $\mu_W$, see e.g. \cite{Se24,CGZ14}. In order to see this more intuitively, notice that the Gibbs measures \eqref{Gibbs complex} and \eqref{Gibbs symplectic} are proportional to $ \exp( - {\rm{H}}_N^{\C}( \bfs{z} ) )$ and $ \exp( - {\rm{H}}_N^{\mathbb{H}}( \bfs{z} ) )$, where the Hamiltonians are given by 
\begin{align}
{\rm{H}}_N^{\mathbb{C}}( \bfs{z} ) &= \sum_{ 1\le j<k \le N } \log \frac{1}{|z_j-z_k|^2} +N \sum_{j=1}^N W(z_j),  \label{Ham complex}
\\
{\rm{H}}_N^{\mathbb{H}}( \bfs{z} ) &= \sum_{  1\le j<k \le N  } \log \frac{1}{|z_j-z_k|^2   }+ \sum_{ 1\le j \le k \le N } \log \frac{1}{ |z_j-\overline{z}_k|^2 } +2N \sum_{j=1}^N W(z_j).  \label{Ham symplectic}
\end{align}
Thus one can see that $I_W$ in \eqref{def of log energy} corresponds to the continuum limit of these Hamiltonians, after taking proper normalisations. Here, it has been assumed that $W(z)=W(\overline{z})$ for the second case.  
From the equilibrium convergence, when investigating the macroscopic distribution of the Coulomb gas ensembles, one of the key tasks is to solve the equilibrium measure problem. That is, for a given potential $W$, one aims to determine the associated equilibrium measure $\mu_W$. This constitutes a particular type of inverse problem, and due to the structure in \eqref{eq for Frostman}, the main step in this problem is to identify the droplet $S_W$.
We refer the reader to \cite{By24,BS20,ABK21,BK12,BM15,BBLM15,Ch23a,BFL25,LD21,Ad18} and references therein for recent development on the planar equilibrium measure problem.

\medskip 

We now turn to our particular model of interest, the conditional elliptic Ginibre ensembles. In order to introduce this, let us first write $G$ for the Ginibre matrices whose entries are complex or quaternionic Gaussian random variables with mean zero and variance $1/N$. Introducing a non-Hermiticity parameter $\tau \in [0,1]$, the elliptic Ginibre matrices are then defined by
\begin{equation} \label{def of eGinibre}
X_{\tau}:= \frac{\sqrt{1+\tau}}{2}(G+G^*)+\frac{\sqrt{1-\tau}}{2}(G-G^*).
\end{equation}
Then its eigenvalue distribution follows \eqref{Gibbs complex} and \eqref{Gibbs symplectic} respectively, where the associated potential is given by 
\begin{equation} \label{def of potential eGinibre}
W^{ \rm e }(z)= \frac{1}{1-\tau^2}\Big( |z|^2-\tau \re z^2 \Big). 
\end{equation} 
Furthermore, as $N \to \infty$, the eigenvalues tend to be uniformly distributed within an ellipse 
\begin{equation} \label{def of elliptic law}
\mathsf{E} = \Big\{ (x,y)\in \mathbb{R}^2 : \Big( \frac{x}{1+\tau}\Big)^2 + \Big( \frac{y}{1-\tau}\Big)^2 \le 1 \Big\}, 
\end{equation} 
which is often called the elliptic law. 

Next, for a given $c \ge 0$, we consider the elliptic Ginibre matrix of size $(c+1)N \times (c+1)N$, conditioned to have deterministic eigenvalue at $p \in \R$ with multiplicity $cN$. Then the remaining $N$ random eigenvalues again follow the distributions \eqref{Gibbs complex} and \eqref{Gibbs symplectic}, where the associated external potential is given by 
\begin{equation} \label{eq:potential}
    Q(z) = \frac{1}{1-\tau^2}\Big(|z|^2 -\tau \re z^2\Big)- 2c\log|z-p|. 
\end{equation}
Such a logarithmic singularity is often called the point charge insertion or the Fisher-Hartwig singularity. Furthermore, as will be discussed below this section, it is closely related to the moments of the characteristic polynomials \cite{AV03}. The way to construct the random matrix model with a logarithmic point charge is also known as the inducing procedure \cite{FBKSZ12}.

It follows from \eqref{eq for Frostman} that the equilibrium measure $\mu_Q$ is of the form 
\begin{align}\label{eq:eqmeasure}
    \ud\mu_Q  = \frac{1}{1-\tau^2}\mathbbm{1}_{S_Q} \ud A.
\end{align}
In this paper, we aim to provide the topological characterisation of the droplet \(S \equiv S_Q\). For this purpose, we distinguish the parameter space of $(p,c,\tau)$ into three distinct regimes.

\begin{defn}[Regimes of the parameters $p$, $c$ and $\tau$] \label{Def_regimes of p and c} We define the following different regimes, cf. Figure~\ref{fig:phase}.  
\begin{itemize}
    \item (Regime I) The first regime is the most explicit and corresponds to the case where \( p \) and \( c \) lie within the following ranges: 
    \begin{equation}\label{eq_doubly connected range 1}
    p \le \min \Big \{  2\sqrt{\frac{2\tau(1+\tau)}{3+\tau^2}},2 \sqrt{ \frac{     \tau(1-\tau-2c\tau)   }{  1-\tau  } } \Big \}  \quad \text{and} \quad  0\leq c\leq \frac{1-\tau}{2\tau},
    \end{equation}
    or 
    \begin{equation}\label{eq_doubly connected range 2}
  2\sqrt{\frac{2\tau(1+\tau)}{3+\tau^2}} \le   p \le (1+\tau)\sqrt{1+c}-\sqrt{c(1-\tau^2)} 
  \quad \text{and} \quad 0\leq c\leq \frac{(1-\tau)^3}{2\tau(3+\tau^2)}. 
    \end{equation}
  \item (Regime II) The second regime corresponds to the case where for a given $\tau,$ the other parameters $c$ and $p$ are given in terms of two parameters $a$ and $\kappa$ as
\begin{align}
c& \equiv c(a,\kappa) =   \frac{\kappa}{a^2} \frac{(1-a^2)^2 (1-\tau a^2) +  a^2\kappa  }{(1-a^2)^2(1-\tau^2 + 2\tau \kappa ) - \kappa^2 }, \label{eq_simply connected c}
\\
p & \equiv p(a,\kappa) =  \sqrt{ \frac{1+\tau}{1-\tau} }
 \frac{  (1-\tau)(1-a^2)(  1+\tau a^2 ) -( 1-\tau a^2) \kappa   }{a\sqrt{ (1-a^2)^2 (1-\tau^2 + 2\tau \kappa) - \kappa^2 } }.\label{eq_simply connected p}
\end{align}
Here, the parameters $a$ and $\kappa$ lie in the range
\begin{equation}
  a\in(0,1), \qquad \kappa \in [0, \kappa_{ \rm cri}), 
\end{equation}
where $\kappa_{ \rm cri }$ is specified as a unique zero of $H(a, \cdot)$ in \eqref{def of H(a,kappa)}. 
\smallskip 
 \item (Regime III) This corresponds to the case where the ranges of $p, c$ and $\tau$ lie outside the above two regimes.
\end{itemize}
\end{defn}

\begin{figure}[t]
     \begin{subfigure}{0.4\textwidth}
        \begin{center}
            \includegraphics[width=\textwidth]{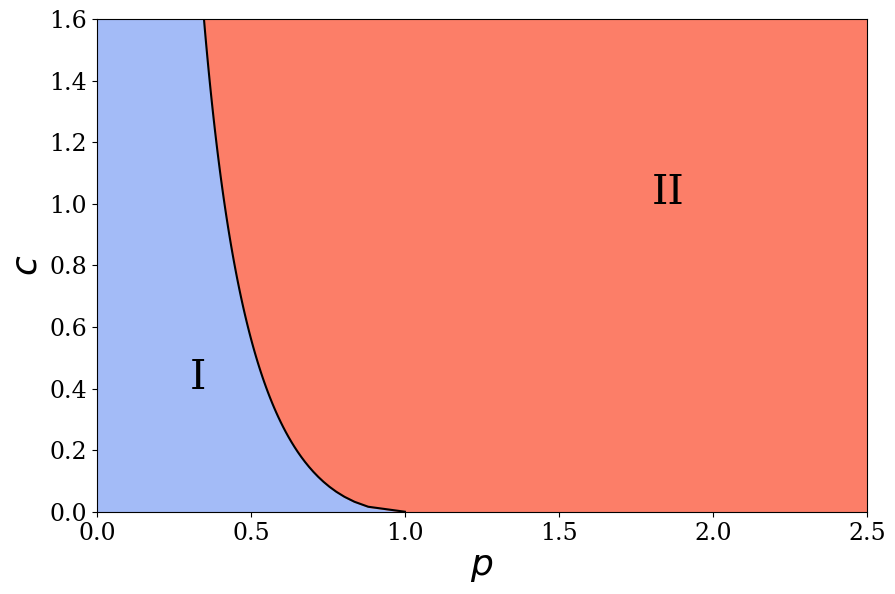}
        \end{center}
    \subcaption{ $\tau=0$ }
    \end{subfigure}
  \qquad     \begin{subfigure}{0.4\textwidth}
        \begin{center}
            \includegraphics[width=\textwidth]{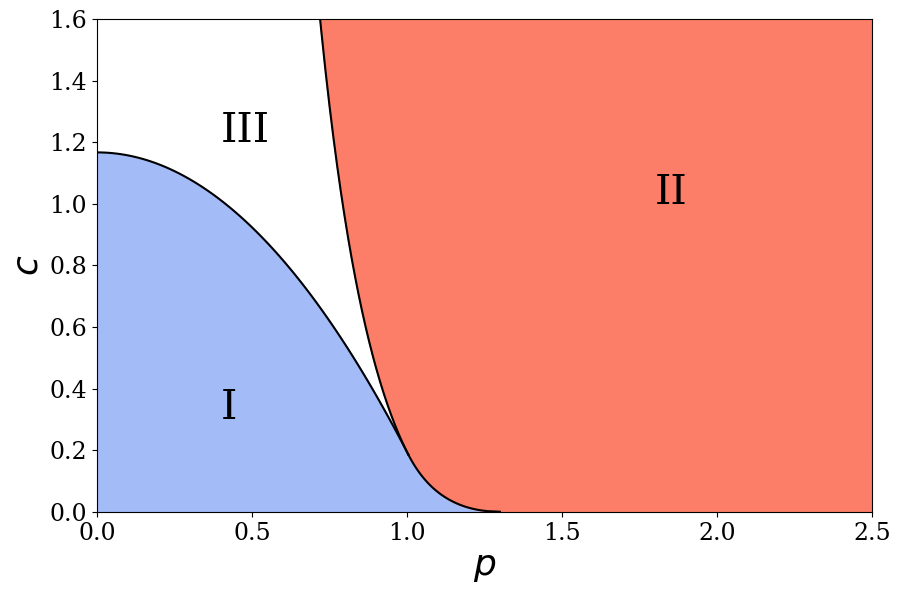}
        \end{center}
    \subcaption{ $\tau=0.3$ }
    \end{subfigure}
  
     \begin{subfigure}{0.4\textwidth}
        \begin{center}
            \includegraphics[width=\textwidth]{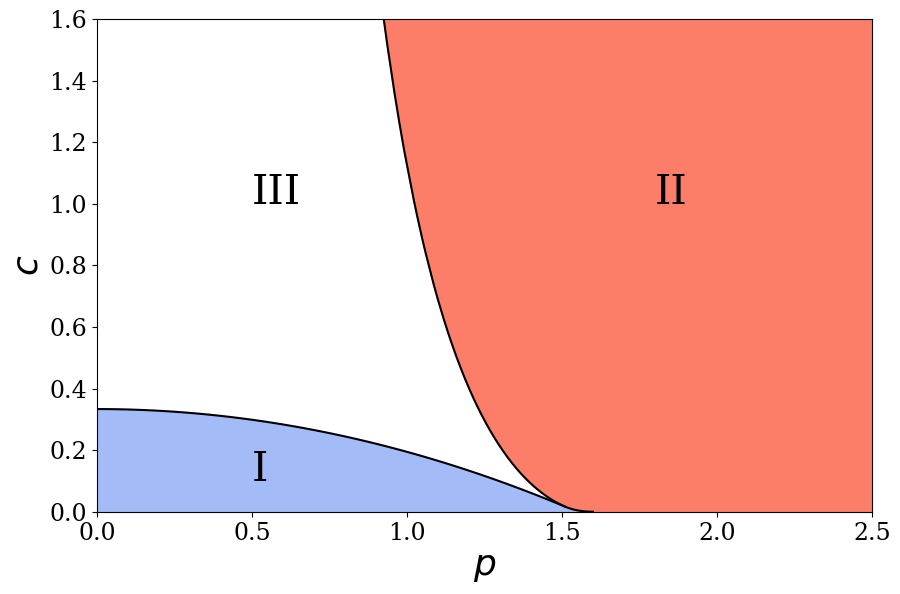}
        \end{center}
    \subcaption{ $\tau=0.6$ }
    \end{subfigure}
     \qquad   \begin{subfigure}{0.4\textwidth}
        \begin{center}
            \includegraphics[width=\textwidth]{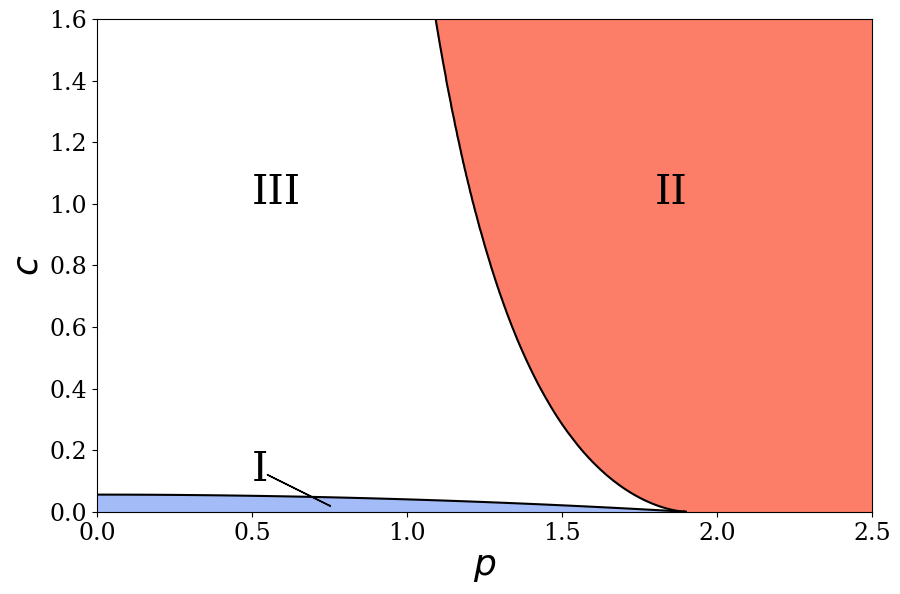}
        \end{center}
    \subcaption{ $\tau=0.9$ }
    \end{subfigure}
    \caption{ The plot illustrates Regimes I, II, and III in Definition~\ref{Def_regimes of p and c}, across various values of $\tau$. Notably,  for $\tau = 0$, only Regimes I and II are present, while for $p = 0$, only Regimes I and III persist, which is consistent with discussions in Remark~\ref{Remark_regimes extremal}. }
    \label{fig:phase}
\end{figure}

Our first main result provides the explicit phase characterisation of the droplet. 
 
\begin{thm}[\textbf{Topological characterisation of the droplet}] \label{Thm_main droplet}
The droplet \( S \) associated with \( Q \) defined in \eqref{eq:potential} is either doubly connected, simply connected, or composed of two disjoint simply connected components. More precisely, we have the following. 
\begin{itemize}
    \item[\textup{(i)}] The droplet is doubly connected  if and only if $(p,c,\tau)$ falls within \textup{Regime I}. 
    \smallskip 
      \item[\textup{(ii)}] The droplet is simply connected if and only if $(p,c,\tau)$ falls within \textup{Regime II}. 
    \smallskip 
     \item[\textup{(iii)}] The droplet consists of two disjoint simply connected components if and only if $(p,c,\tau)$ falls within \textup{Regime III}. 
\end{itemize}
\end{thm}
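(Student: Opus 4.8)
The plan is to translate the equilibrium problem into a boundary--value problem for a Schwarz function, to solve that problem explicitly by conformal maps for each candidate topology, and to identify Regimes I, II, III as exactly the parameter ranges in which each model is valid; this yields the ``if'' directions, and the ``only if'' directions follow once one checks that the three regimes partition the parameter set. The starting point: since $\Delta Q\equiv(1-\tau^2)^{-1}$ is a positive constant, $\partial S$ is a finite union of analytic Jordan curves, and differentiating the equality in \eqref{eq:variational} gives on $S$ that $\int_{\mathbb C}(z-w)^{-1}\,d\mu_Q(w)=\partial Q(z)=\frac{\bar z-\tau z}{1-\tau^2}-\frac{c}{z-p}$. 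Hence
\[
\mathcal S(z):=(1-\tau^2)\int_{\mathbb C}\frac{d\mu_Q(w)}{z-w}+\tau z+\frac{c(1-\tau^2)}{z-p}
\]
equals $\bar z$ on $\partial S$, is meromorphic on $(\mathbb C\cup\{\infty\})\setminus S$ with only a simple pole of residue $c(1-\tau^2)$ over $p$, and satisfies $\mathcal S(z)=\tau z+(1+c)(1-\tau^2)z^{-1}+O(z^{-2})$ at $\infty$. Since $\mu_Q$ is the unique minimiser of $I_Q$, it suffices to produce, for parameters in each regime, a compact set of the asserted topology carrying such an $\mathcal S$ with $|S|=1-\tau^2$, and to verify the outer inequality in \eqref{eq:variational}.

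For Regime I the doubly connected picture turns out to be remarkably rigid. If $S$ is doubly connected then $p$ must lie in the bounded hole (a bounded domain whose Schwarz function is analytic degenerates to a point), and the restriction of $\mathcal S$ to the hole is the Schwarz function of its boundary and has a single simple pole there; since a Jordan curve whose Schwarz function is meromorphic inside with one simple pole is a circle --- centred at that pole, with squared radius equal to the residue --- the hole must be $D(p,\sqrt{c(1-\tau^2)})$. The analogous analysis on the outer curve, using that $\mathcal S$ has no pole and grows like $\tau z$ there, forces it to be $\partial(\sqrt{1+c}\,\mathsf E)$. Thus $S=\sqrt{1+c}\,\mathsf E\setminus D(p,\sqrt{c(1-\tau^2)})$ and $\mu_Q=(1-\tau^2)^{-1}\mathbbm{1}_{\sqrt{1+c}\,\mathsf E}-(1-\tau^2)^{-1}\mathbbm{1}_{D(p,\sqrt{c(1-\tau^2)})}$; writing $2U_{\mu_Q}+Q$ as $(2U_{\tilde\sigma}+Q_0)-(2U_\nu+2c\log|z-p|)$, with $Q_0:=Q+2c\log|z-p|$ the charge--free elliptic Ginibre potential, $\tilde\sigma$ its equilibrium measure at total mass $1+c$ and $\nu$ the disc part of $\mu_Q$, the second bracket vanishes off the disc by the mean--value property, so the outer inequality becomes the known one for $Q_0$. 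The construction is valid exactly when the disc is contained in the ellipse; working out whether the tightest contact occurs at an interior point of $\partial(\sqrt{1+c}\,\mathsf E)$ or at its far vertex --- the switch happening at $p=2\sqrt{2\tau(1+\tau)/(3+\tau^2)}$ --- yields the two alternatives \eqref{eq_doubly connected range 1}--\eqref{eq_doubly connected range 2}, with $c=(1-\tau)/(2\tau)$ being ``the disc just fits across the minor axis'' and $p=(1+\tau)\sqrt{1+c}-\sqrt{c(1-\tau^2)}$ being ``the disc just reaches the vertex.''

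For Regime II let $\phi\colon\{|u|>1\}\to\mathbb C\setminus S$ be the exterior uniformisation with $\phi(\infty)=\infty$; now $p$ lies in the unbounded exterior, so the reflection identity $\mathcal S(\phi(u))=\widehat\phi(1/u)$ on $|u|=1$ (where $\widehat\phi$ has complex--conjugated coefficients) forces $\mathcal S\circ\phi$ to extend to a degree--three rational map with simple poles at $0$, $\infty$ and the preimage of $p$, so $\phi$ is the corresponding three--parameter rational map (the $u^{-1}$--term coming, as for the bare elliptic law, from the $\tau z$ part of $Q$, and the last pole from the point charge). Matching the Laurent data at $\infty$, the residue over $z=p$ and the area constraint cuts this to a two--parameter family, which after reparametrisation by a modulus--type $a$ and a point--charge $\kappa$ becomes \eqref{eq_simply connected c}--\eqref{eq_simply connected p}; the bound $\kappa<\kappa_{\rm cri}$ marks the onset of non--univalence of $\phi$ (a double point on $\partial S$), i.e.\ the zero of $H(a,\cdot)$. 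Regime III has connected complement, so no global Schwarz function survives and no such explicit uniformisation exists; here one argues by exclusion --- the two explicit candidates are inadmissible --- together with a direct argument that the droplet splits into two simply connected pieces, using the symmetry $z\mapsto\bar z$ and the fact that $S\cap\mathbb R$ is then a disjoint union of two intervals on opposite sides of $p$.

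The weight is carried by three points: (a) verifying the outer inequality in \eqref{eq:variational} for the Regime II candidate, which unlike in Regime I is not a one--line mean--value argument and needs care near the point charge and the phase boundaries; (b) showing Regimes I, II, III are pairwise disjoint and exhaust $\{p\ge0,\ c\ge0,\ \tau\in[0,1)\}$, which upgrades the implications to equivalences and rules out any further topology; and (c) the bookkeeping identifying $2\sqrt{2\tau(1+\tau)/(3+\tau^2)}$, the curve $c=(1-\tau)^3/(2\tau(3+\tau^2))$ and $\kappa_{\rm cri}$ as exactly the loci where containment, univalence or positivity of the density first fails. I expect the principal obstacle to be the Regime II construction: inverting the coefficient--matching into the closed forms \eqref{eq_simply connected c}--\eqref{eq_simply connected p}, proving univalence of $\phi$ throughout $\kappa<\kappa_{\rm cri}$, and showing that the image of this two--parameter map is precisely the complement of Regimes I and III.
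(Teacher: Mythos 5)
Your Schwarz-function framework is essentially the paper's quadrature-domain framework in different clothing: the function $\mathcal{S}$ you extract from the Euler--Lagrange equation is precisely the Schwarz function of $\hat{\mathbb{C}}\setminus S$, and its restriction to each component is that component's quadrature function plus a Cauchy transform. Your Regime~I argument (hole must contain $p$, a bounded domain with a holomorphically extendable Schwarz function has zero area, a Jordan curve whose Schwarz function has a single simple pole is a circle, the outer curve is forced by uniqueness of the unbounded quadrature domain with quadrature function $\tau\zeta$ and prescribed complementary area) matches Proposition~\ref{prop_doubly connected droplet} closely, and your mean-value shortcut for the variational inequality off the disc is in fact tidier than citing \cite{By24}. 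Your Regime~II plan is the paper's conformal mapping method of Section~\ref{Section_simply}, including the identification of $\kappa_{\rm cri}$ as the threshold where the outer variational inequality fails.

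The genuine gap is Regime~III, and it is not the kind of gap you can close with the sketch you propose. The paper's Theorem~\ref{Thm_main droplet} contains a nontrivial a priori statement --- that the droplet has one of only \emph{three} topological types --- and this is established \emph{before} any of the explicit constructions, by invoking the Lee--Makarov connectivity bound (Theorem~\ref{thm_S^c=QD} here, from \cite{LM16}): since the quadrature function $\mathsf{h}(\zeta)=\tau\zeta+c(1-\tau^2)/(\zeta-p)$ has degree~2, the bound $\#(\text{ovals})+q_{\rm odd}+4(q-q_1)\le 6$ leaves exactly four topologies, and the triply connected one is then killed by a pole count. With that bound in hand, the Regime~III direction is free: if $(p,c,\tau)$ lies in Regime~III, the droplet cannot be doubly or simply connected (by your ``only if'' arguments for (i), (ii)), and there is nothing else it can be. Without some such a priori bound, your plan needs a \emph{constructive} proof that the droplet in Regime~III is two simply connected pieces, but the hint you give --- $\bar z$-symmetry plus ``$S\cap\mathbb{R}$ is two intervals'' --- neither establishes that $S\cap\mathbb{R}$ actually has two components nor, even granting that, rules out extra blobs, holes, or higher genus. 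Indeed the paper explicitly notes that the Regime~III droplet has \emph{no} known explicit description for general $p>0$ (only for $p=0$ via a symmetry trick), so a direct construction is not just harder but not currently available. You should import the Lee--Makarov bound (or reprove the triply-connected exclusion and the connectivity inequality from the structure of $\mathcal{S}$), and then Regime~III becomes exhaustion as in Section~\ref{Section_proofs}; your point (b) about the regimes partitioning parameter space then does the rest.
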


\begin{rem}[Phases in extremal cases] \label{Remark_regimes extremal}
We compare Theorem~\ref{Thm_main droplet} with known results for two extremal cases.
\begin{itemize}
    \item (The Ginibre case $\tau=0$, cf.\cite{BBLM15}). In this case, Regime III reduces to a null set, leaving only Regimes I and II. These two regimes can be determined by the condition 
    $$
    p \le  \sqrt{1+c}-\sqrt{c}  \qquad \textup{or} \qquad  p > \sqrt{1+c}-\sqrt{c}
    $$
    for Regimes I and II, respectively. This corresponds to the regimes investigated in \cite{BBLM15}.
    \smallskip 
    \item (The point charge at the origin $p=0$, cf. \cite{By24}). In contrast to the previous case, if $p=0$, then Regime II reduces to a null set, leaving only Regimes I and III. These two regimes can be determined by the condition 
    $$
    \tau \le \frac{1}{1+2c}  \qquad \textup{or} \qquad \tau > \frac{1}{1+2c} 
    $$
      for Regimes I and III, respectively. This corresponds to the regimes investigated in \cite{By24}.
\end{itemize}
\end{rem}

Recall that the weighted logarithmic energy is given by \eqref{def of log energy} and the equilibrium measure $\mu_Q$ is of the form \eqref{eq:eqmeasure}. In cases (i) and (ii) of Theorem~\ref{Thm_main droplet}, we further provide an explicit description of the droplets and an evaluation of the logarithmic energies.

\begin{thm}[\textbf{Description of the droplet and electrostatic energies}] \label{Thm_droplet and energy} We have the following. 
\begin{itemize}
    \item[\textup{(i)}] Suppose that $(p,c,\tau)$ falls within \textup{Regime I}. Then the droplet is given by 
    \begin{equation} \label{droplet_doubly connected}
    S= \bigg\{ (x,y) \in \R^2 : \Big( \frac{x}{1+\tau}\Big)^2 + \Big( \frac{y}{1-\tau}\Big)^2 \le 1+c\,,\, (x-p)^2+y^2 \ge c(1-\tau^2) \bigg\}. 
    \end{equation}
    Furthermore, the weighted logarithmic energy is given by $I_Q(\mu_Q) = \mathcal{I}_{ \rm d }(p,c,\tau)$, where
    \begin{equation} \label{weighted energy doubly connected}
    \mathcal{I}_{ \rm d }(p,c,\tau) := \frac{3}{4}+\frac{3c}{2} + \frac{c^2}{2}\log \Big( c(1-\tau^2) \Big) -\frac{(1+c)^2}{2}\log(1+c) - \frac{c\, p^2}{1+\tau}.
    \end{equation}  
      \item[\textup{(ii)}] Suppose that $(p,c,\tau)$ falls within \textup{Regime II}. Then the droplet is given by the closure of the interior of the real-analytic Jordan curve formed by the image of the unit circle under the rational map
    \begin{align}\label{eq_simply conformal map}
    f(z) = R\Big(z + \frac{\tau}{z} - \frac{\kappa}{z-a}-\frac{\kappa}{a(1-\tau)}\Big), 
\end{align}
where $R>0$, $a \in (0,1)$, and $\kappa \in [0, \kappa_{ \rm cri})$. Here, $(R,a, \kappa)$ is a solution to the coupled algebraic equations 
\begin{align}  
    1&  = \frac{R^2 }{1-\tau^2} \Big( 1-\tau^2 +  2\tau\kappa -\frac{\kappa^2}{ (1-a^2)^2 } \Big) , \label{eq:sc2}
    \\
    c&= \frac{R^2 \kappa }{1-\tau^2}\Big(\frac{1-\tau a^2}{a^2} + \frac{\kappa}{(1-a^2)^2}\Big),\label{eq:sc3}
    \\
    p& = \frac{R}{a}\Big(1+\tau a^2-\frac{1-\tau a^2}{1-\tau}\frac{\kappa}{1-a^2}\Big).\label{eq:sc4}
\end{align}
    Furthermore, the weighted logarithmic energy is given by $I_Q(\mu_Q) = \mathcal{I}_{ \rm s }(p,c,\tau)$, where
    \begin{align}
    \begin{split} \label{weighted energy simply connected}
      \mathcal{I}_{ \rm s }(p,c,\tau) &:= \frac{3}{4} + \frac{3c}{2} -\frac{cp^2}{1+\tau} +\frac{R^3\kappa p (2-3a^2-3\tau a^2 + 2\tau a^4) }{2(1-\tau^2)^2a^3}    \Big( 1-\tau  -\frac{ 2-3a^2+3\tau a^2-2\tau a^4}{  2-3a^2-3\tau a^2 + 2\tau a^4 } \frac{\kappa}{1-a^2}\Big)
    \\
    &\quad +2c(1+c)\log a + c^2 \log \Big( \frac{ c (1-\tau^2)(1-a^2) }{ R \kappa } \Big) - (1+c)^2 \log R. 
    \end{split}
    \end{align}
\end{itemize}
\end{thm}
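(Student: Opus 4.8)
The plan is to prove both parts by the \emph{verification method}. For the logarithmic potential write $U^{\mu}(z):=\int\log\frac{1}{|z-w|}\ud\mu(w)$. By uniqueness of the minimiser of $I_Q$ and Frostman's description \eqref{eq for Frostman}, it is enough, for each part, to produce a compact set $S_0$ for which $\mu_0:=\frac{1}{1-\tau^2}\mathbbm 1_{S_0}\ud A$ is a probability measure and $(S_0,\mu_0)$ satisfies the variational conditions \eqref{eq:variational}: then $S=S_0$ and $\mu_Q=\mu_0$, and the energy follows from $I_Q(\mu_Q)=C_Q+\tfrac12\int Q\ud\mu_0$. Concretely one checks: (a) $\int_{S_0}\ud A=1-\tau^2$; (b) the equality $U^{\mu_0}+\tfrac12 Q\equiv C_Q$ on $S_0$, which after differentiating in $z$ becomes the Cauchy-transform identity $\mathcal{C}^{\mu_0}(z)=\partial Q(z)=\frac{\bar z-\tau z}{1-\tau^2}-\frac{c}{z-p}$ for $z\in S_0$ (connectedness of $S_0$ fixes the single constant $C_Q$); (c) the inequality $U^{\mu_0}+\tfrac12 Q\ge C_Q$ on $\mathbb{C}$. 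It then remains to evaluate $C_Q$ and $\int Q\ud\mu_0$ in closed form.

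For part (i) take $S_0$ as in \eqref{droplet_doubly connected}, namely the elliptic-law domain $\mathsf E$ dilated by $\sqrt{1+c}$ with the disk $D=D(p,\sqrt{c(1-\tau^2)})$ removed. Condition (a) is the area identity $(1-\tau^2)(1+c)-c(1-\tau^2)=1-\tau^2$. For (b), write $\mu_0=\mu_E-\mu_D$ with $\mu_E,\mu_D$ the uniform densities $\frac{1}{1-\tau^2}$ on the dilated ellipse and on $D$; the ellipse part of $\mathcal{C}^{\mu_0}$ equals $\frac{\bar z-\tau z}{1-\tau^2}$ on the dilated ellipse by the defining property of the elliptic law, and the disk part equals $-\frac{c}{z-p}$ on $S_0$ by Newton's theorem since $S_0$ lies outside $D$, which together give $\partial Q(z)$. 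For (c), outside the dilated ellipse the $\log|z-p|$ terms of $Q$ and of $U^{\mu_D}$ cancel, so there the effective potential coincides with that of the (dilated) pure elliptic equilibrium problem and is therefore bounded below by its minimum; this reduction also yields $C_Q=(1+c)\,C_V$, with $C_V$ the Robin constant of that elliptic problem. Inside the hole, an explicit evaluation of $U^{\mu_E}$ (quadratic, since the hole lies inside the ellipse) and of $U^{\mu_D}$ shows the effective potential depends only on $|z-p|$ and is strictly decreasing in it on $(0,\sqrt{c(1-\tau^2)}]$, hence is $\ge$ its boundary value $0$ on $\partial D$. Finally $\int Q\ud\mu_0$ splits into elementary integrals of $|z|^2-\tau\re z^2$ and of $\log|z-p|$ over the dilated ellipse and over $D$; assembling everything and simplifying produces \eqref{weighted energy doubly connected}.

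For part (ii) take $S_0$ to be the closure of the interior of $f(\partial\mathbb{D})$ with $f$ as in \eqref{eq_simply conformal map}, viewed as a map of $w$. One first verifies that $f$ is analytic on $\{|w|>1\}$, univalent there precisely when $\kappa<\kappa_{\rm cri}$ --- this is the meaning of the threshold $\kappa_{\rm cri}$, the zero of $H(a,\cdot)$ --- so that $\partial S_0$ is a real-analytic Jordan curve, and that $f(1/a)=p$: the point charge sits at the image of the reflected pole. For (b), the Schwarz function $\mathcal S$ of $\partial S_0$ satisfies $\mathcal S(f(w))=f(1/w)$ on $|w|=1$ (the coefficients of $f$ being real), and its continuation to $\mathbb{C}\setminus\overline{S_0}$ is meromorphic with polynomial part $\tau z$ at infinity and a single simple pole, at $p$; the Euler--Lagrange equality is then equivalent to that pole having residue $c(1-\tau^2)$, to $\mathcal S(z)-\tau z-\frac{c(1-\tau^2)}{z-p}\to 0$, and to $\int_{S_0}\ud A=1-\tau^2$. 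Expressed in $(R,a,\kappa)$ these are exactly \eqref{eq:sc2}--\eqref{eq:sc4}, and eliminating $R$ reproduces the parametrisation \eqref{eq_simply connected c}--\eqref{eq_simply connected p}; condition (c) again follows by reducing the effective potential off $S_0$ to one with the correct boundary behaviour and invoking the maximum principle, using univalence of $f$. For the energy, write $\int Q\ud\mu_0=\frac{1}{1-\tau^2}\int_{S_0}(|z|^2-\tau\re z^2)\ud A-\frac{2c}{1-\tau^2}\int_{S_0}\log|z-p|\ud A$, convert each area integral to a contour integral over $f(\partial\mathbb{D})$ by Stokes, substitute $z=f(w)$, and evaluate by residues at $w=0,a,1/a,\infty$; $C_Q$ is read off from one boundary point. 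Collecting terms gives \eqref{weighted energy simply connected}.

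The bulk of the work --- and the step most prone to error --- is the residue computation in part (ii): under $z=f(w)$ the integrands $|z|^2-\tau\re z^2$ and $\log|z-p|$ develop poles at $0,a,1/a,\infty$ (and, for the logarithm, branch-cut contributions), so the bookkeeping is heavy, and the resulting rational-plus-logarithmic expression must then be reduced --- using \eqref{eq:sc2}--\eqref{eq:sc4} repeatedly --- to the compact form \eqref{weighted energy simply connected}; the specialisations $\tau=0$ and $p=0$ of Remark~\ref{Remark_regimes extremal} provide useful consistency checks. The only genuinely delicate soft step is the univalence of $f$ on $\{|w|>1\}$ for $\kappa\in[0,\kappa_{\rm cri})$, which underlies both the Jordan-curve property of $\partial S_0$ and the maximum-principle argument in (c), and is precisely where the shape of $\kappa_{\rm cri}$ enters.
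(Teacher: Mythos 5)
Your plan for part~(i) is essentially the paper's: verify the variational conditions \eqref{eq:variational} directly for the candidate $\mu_0=\mu_E-\mu_D$ (this step is cited from \cite{By24}), check that Regime~I is precisely the parameter range where $D\subset E$ (the elementary computation in Section~\ref{Subsection_main droplet}), and then evaluate the Robin constant and $\int Q\,d\mu_0$ as in Lemma~\ref{lem_energy doubly connected}. That part of the proposal is sound, modulo the precise form of your claimed identity $C_Q=(1+c)C_V$, which does not match \eqref{Robin_doubly connected}.

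For part~(ii) there is a genuine gap, and it concerns exactly the feature the paper singles out as new. You assert that $f$ is ``univalent [on $\bar{\mathbb D}^c$] precisely when $\kappa<\kappa_{\rm cri}$,'' that this is ``the meaning of the threshold $\kappa_{\rm cri}$,'' and that the inequality in \eqref{eq:variational} then follows ``by the maximum principle, using univalence of $f$.'' None of this is correct. Proposition~\ref{prop_univalence} shows that $f$ is univalent on $\bar{\mathbb D}^c$ on the strictly larger interval $\kappa\in[\kappa_{\rm min},\kappa_{\rm max}]$; for $\kappa\in(\kappa_{\rm cri},\kappa_{\rm max}]$ the curve $f(\partial\mathbb D)$ is still a real-analytic Jordan curve, yet its interior is \emph{not} the droplet --- one is in Regime~III. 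The quantity $\kappa_{\rm cri}$ is the zero of $H(a,\cdot)$ in \eqref{def of H(a,kappa)} and governs the \emph{inequality} part of the variational conditions, not univalence. Moreover a maximum-principle argument cannot deliver that inequality: on $K^c\setminus\{p\}$ one has $\bar\partial\partial\,\mathcal U_K=1/(1-\tau^2)>0$, so $\mathcal U_K$ is strictly subharmonic there, which gives only an upper bound, not the needed lower bound $\mathcal U_K\ge\ell_K$. What is actually required --- and what your outline omits --- is the content of Lemmas~\ref{lem:three crit} and~\ref{lem:critical points} and Proposition~\ref{prop_simply inequality}: classify the critical points of $\mathcal U_K$ in $K^c$, show via a Hessian computation and a mountain-pass argument that there is a unique local minimum $\zeta_*\in(p,\infty)$ when $\kappa>\kappa_1$, and prove that $H(a,\cdot)$ is concave with a unique zero $\kappa_{\rm cri}\in(\kappa_1,\kappa_{\rm max}]$, so that $\mathcal U_K(\zeta_*)>\ell_K$ iff $\kappa<\kappa_{\rm cri}$. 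Without this step your proposal would accept candidates in Regime~III as droplets, contradicting Theorem~\ref{Thm_main droplet}(iii). The residue-based energy evaluation you sketch is in the right spirit of Lemma~\ref{lem_energy simply connected}, but the paper derives both $C_{\rm s}$ and $\int Q\,d\mu_Q$ by expanding the single identity \eqref{eq_Robin equation} at $\zeta\to\infty$ and $\zeta\to p$, which is considerably lighter bookkeeping than evaluating $\int_{S}\log|z-p|\,dA$ by residues directly.
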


\begin{figure}[t]
     \begin{subfigure}{0.32\textwidth}
        \begin{center}
            \includegraphics[width=\textwidth]{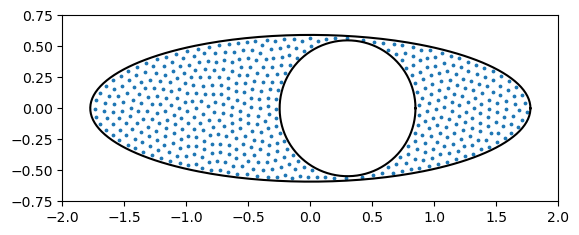}
        \end{center}
    \subcaption{ $p=0.3$ }
    \end{subfigure}
    \begin{subfigure}{0.32\textwidth}
        \begin{center}
            \includegraphics[width=\textwidth]{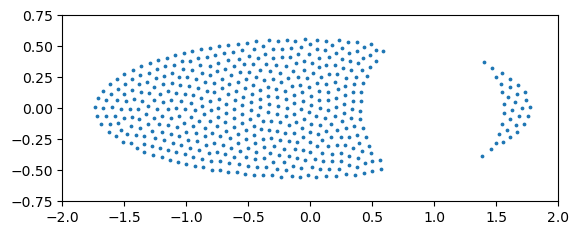}
        \end{center}
    \subcaption{ $p=1.0$}
    \end{subfigure}
     \begin{subfigure}{0.32\textwidth}
        \begin{center}
            \includegraphics[width=\textwidth]{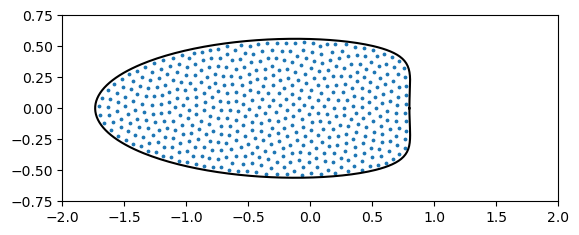}
        \end{center}
    \subcaption{  $p=1.5$ }
    \end{subfigure}
    \caption{The plots illustrate the configurations of Fekete points associated with the Hamiltonian \eqref{Ham complex}, with parameters \( c = 0.4 \), \( \tau = 0.5 \), and \( N = 500 \). Regime I corresponds to \( p < \sqrt{2/5} \approx 0.63 \), while Regime III approximately corresponds to \( p > 1.12 \). In (A) and (C), the black solid lines represent the boundaries of the droplets as determined in Theorem~\ref{Thm_droplet and energy}. } \label{Fig_Fekete points}
\end{figure}

We refer to Figure~\ref{Fig_Fekete points} for numerical verifications of the explicit shape of the droplet, cf. Remark~\ref{Rem_Fekete}. Additionally, the graphs of the energies  \eqref{weighted energy doubly connected} and \eqref{weighted energy simply connected} are presented in Figure~\ref{Fig_energies}.

As previously mentioned, Theorem~\ref{Thm_droplet and energy} on the description of the droplets extends the findings of \cite[Section 2]{BBLM15} for the $\tau = 0$ case, as well as those of \cite[Section 2.1]{By24} for the $p = 0$ case (see Remark~\ref{Rem_droplet extremal}).
Furthermore, Theorem~\ref{Thm_droplet and energy} on the evaluation of the energies generalises the results in \cite[Proposition 2.4]{BSY24} for the $\tau = 0$ case.
In both extremal cases, the doubly connected regime (Regime I) is referred to as the post-critical regime, while the simply connected regime (Regime II) for $\tau = 0$ or the two-component regime (Regime III) for $p = 0$ is referred to as the pre-critical regime.

We also note that Regime I in Definition~\ref{Def_regimes of p and c} corresponds to the case where, in the description of the droplet \eqref{droplet_doubly connected}, the outer ellipse does not intersect the inner circle. 
On the other hand, Proposition~\ref{prop_univalence} establishes that in Regime II, the rational map \eqref{eq_simply conformal map} is univalent and defines a conformal mapping from the exterior of the unit disc onto the exterior of the droplet.

\begin{figure}[t]
     \begin{subfigure}{0.32\textwidth}
        \begin{center}
            \includegraphics[width=\textwidth]{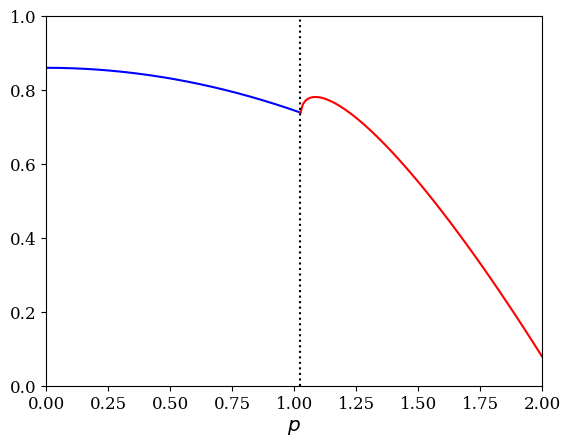}
        \end{center}
    \subcaption{ $c=0.15$ }
    \end{subfigure}
    \begin{subfigure}{0.32\textwidth}
        \begin{center}
            \includegraphics[width=\textwidth]{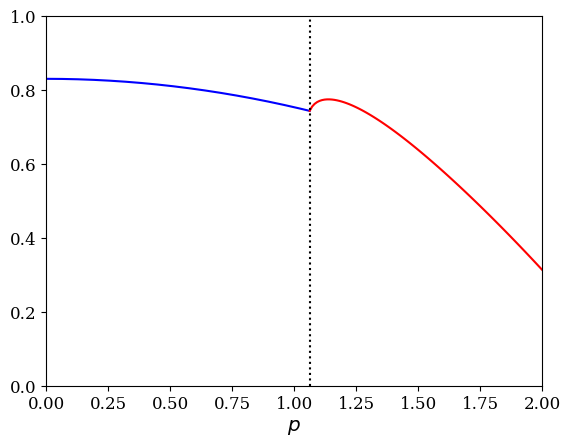}
        \end{center}
    \subcaption{ $c=0.1$}
    \end{subfigure}
     \begin{subfigure}{0.32\textwidth}
        \begin{center}
            \includegraphics[width=\textwidth]{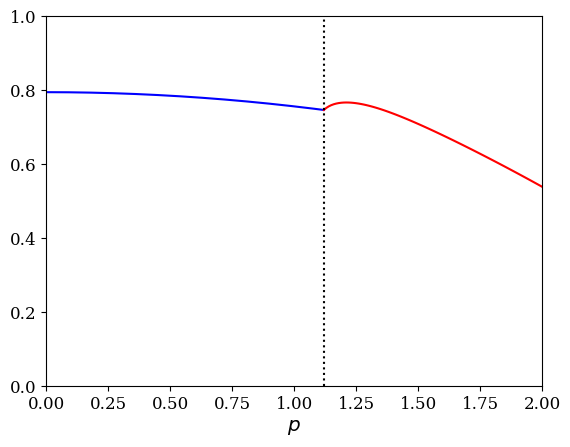}
        \end{center}
    \subcaption{  $c=0.05$ }
    \end{subfigure}
    \caption{ The plots display the graphs $p \mapsto I_Q(\mu_Q)$, given by  \eqref{weighted energy doubly connected} and \eqref{weighted energy simply connected}, for different values of $c$ with $\tau = 0.3$. The vertical dotted line indicates the critical value between Regimes I and II.  } \label{Fig_energies}
\end{figure}

\begin{rem}[Fekete points and numerics] \label{Rem_Fekete}
The discrete counterpart of the equilibrium measure is known as the Fekete point distribution, see e.g. \cite{Am17} and references therein. More precisely, we consider a configuration of points that minimises the Hamiltonians \eqref{Ham complex} and \eqref{Ham symplectic}. These configurations can be interpreted as the low-temperature (\( \beta = \infty \)) limit of the Coulomb gas ensembles. Since the macroscopic distribution of the Coulomb gas does not depend on the value of fixed \( \beta>0 \), the Fekete point configuration can be used to numerically observe the shape of the droplet. We also refer to \cite{AT24} for the $\beta$-ensembles with a flat equilibrium measure.
\end{rem}

\begin{rem}[The extremal $\tau=0$ case] \label{Rem_droplet extremal}
For the case $\tau = 0$, the rational map \eqref{eq_simply conformal map} simplifies, as the simple pole at the origin degenerates. Furthermore, the algebraic equations \eqref{eq:sc2}, \eqref{eq:sc3}, and \eqref{eq:sc4} can be solved more explicitly, leading to the expressions
\begin{equation} \label{R kappa tau0}
R|_{ \tau=0 }=\frac{1+p^2a^2}{2p a}, \qquad \kappa|_{ \tau=0 }=\frac{(1-a^2)(1-p^2a^2)}{1+p^2a^2}.  
\end{equation}
Here, $a$ satisfies $f(1/a) = p$ and $a^2=x$ is given as a unique solution to the cubic equation
  \begin{equation} \label{cubic eqn for tau0}
x^3 -\Big( \frac{p^2+4c+2}{2p^2} \Big) x^2+\frac{1}{2p^4} =0 
 \end{equation}
such that $0<a<1$ and $\kappa>0.$  Furthermore, as an immediate consequence of \eqref{weighted energy simply connected}, for the extremal case $\tau=0$, it follows that 
 \begin{align}
    \begin{split} \label{energy tau 0 simply}
      \mathcal{I}_{ \rm s }(p,c,\tau)\Big|_{\tau=0} &= \frac{3}{4} + \frac{3c}{2} -cp^2 +\frac{R^3\kappa p}{2 a^3}(2-3a^2) \frac{1-a^2-\kappa}{1-a^2}
      \\
    &\quad  +2c(1+c)\log a +c^2\log c-c^2\log \frac{R\kappa}{(1-a^2)}-(1+c)^2\log R .
    \end{split}
    \end{align}
Using \eqref{R kappa tau0} and \eqref{cubic eqn for tau0}, one can check that this formula is consistent with that derived in \cite[Proposition 2.1]{BSY24}. 
\end{rem}

\begin{rem}[Further phases in the general case] 
While some of our results, such as Theorem~\ref{Thm_droplet and energy} (i), can be naturally extended with minimal additional effort, our focus on the case \( p \in \mathbb{R} \) is primarily to make the phase diagram as explicit as possible. Another reason for this focus is that, when considering \eqref{Gibbs symplectic}, the potential must be symmetric with respect to the real axis. Consequently, for the symplectic ensemble, it is not meaningful to consider only a single point \( p \notin \mathbb{R} \).
On the other hand, if one considers more point charges at various points, then further phases can arise. The case with multiple point charges has also been studied in the literature including \cite{KKL24,LY23,BKP23,LY19}.
\end{rem}

\begin{rem}[Critical phases at intersections of different regimes] \label{Rem_critical phases}
There exist various critical regimes at intersections of different regimes. Let us summarise their geometric descriptions. 
\begin{itemize}
    \item The intersection of Regimes I and II is the case where, in \eqref{droplet_doubly connected}, the outer ellipse meets the inner circle tangentially at the rightmost edge, see Figure~\ref{Fig_critical phases} (A). 
    \smallskip 
    \item The intersection of Regimes I and III occurs when, in \eqref{droplet_doubly connected}, the outer ellipse meets the inner circle tangentially at two conjugate points in the upper and lower half-planes, see Figure~\ref{Fig_critical phases} (B).
    \smallskip 
    \item The intersection of Regimes II and III is less intuitive compared to the previous two cases. At criticality, this corresponds to the emergence of a new archipelago. In Hermitian random matrix theory, the analogous phenomenon has been studied under the name of the birth of a cut, see e.g. \cite{Ake97,BL08, Cl08, FK19}.
    \smallskip 
    \item The ``most'' critical case is the triple point where all three regimes intersect. For the generic case with $0< \tau <1 $ and $p > 0$, this critical point occurs when
\begin{align} \label{def of triple pts}
    c_{\rm tri}=  \frac{(1-\tau)^3}{2\tau(3+\tau^2)}, \qquad p_{\rm tri} = 2\sqrt{\frac{2\tau(1+\tau)}{3+\tau^2}}. 
\end{align} 
    In this case, the outer ellipse again meets the inner circle tangentially at the rightmost edge. Additionally, the curvatures of the ellipse and the circle at this point are identical, see Figure~\ref{Fig_critical phases} (C).
\end{itemize}

From the perspective of the ensembles \eqref{Gibbs complex} and \eqref{Gibbs symplectic}, several interesting features emerge at such criticality. For the complex Ginibre ensemble, where the critical regime occurs at the intersection of Regimes I and II, the local statistics were recently explored in \cite{KLY23}. In this context, the Painlevé II critical asymptotics arise. Such emergence of critical behaviour is consistent with findings in Hermitian random matrix theory at multi-criticality \cite{BI03,CK06,CKV08}, where the global density vanishes at a bulk point with quadratic decay. Furthermore, a recent study \cite{BSY24} demonstrated that, at this critical point, the Tracy-Widom distribution appears in the constant term of the free energy expansion. This contrasts with the regular case, where the zeta-regularised determinant of the Laplacian is believed to arise (cf. \cite{ZW06}).

Such problems in our present model, the elliptic Ginibre ensembles with a point charge, remain widely open. We expect that the local statistics at the critical points arising at the intersections of Regimes I and II, as well as Regimes I and III, correspond to Painlevé II critical asymptotics, thereby being contained in the same universality class introduced in \cite{KLY23}.  
Similarly, we expect the Tracy-Widom distribution to emerge in the free energy expansions.
In addition, perhaps the most intriguing case in our model is the triple point. At this level of criticality, one might expect the asymptotic behaviours to exhibit the critical asymptotics of higher-order multi-criticality. More precisely, it can be expected that the critical asymptotic behaviour of the Hermitian random matrix model, whose global density vanishes at a bulk point with higher (quartic) order decay, may arise. Consequently, one may expect the Painlevé II hierarchy to emerge in this case. 
\end{rem}

\begin{figure}[t]
     \begin{subfigure}{0.32\textwidth}
        \begin{center}
            \includegraphics[width=\textwidth]{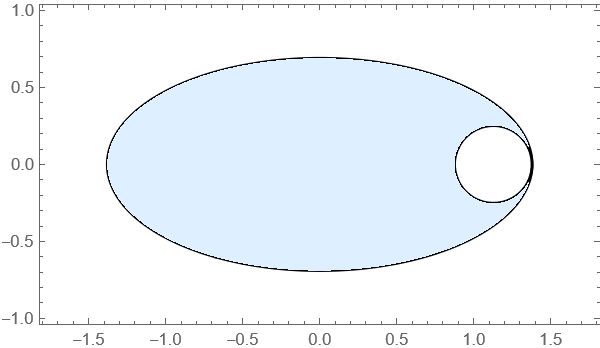}
        \end{center}
    \subcaption{ Regimes I \& II }
    \end{subfigure}
  \begin{subfigure}{0.32\textwidth}
        \begin{center}
            \includegraphics[width=\textwidth]{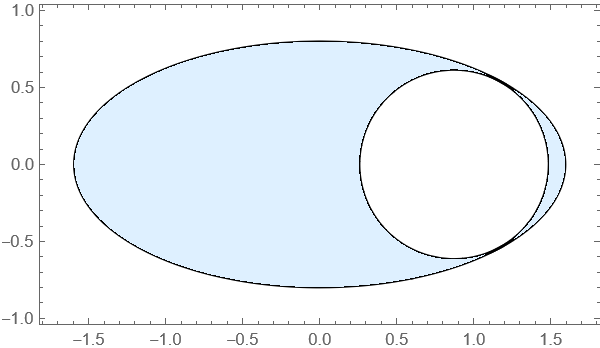}
        \end{center}
    \subcaption{ Regimes I \& III }
    \end{subfigure}
     \begin{subfigure}{0.32\textwidth}
        \begin{center}
            \includegraphics[width=\textwidth]{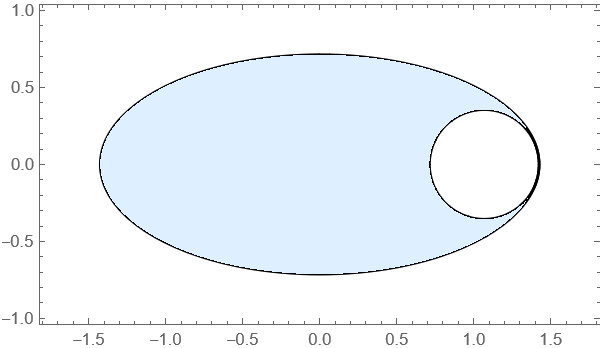}
        \end{center}
    \subcaption{  Regimes I, II \& III }
    \end{subfigure}
    \caption{The plots illustrate various critical phases with \(\tau = 1/3\). In this case, the triple points \eqref{def of triple pts} are given by \(c_{\rm tri} = 1/7\) and \(p_{\rm tri} =  2\sqrt{14}/7 \approx 1.07  \). Plot (A) represents the case \(c = 1/14 < c_{\rm tri}\), where the intersection of Regimes I and II occurs at \(p = 2\sqrt{7}(\sqrt{30} - 1)/21 \approx 1.13 \). Plot (B) corresponds to the case \(c = 3/7 > c_{\rm tri}\), where \(p = 4/\sqrt{21} \approx 0.88\). Plot (C) shows the case when \(c = c_{\rm tri}\) and \(p = p_{\rm tri}\), where the droplet exhibits identical curvatures for the ellipse and circle at the singularity. } \label{Fig_critical phases}
\end{figure}

\begin{rem}[Multi-component ensembles in Regime III]
Compared to the Ginibre case when $\tau=0$, one of the interesting features arising in the elliptic case is the emergence of the multi-component ensemble in Regime III. As previously mentioned, the multi-component ensemble has recently gained significant attention, as it exhibits non-trivial additional statistical properties in both the fluctuations \cite{ACC23a,AC24,ACC23}, represented by the Heine distribution, and various free energy expansions \cite{ACC23,ACCL24, Ch22, Ch23}, which involve oscillatory asymptotics expressed in terms of the Jacobi theta function.
The analogous setup in Hermitian random matrix theory is called the multi-cut regime, where several results are known, see e.g. \cite{CFWW21,BG24} and references therein.

We also mention that for the case $p = 0$, the symmetry of the potential \eqref{eq:potential} with respect to the origin makes it possible to derive the precise shape of the droplets in Regime III. The key idea here is to remove the symmetry, thereby reformulating the problem into an equivalent one where the associated droplet is simply connected. Further details can be found in \cite[Remark 1.8]{By24}.  
\end{rem}

\begin{rem}[Phases of the motherbody]
In the study of the point processes \eqref{Gibbs complex} and \eqref{Gibbs symplectic}, a natural object of interest is the planar orthogonal polynomials \( P_j \) associated with the weight \( e^{-N W(z)} \). The limiting zero distribution of \( P_j \) as the degree increases is called the motherbody (or the potential-theoretic skeleton), which is typically a one-dimensional subset of the droplet. The motherbody plays a key role in the asymptotic behaviour of orthogonal polynomials. Clearly, the topology of the motherbody depends strongly on that of the convex hull of the droplet.  
In addition, the motherbody often exhibits more diverse topological phases. Specifically, within the same topological type of the droplet, further phases of the motherbody may arise. This phenomenon is closely related to the number of critical points of the function $z \mapsto \int_\mathbb{C} \log\frac{1}{|z-w|^2} \, \mathrm{d}A(w) + W(z)$,
which naturally appears in the variational condition \eqref{eq:variational}.  
In our present setting, we expect that in Regime II, where the droplet is simply connected, two distinct phases of the motherbody exist, depending on the number of critical values (cf. Lemma~\ref{lem:critical points}). For recent developments in this direction, we refer to \cite{KKL24,BS20} and references therein.  
\end{rem}

\begin{rem}[Equilibrium measure in the Hermitian limit $\tau=1$]\label{rem_extermal tau=1}
The elliptic Ginibre ensembles provide a natural bridge between Hermitian and non-Hermitian random matrix theories \cite{FKS97,FKS97a,FKS98}. This characteristic becomes particularly evident in the Hermitian limit $\tau \uparrow 1$, where various intriguing regimes emerge, see, e.g. \cite{ACV18,ADM24,AB23,BES23,Kan02} and references therein for studies on the complex and symplectic elliptic Ginibre ensembles in the almost-Hermitian regime.
 
We briefly discuss the Hermitian limit of the two-dimensional equilibrium measure. First, by taking the limit $\tau \uparrow 1$ of the potential \eqref{eq:potential}, we obtain  
\begin{equation} 
\lim_{ \tau \uparrow 1 } Q(x+iy)=V(x):=\begin{cases}
\displaystyle \frac{x^2}{2}-2c \log|x-p|, &\text{if }y=0,
\smallskip 
\\
+\infty &\text{otherwise}.
\end{cases} 
\end{equation} 
It follows from the standard methods of logarithmic potential theory
\cite{ST97} that the associated one-dimensional equilibrium measure
$\mu_V$ is given by 
\begin{equation}
    \frac{\ud \mu_V(x)}{\ud x} = \frac{1}{\pi}\sqrt{-R(x)} \cdot \mathbbm{1}_{R(x)\le 0}, \qquad  R(z) = \bigg(\int_\R \frac{\ud\mu_V(s)}{z-s}-\frac{V'(z)}{2}\bigg)^2=\frac{z^2}{4}+ \frac{Az^2 +Bz +C}{(z-p)^2},
\end{equation}
where the constants $A$, $B$, and $C$ in the rational function $R(z)$
remain to be determined.
As a consequence, the equilibrium measure $\mu_V$ is supported on either a single interval or two disjoint intervals. 
The explicit characterisation of $\mu_V$ is also standard and is analogous to that in the two-dimensional case. One first assumes that $\mu_V$ is one-cut regular and then verifies the Euler--Lagrange conditions.\footnote{We note that the characterisation of the equilibrium measure given as a remark in \cite[Eq.~(2.10)]{By24} contains is incorrect. Specifically, the equilibrium measure is assumed to be supported on two intervals for all parameter values.}
The transition between the one-cut and two-cut regular phases is
characterised by the birth of a cut, which is consistent with our
observation regarding the interface between Regimes II and III in
Remark~\ref{Rem_critical phases}. We also refer to
\cite{MOR15,OS18} for studies of the birth of a cut phenomenon in the
one-dimensional setting from the perspective of equilibrium measures.

Furthermore, the three regimes for $\tau \in (0,1)$ extend continuously to the Hermitian limit $\tau \uparrow 1$. More precisely, for $c>0$, Regime~I
degenerates to the empty set, Regime~II converges to the one-cut
regular phase, and Regime~III converges to the two-cut regular phase; see Figure~\ref{fig:phase}.   
In this limit, the triple point~\eqref{def of triple pts} converges to $c=0, p=2$, which agrees with the absence of a third phase in the one-dimensional case.
\end{rem}

\begin{rem}[Continuity between Regimes I and II]\label{rem_doubly simply intersection}
It is intuitively clear that there is a smooth transition between different regimes. We shall discuss this aspect via explicit computations between Regimes I and II. For this purpose we take $a \to 1$ limit in Regime II, see \cite[Remark 2.6]{BFL25} for a similar discussion.  
To be more precise, let $\epsilon=\kappa/(1-a^2)$ be fixed. Then as $ a \to 1$, we have 
\begin{align*}
    c\to \frac{\epsilon^2}{1-\tau^2-\epsilon^2},\qquad
    p \to (1+\tau-\epsilon)\Big(\frac{1-\tau^2}{1-\tau^2-\epsilon^2}\Big)^{1/2}, 
    \quad 0\leq \epsilon\leq \frac{(1-\tau)^2}{1+\tau},
\end{align*}
which coincides with the parameterisation 
\begin{align}\label{eq_boundary of Regime I-1}
    p = (1+\tau)\sqrt{1+c}-\sqrt{c(1-\tau^2)}, \qquad 0\leq c\leq c_{\rm tri} = \frac{(1-\tau)^3}{2\tau(3+\tau^2)}
\end{align}
of the boundary of Regime I.  In addition to the continuity of the parameter space, one can also observe the continuity of the energies \eqref{weighted energy doubly connected} and \eqref{weighted energy simply connected}.  
To see this, note that from~\eqref{eq:sc2} and~\eqref{eq:sc3}, we have  
\begin{align*}
    R^2\epsilon^2 \to c(1-\tau^2), \qquad R^2 \to 1+c, \qquad \kappa\to 0,
\end{align*}  
as \( a \to 1 \). Therefore, it follows that \( \mathcal{I}_{\rm s} \to \mathcal{I}_{\rm d} \) as \( a \to 1 \), see Figure~\ref{Fig_energies}.   
\end{rem}

We now turn our focus to a more application-oriented perspective on elliptic Ginibre matrices. As previously mentioned, the insertion of a point charge has an equivalent formulation in terms of the moments of characteristic polynomials. For the Ginibre ensembles with $\tau=0$, the moments of the characteristic polynomial $\mathbb{E} |\det (G-z)|^{\gamma}$ have been studied both for fixed $\gamma=O(1)$ \cite{WW19,DS22} and in the exponentially varying regime $\gamma = O(N)$ \cite{BSY24}. These studies have found various applications in areas such as Gaussian multiplicative chaos. (See also \cite{FG23} for a recent study on the convergence of the characteristic polynomial of the complex elliptic Ginibre matrix.) 

\begin{cor}[\textbf{Moments of the characteristic polynomials of the elliptic Ginibre matrices}] \label{Cor_moments}
Let $c >0$ and $\tau \in [0,1)$ be fixed. Let $z \in \mathbb{R}$. Then as $N \to \infty$, we have 
\begin{equation} \label{Moments asymp}
\log \mathbb{E} \Big[ \,  \Big|\det (X-z) \Big|^{2cN} \Big]  = 
\begin{cases}
    \KK N^2 +\EE_N, &\textup{for the complex case} \smallskip
    \\
    2\,\KK N^2 +\EE_N, &\textup{for the symplectic case} 
\end{cases}
\end{equation}
where $\mathcal{K}$ is given as follows.
\begin{itemize}
    \item[\textup{(i)}] If $(p=z,c, \tau)$ falls within \textup{Regime I}, 
      \begin{equation}
    \mathcal{K} = \frac{c \, z^2}{1+\tau}  -\frac{3c}{2}  +\frac{(1+c)^2}{2}\log(1+c)  - \frac{c^2}{2}\log\Big( c(1-\tau^2) \Big)  . 
    \end{equation}
      \item[\textup{(ii)}] If $(p=z,c, \tau)$ falls within \textup{Regime II}, 
      \begin{align}
    \begin{split}  
   \mathcal{K}&= \frac{c\,z^2}{1+\tau}  - \frac{3c}{2} - \frac{R^3\kappa  (2-3a^2-3\tau a^2 + 2\tau a^4) }{2(1-\tau^2)^2a^3}    \Big( 1-\tau  -\frac{ 2-3a^2+3\tau a^2-2\tau a^4}{  2-3a^2-3\tau a^2 + 2\tau a^4 } \frac{\kappa}{1-a^2}\Big) z 
    \\
    &\quad -2c(1+c)\log a - c^2 \log \frac{ c (1-\tau^2)(1-a^2) }{ R \kappa } + (1+c)^2 \log R,
    \end{split}
    \end{align}  
    where $(R,a, \kappa)$ is a solution to the coupled algebraic equations \eqref{eq:sc2}, \eqref{eq:sc3} and \eqref{eq:sc4} with $p=z$. 
\end{itemize} 
Here, the error term is given by 
\begin{equation} \label{error term in cor}
\mathcal{E}_N = \begin{cases}
 o(N^{1/2+\epsilon}) &\textup{for the complex case},
\smallskip 
\\
 o(N^2)  &\textup{for the symplectic case},
\end{cases}
\end{equation}
for some $\epsilon>0.$
\end{cor}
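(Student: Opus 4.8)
The plan is to deduce Corollary~\ref{Cor_moments} from Theorem~\ref{Thm_droplet and energy} via the standard link between moments of characteristic polynomials and partition functions with a point charge. First I would record the exact identity: by the inducing procedure/the Heine-type formula, for the elliptic Ginibre matrix $X$ of size $N\times N$ one has
\begin{equation*}
\mathbf{E}\Big[\big|\det(X-z)\big|^{2cN}\Big] = \frac{Z_N^{\bullet}(Q)}{Z_N^{\bullet}(W^{\rm e})}\cdot(\text{explicit Gaussian normalisation}),
\end{equation*}
where $\bullet\in\{\mathbb{C},\mathbb{H}\}$, $Q$ is the potential \eqref{eq:potential} with $p=z$, and $W^{\rm e}$ is \eqref{def of potential eGinibre}; the prefactor comes from comparing the Vandermonde weight of the $(c+1)N$-dimensional conditional ensemble with that of the $N$-dimensional one and is computed by elementary Selberg/Gaussian integrals. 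Taking logarithms reduces the problem to the large-$N$ asymptotics of $\log Z_N^{\bullet}(Q)$ and $\log Z_N^{\bullet}(W^{\rm e})$.

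Next I would invoke the known partition-function expansion for $\beta=2$ planar Coulomb gases (respectively the symplectic analogue): the leading term is $-N^2 I_Q(\mu_Q)$ up to an $N^2$-order shift coming from the rescaling $W\mapsto$ the $(c+1)$-normalised potential and the self-energy of the point charge at $p=z$. Concretely, the $cN$ frozen eigenvalues at $z$ contribute a term of order $N^2$ through their mutual logarithmic interaction ($\sim c^2N^2\log 0$, regularised) and their interaction with the potential, and matching constants is where the explicit value of $\mathcal{K}$ is pinned down. Substituting $I_Q(\mu_Q)=\mathcal{I}_{\rm d}(z,c,\tau)$ from \eqref{weighted energy doubly connected} in Regime~I, and $I_Q(\mu_Q)=\mathcal{I}_{\rm s}(z,c,\tau)$ from \eqref{weighted energy simply connected} in Regime~II, and collecting the extra $N^2$ terms, yields the two stated formulas for $\mathcal{K}$; one checks that the $\log(1+c)$, $\log R$, $\log a$, and $\log(c(1-\tau^2)(1-a^2)/(R\kappa))$ pieces appear with exactly the opposite signs to those in $\mathcal{I}_{\rm d},\mathcal{I}_{\rm s}$, as written.

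For the error term $\mathcal{E}_N$ I would appeal to the finer partition-function asymptotics available for these models: in the complex ($\beta=2$, Dirichlet) case the relevant two-point-charge / Fisher--Hartwig partition function is known to have an expansion whose subleading terms are $O(N\log N)$ in the regular regimes, but since here we only need an $o(N^{1/2+\epsilon})$-type bound on the \emph{difference} of the two log-partition functions, it suffices to combine (a) the $N^2$ and $N$ terms cancelling against the Gaussian normalisation and the droplet being regular (real-analytic boundary, as guaranteed by Theorem~\ref{Thm_main droplet} and Proposition~\ref{prop_univalence} in Regimes~I, II), with (b) the existing rigidity/loop-equation estimates for smooth linear statistics; for the symplectic case only the cruder $o(N^2)$ bound is claimed, which follows already from equilibrium convergence plus a continuity/large-deviations argument for the partition function, so no delicate input is needed there.

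The main obstacle I anticipate is item (a) of the previous paragraph in the complex case: getting the error down to $o(N^{1/2+\epsilon})$ rather than $o(N\log N)$ requires that the $O(N)$ and $O(\log N)$ corrections to $\log Z_N^{\mathbb C}(Q)$ and $\log Z_N^{\mathbb C}(W^{\rm e})$ match \emph{exactly} after the Gaussian normalisation is subtracted. This forces one to quote, rather than merely cite, the precise structure of the $O(N)$-term (an integral of $\sqrt{\Delta Q}$ over the boundary, i.e.\ a perimeter-type functional) and the $O(\log N)$ coefficient (governed by the Euler characteristic / the insertion exponent $c$), and to verify the cancellation explicitly using \eqref{droplet_doubly connected} in Regime~I; in Regime~II the same cancellation has to be checked through the conformal map \eqref{eq_simply conformal map}. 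This bookkeeping, together with controlling the point-charge contribution (the $O(N\log N)$ term $\sim -c(cN)\log N$ coming from the self-energy of the deterministic eigenvalue at $z$), is the delicate part, whereas everything else is a substitution of the already-established energies $\mathcal{I}_{\rm d}$ and $\mathcal{I}_{\rm s}$.
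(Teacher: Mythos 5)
Your high-level strategy (express the moment as a ratio of partition functions and substitute the energies from Theorem~\ref{Thm_droplet and energy}) coincides with the paper's, but several of the details you propose to fill in are wrong, and at least two of them would derail the argument if carried out as written.

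First, there is no ``explicit Gaussian normalisation'' prefactor. Since $X$ is the $N\times N$ elliptic Ginibre matrix, its eigenvalue density already carries the weight $e^{-NW^{\rm e}}$ (resp.\ $e^{-2NW^{\rm e}}$), and inserting the factor $\prod_j|z_j-z|^{2cN}$ into the integrand converts $W^{\rm e}$ into $Q$ exactly; the identity is
$\mathbb{E}\big[|\det(X-z)|^{2cN}\big]=Z_N^{\bullet}(Q)/Z_N^{\bullet}(W^{\rm e})$ with no residual factor, and no comparison of $(c+1)N$- and $N$-dimensional Vandermondes is needed. Your spurious prefactor would have to be shown to be $1+o(1)$ for the argument to close, which it is not in general.

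Second, there are no ``extra $N^2$ terms from the frozen eigenvalues' self-energy'' to collect. The point charge is already encoded in $Q$, and $I_Q(\mu_Q)$ is a finite quantity (the inner disc $D$ is excised from the droplet in Regime~I, and the pole is outside $K$ in Regime~II); nothing like ``$c^2N^2\log 0$, regularised'' appears. The coefficient is simply $\mathcal{K}=-I_Q(\mu_Q)+I_{W^{\rm e}}(\mu_{W^{\rm e}})=-I_Q(\mu_Q)+\tfrac34$, and substituting \eqref{weighted energy doubly connected} or \eqref{weighted energy simply connected} gives the two displayed formulas with no further bookkeeping. If you try to add the extra $N^2$ contributions you describe, you will double-count the point-charge interaction and obtain the wrong $\mathcal{K}$.

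Third, the cancellation of sub-$N^2$ terms in the complex case is far simpler than you anticipate. The refined expansion \eqref{ZN expansion up to N} from \cite{AS21,Se23} has the form $-I_W(\mu_W)N^2+\tfrac12 N\log N + \big(\tfrac{\log(2\pi)}{2}-1-\tfrac12\int_{\mathbb C}\log(\Delta W)\,\ud\mu_W\big)N+o(N^{1/2+\epsilon})$. The $O(N)$ coefficient is $\int\log(\Delta W)\,\ud\mu_W$, not a perimeter functional $\int\sqrt{\Delta Q}$, and there is no separate $O(\log N)$ term tied to the Euler characteristic in this expansion (that is a conjectural refinement discussed elsewhere in the paper, not an input here). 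Since $\Delta Q=\Delta W^{\rm e}=1/(1-\tau^2)$ is a constant and both $\mu_Q$ and $\mu_{W^{\rm e}}$ are probability measures, $\int\log(\Delta Q)\,\ud\mu_Q=\int\log(\Delta W^{\rm e})\,\ud\mu_{W^{\rm e}}=-\log(1-\tau^2)$; hence the $\tfrac12 N\log N$ and the entire $O(N)$ term cancel identically in the ratio, with no case analysis through \eqref{droplet_doubly connected} or the conformal map \eqref{eq_simply conformal map}. Likewise, the ``$O(N\log N)\sim-c(cN)\log N$ self-energy of the deterministic eigenvalue'' you worry about does not arise. For the symplectic case you are right that only the crude $o(N^2)$ bound from \eqref{ZN expansions leading order} is invoked.

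In short: the skeleton of the argument is correct and matches the paper, but the fleshing-out you describe introduces a nonexistent prefactor, nonexistent $N^2$ corrections, and a misdescription of the subleading structure, any one of which would prevent the proof from going through as proposed.
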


Note that in terms of the weighted logarithmic energies in Theorem~\ref{Thm_droplet and energy}, we have 
\begin{equation}
\mathcal{K}= -I_Q(\mu_Q) + \frac{3}{4}, 
\end{equation}
where $p$ is identified with $z$. 
One also notices that the error term in \eqref{error term in cor} becomes significantly more precise in the complex case compared to its symplectic counterparts. This is because, for the complex case, we can utilise recent progress on general Coulomb gases \cite{LS17,BBNY19,Se23,AS21}, whose symplectic (or Neumann) counterparts are not yet available in the literature. Nonetheless, there are some general conjectures that we can use to formulate a conjecture in our present case. Let us introduce these in a separate remark. 

\begin{rem}[Conjecture on precise error terms] Due to the general conjecture presented in several works, such as \cite{CFTW15, ZW06, JMP94, TF99} for the complex case and \cite{BKS23} for the symplectic case, we expect that the optimal error term takes the following form.
\begin{itemize}
    \item For the complex case, 
\begin{equation}
\mathcal{E}_N= \begin{cases}
\frac{1}{12} \log N +O(1) &\textup{in Regime I},
\smallskip 
\\
O(1) &\textup{in Regime II}. 
\end{cases} 
\end{equation}
\item For the symplectic case,
\begin{equation}
\mathcal{E}_N=  \bigg(  \int_{ \mathsf{E} } \log |z-\bar{z}|  \,\ud A(z) -  \int_{ S } \log |z-\bar{z}|  \,\ud A(z) \bigg) N +  \begin{cases}
\frac{1}{24} \log N +O(1) &\textup{in Regime I},
\smallskip 
\\
O(1) &\textup{in Regime II}. 
\end{cases} 
\end{equation}
Here, $\mathsf{E}$ is given by \eqref{def of elliptic law}. 
\end{itemize}
These precise asymptotic expansions, particularly the coefficients in the $O(\log N)$ terms, reveal a characteristic dependence on the topological properties of the droplet. 
To be more precise, it is believed that the coefficients of the $\log N$ term in the asymptotic expansion of the free energy are given by 
\begin{equation}
\frac12- \frac{\chi}{12}, \qquad \frac12- \frac{\chi}{24},
\end{equation}
for the complex and symplectic cases, respectively, where $\chi$ is the Euler characteristic of the droplet. 
Moreover, the $O(1)$ terms are intriguing as well, as they are believed to be intricately linked to certain conformal geometric structures associated with the equilibrium measures.
\end{rem}

\begin{rem}[Duality relation] For various random matrix ensembles, the moments of characteristic polynomials exhibit duality relations. This implies an identity whereby the averages of characteristic polynomials can be expressed in terms of the averages of certain observables over other random matrices, with the dimension of the latter determined by the exponent of the moments. For a more detailed discussion of this topic, we refer to the recent review \cite{Fo25}. This duality has significant applications in asymptotic analysis, particularly establishing connections between the electrostatic energies of different ensembles, see e.g. \cite{BSY24,BFL25}. In the case of elliptic Ginibre matrices, the duality relation is derived using Grassmann integration methods, see \cite[Proposition 5.1]{Sere23} and \cite[Remark 5.1]{Fo25}. 
\end{rem}

\begin{rem}[Real elliptic Ginibre matrices] In this work, we focus on the characteristic polynomial of complex and symplectic elliptic Ginibre matrices. On the other hand, the third symmetry class, the real elliptic Ginibre matrices, is also of significant interest and has been actively studied, along with various applications, see e.g. \cite[Section 7.9]{BF24}. A key distinction is that, in the real case, due to the nontrivial probability of having purely real eigenvalues, the eigenvalue distribution no longer forms a one-component plasma system like \eqref{Gibbs complex} and \eqref{Gibbs symplectic} but instead a two-species system. Nonetheless, since the number of real eigenvalues in the strongly non-Hermitian regime (where $\tau<1$ is fixed) is of subdominant order $O(\sqrt{N})$ \cite{EKS94,FN08,BKLL23}, the macroscopic behaviour (such as the elliptic law) of real elliptic Ginibre matrices remains largely consistent with that of their complex and symplectic counterparts. This, in turn, should also apply to the limiting spectral distribution of conditional elliptic ensembles and the moments of characteristic polynomials. In particular, we expect the same leading-order term, $\mathcal{K} N^2$, in \eqref{Moments asymp} to appear in the case of real elliptic Ginibre matrices. This gives rise to the exponentially varying counterpart of recent findings \cite{Kiv24,SS24,Fyo16,FT21} for a fixed exponent.
\end{rem}

\subsection*{Organisation of the paper} In Section~\ref{Section_topology}, we employ the theory of quadrature domains developed in \cite{LM16} to establish the first part of Theorem~\ref{Thm_main droplet}. Section~\ref{Section_doubly} focuses on the characterisation of the doubly connected domain and the evaluation of its associated energy. Section~\ref{Section_simply} proceeds in parallel, focusing on the simply connected domain. The final section brings together the results from the preceding sections to conclude the proofs of the main results, Theorems~\ref{Thm_main droplet} and~\ref{Thm_droplet and energy}. Furthermore, Corollary~\ref{Cor_moments} is addressed in Subsection~\ref{Subsection_moments}.

\subsection*{Acknowledgements} Sung-Soo Byun was supported by the National Research Foundation of Korea grant (RS-2025-00516909, RS-2026-25518141). The authors express their gratitude to Peter Forrester, Arno Kuijlaars, Sampad Lahiry, Kohei Noda, Meng Yang and Lun Zhang for their interest and helpful discussions.

\section{Quadrature domain and topological characterisation of the droplet} \label{Section_topology}

In this section, we recall some known facts about quadrature domains associated with the logarithmic potential, primarily from the work \cite{LM16} of Lee and Makarov. By employing this theory, we prove the first part of Theorem~\ref{Thm_main droplet}, namely, the assertion that the droplet is either doubly connected, simply connected, or composed of two disjoint simply connected components. For a general introduction to quadrature domain theory and related fields, we refer to~\cite{GS05,LM16} and references therein.

Recall that a bounded connected open set $\Omega \subset \mathbb{C}$ is called a \emph{bounded quadrature domain} if it satisfies a \emph{quadrature identity}: for all integrable analytic functions $f$ on $\Omega$, 
\begin{align}\label{eq_quad identity}
    \int_\Omega f(\zeta) \ud A(\zeta)= \sum_{k=1}^n c_k f^{(m_k)}(a_k),
\end{align}
where $n \in \mathbb{Z}_{\geq 0}$ and $(c_k, m_k, a_k)_{k=1}^n \subset \mathbb{C}\setminus \{0\}\times \mathbb{Z}_{\geq 0}\times \Omega$. 
Here, the points $a_k$ are not necessarily distinct. For simplicity, we assume that $\partial \Omega$ is a Jordan curve. The \emph{quadrature function} $r_\Omega$ of a quadrature domain $\Omega$ is defined as 
\begin{align} \label{def of quad function}
    r_\Omega(\zeta) := \sum_{k=1}^{n} c_k\frac{m_k!}{(\zeta-a_k)^{m_k+1}}.
\end{align}
Then the quadrature identity \eqref{eq_quad identity} can be rewritten as 
\begin{align} \label{eq_quad identity v2}
    \int_\Omega f(\zeta) \ud A(\zeta) = \frac{1}{2\pi i}\int_{\partial \Omega} f(\zeta)r_\Omega(\zeta) \ud \zeta.
\end{align}
We define the degree of $r_\Omega$ as the \emph{order} of a quadrature domain $\Omega$.
Here the degree of a rational function is defined as the maximum of the degree of denominator and numerator in reduced form of the rational function.

The above definition can naturally be extended to unbounded quadrature domains. For this, let $\Omega \subset \hat{\mathbb{C}}$ be a connected open set where $\hat{\mathbb{C}}=\mathbb{C} \cup \{ \infty \}$ is the extended complex plane. Assume that $\infty \in \Omega$ and $\partial \Omega$ is a Jordan curve. Then $\Omega$ is an \emph{unbounded quadrature domain} if it satisfies a quadrature identity~\eqref{eq_quad identity} for all  integrable and analytic functions $f$ on $\Omega$ such that $f(\infty)=0$.  
Observe that if an unbounded quadrature domain \(\Omega\) of order \(d \geq 0\) does not contain the origin in its closure, inversion with respect to the unit circle conformally maps \(\Omega\) to a bounded quadrature domain of order \(d+1\). Conversely, a bounded quadrature domain containing the origin can be conformally transformed into an unbounded quadrature domain via circular inversion.

\begin{rem}[Boundary of a simply connected quadrature domain]
\label{rem_boundary of simply connected QD}
A well-known result due to Aharonov and Shapiro~\cite{AS76} states that a simply connected domain \(\Omega\) is a quadrature domain if and only if there exists a rational univalent function \( f \) with specific conformal mapping properties:
\begin{itemize}
    \item if \(\Omega\) is a bounded domain, then \( f \) conformally maps the unit disc \(\mathbb{D}\) onto \(\Omega\), with all its poles lying outside the closed unit disc \(\bar{\mathbb{D}}\);
    \smallskip
    \item if \(\Omega\) is unbounded, then \( f \) conformally maps the complement of the closed unit disc, \(\bar{\mathbb{D}}^c\), onto \(\Omega\), with all its poles lying inside \(\mathbb{D}\) except for a single simple pole.
\end{itemize}
Once the quadrature function of a simply connected quadrature domain is given, one can derive explicit rational univalent functions that describe the boundary of the quadrature domain via the so-called conformal mapping method.   
\end{rem}

Let us provide some examples of bounded and unbounded quadrature domains, which are closely related to our model of interest.  
\begin{itemize}
    \item[(a)] The open disc \(\Omega = \mathbb{D}(p, \rho)\) is a bounded quadrature domain of order 1 with the quadrature function \(r_\Omega(\zeta) = \rho^2/(\zeta - p)\). Indeed, using the mean value property and the Cauchy integral formula, one can observe that
    \begin{align*}
        \int_{\mathbb{D}(p,\rho)} f(\zeta) \ud A(\zeta) = \rho^2 f(p) = \frac{1}{2\pi i }\int_{|\zeta-p|=\rho} f(\zeta)r_\Omega(\zeta)\ud \zeta.
    \end{align*}
    \item[(b)] The exterior of a closed disc, \(\Omega = \bar{\mathbb{D}}(p, \rho)^c\), can similarly be shown to be an unbounded quadrature domain of order 0, with the quadrature function \(r_\Omega(\zeta) = \bar{p}\).
    \smallskip 
    \item[(c)] An example of an unbounded quadrature domain of order 1 is the exterior of an ellipse with major and minor axes given by \(1 \pm \tau\), where \(\tau \in [0, 1)\). The associated conformal map is the Joukowsky transform \(f(z) = z +  \tau/z \). By virtue of Remark~\ref{rem_boundary of simply connected QD}, \(f\) is a univalent rational function defined on \(\bar{\mathbb{D}}^c\) with two simple poles at \(0\) and \(\infty\), one located inside the open unit disc and the other outside the closed unit disc.
\end{itemize}

\begin{rem}[Uniqueness problem of quadrature domain]\label{rem_uniqueness of QD}
It is known that if two bounded quadrature domains of order \(\leq 2\) share the same quadrature function, they are identical, see \cite[Theorem 10 and Corollary 10.1]{Gusta83}. In particular, order 1 bounded quadrature domains are precisely the open discs described above. 

In contrast, the unbounded analogue of this result is more subtle. For instance, infinitely many order-zero unbounded quadrature domains of the form \(\bar{\mathbb{D}}(p, \rho)^c\) share the same quadrature function \(\bar{p}\). 
Nevertheless, the uniqueness of unbounded quadrature domains emerges under additional constraints on the area of the domain's complement. More precisely, it was shown in \cite[Theorem 2.1]{LM16} that if two unbounded quadrature domains have complements of equal area and share the same quadrature function, which is a polynomial of degree at most \(2\), then they must be identical.
\end{rem}

We now illustrate the connection between quadrature domains and logarithmic potential theory. For this purpose, let us first recall that for a Borel measurable set $\Omega \subset \mathbb{C}$ with compact boundary, the \emph{Cauchy transform} $C_\Omega$ is defined by 
\begin{align}\label{eq_Cauchy transform}
    C_\Omega(\zeta) := \int_\Omega \frac{1}{\zeta-\eta}\ud A(\eta).
\end{align}
Next, assume that $\Omega \subsetneq \hat{\mathbb{C}}$ and $\partial \Omega$ is a Jordan curve such that $\infty \notin \partial \Omega$. If a continuous function $\mathsf{S}_\Omega: \bar{\Omega}\to \hat{\mathbb{C}}$ is meromorphic on $\Omega$ and satisfies
\begin{align} \label{def of Schwarz ftn}
    \mathsf{S}_\Omega (\zeta) = \bar{\zeta}, \qquad \zeta \in \partial \Omega,
\end{align}
we call the function $\mathsf{S} \equiv \mathsf{S}_\Omega $ the \emph{(one-sided) Schwarz function} of \(\Omega\). 
It is clear that if a Schwarz function exists for a domain $\Omega$, then it is unique.
The Schwarz function plays a significant role in the theory of quadrature domains. In particular it is well known that \(\Omega\) is a quadrature domain if and only if it has a Schwarz function. Moreover, in this case, the following holds:  
\begin{equation}\label{Schwarz-QD}
    \mathsf{S}_\Omega(\zeta) = r_\Omega(\zeta) + C_{\Omega^c}(\zeta), \qquad \zeta \in \bar{\Omega},
\end{equation}  
where $r_\Omega$ is the quadrature function \eqref{def of quad function}. 
For a simply connected quadrature domain $\Omega$, we have
\begin{equation}
\mathsf{S}_\Omega (\zeta) =  f(1/F(\zeta)), 
\end{equation}
where $f$ is a rational univalent map associated with $\Omega$ in Remark \ref{rem_boundary of simply connected QD}, and $F$ is its conformal inverse.
For more details, we refer to~\cite[Lemma 3.1]{LM16} and references therein. 

We now discuss the topological properties of quadrature domains within the framework developed in ~\cite{LM16}.
By definition, a Hele-Shaw-type potential $W$ (the potential with constant $\Delta W$) is called an \emph{algebraic potential} if it is a real valued function defined on an open subset of $\mathbb{C}$, which takes the form 
\begin{align}
    \frac{1}{t}W(\zeta) = |\zeta|^2 - H (\zeta), \qquad t>0,  
\end{align}
where $h:=\partial H$ is a rational function in the variable $\zeta$. 

\begin{thm}[cf. Theorems 2.3 and 3.3 in \cite{LM16}] \label{thm_S^c=QD}
Let $S \equiv S_W$ be the droplet associated with an algebraic potential of the form $W(\zeta)= |\zeta|^2 -H(\zeta)$. Then $S^c$ is a finite union of disjoint
quadrature domains $\Omega_1, \ldots, \Omega_q$, and their quadrature functions $r_1, \ldots, r_q$ satisfy
\begin{equation} \label{LM sum of quad functions}
    r_1(\zeta) + \ldots +r_q(\zeta) = h(\zeta).
\end{equation} 

Let $d$ be the degree of the rational function $h$. Assume that the boundary of $S$ is smooth and consists of disjoint Jordan curves, called the ovals. Let $q_j$ be the number of quadrature domains as components of $\hat{\mathbb{C}}\setminus S$ with connectivity $j \geq 1$. Then we have 
\begin{equation}\label{eq_connectivity bound}
    \#(\text{ovals}) + q_\text{odd} + 4(q-q_1) \leq 2d+2.
\end{equation}
Here, $q = \sum q_j$ and $q_\text{odd} = \sum q_{2j+1}$.
\end{thm}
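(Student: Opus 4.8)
The plan is to derive this as the combination of \cite[Theorem~2.3]{LM16} (the structural assertion) and \cite[Theorem~3.3]{LM16} (the connectivity bound), organised in two steps.

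\textbf{Step 1: $S^c$ is a finite union of quadrature domains with $r_1+\dots+r_q=h$.} First I would note that the poles of $h$ cannot meet $S$: since $h=\partial H$ is rational, $H$ is harmonic off its poles and its distributional Laplacian carries a negative point mass at each pole, so positivity of $d\mu_W=\Delta W\cdot\mathbbm{1}_S\,\ud A=(1-\Delta H)\,\mathbbm{1}_S\,\ud A$ (using \eqref{eq for Frostman} and $\Delta|\zeta|^2=1$) forces all poles of $h$ into $S^c$ and $\Delta W\equiv1$ on $S$, i.e. $\mu_W=\mathbbm{1}_S\,\ud A$. Differentiating the Euler--Lagrange equality in \eqref{eq:variational} on $\interior S$ and using $\partial_\zeta W=\bar\zeta-h$ gives $C_S(\zeta)=\bar\zeta-h(\zeta)$ on $\interior S$, where $C_S$ is the Cauchy transform \eqref{eq_Cauchy transform}; since the density of $\mu_W$ is bounded, $C_S$ is continuous on $\C$, so $\bar\zeta=C_S(\zeta)+h(\zeta)$ holds on $\partial S$ as well. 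Then on any connected component $\Omega$ of $\hat{\mathbb{C}}\setminus S$ the function $\mathsf{S}_\Omega:=C_S+h$ is meromorphic (its poles being precisely the poles of $h$ in $\Omega$), continuous up to $\partial\Omega$ with boundary value $\bar\zeta$, hence is the Schwarz function of $\Omega$, so $\Omega$ is a quadrature domain; by the regularity theory for droplets of algebraic potentials there are only finitely many such components $\Omega_1,\dots,\Omega_q$. Finally, comparing $\mathsf{S}_{\Omega_j}=C_S+h$ with \eqref{Schwarz-QD} and using that $C_S$ is analytic off $S$ with $C_S(\infty)=0$, so that $C_S$ contributes no principal parts, the quadrature function $r_j$ is exactly the sum of the principal parts of $h$ at its poles inside $\Omega_j$ (including at $\infty$ for the unbounded component); summing over $j$ and recalling that every pole of $h$ lies in $\bigsqcup_j\Omega_j$ recovers \eqref{LM sum of quad functions}. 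In particular the orders satisfy $\sum_j\deg r_j=d$.

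\textbf{Step 2: the connectivity bound \eqref{eq_connectivity bound}.} For this I would pass to the Schottky double $\widehat{\Omega_j}$ of each component --- a compact Riemann surface of genus $m_j-1$, where $m_j$ is the connectivity of $\Omega_j$ (so $\#(\mathrm{ovals})=\sum_j m_j$) --- on which the coordinate $\zeta$ extends, via the Schwarz function $\mathsf{S}_{\Omega_j}$ on the mirror sheet, to a meromorphic function $z_j\colon\widehat{\Omega_j}\to\hat{\mathbb{C}}$ whose degree is governed by the order $\deg r_j$ of $\Omega_j$ together with the pole of $\zeta$ at $\infty$ on the unbounded component. Then I would run two counts in parallel: (a) the Riemann--Hurwitz formula for $z_j$, which expresses $m_j$ through $\deg r_j$ and the ramification of $z_j$; the ramification sits entirely at the critical points of $\mathsf{S}_{\Omega_j}$ inside $\Omega_j$, since $\zeta$ is unramified on the front sheet; and (b) the argument principle for $\mathsf{S}_{\Omega_j}'$ on $\Omega_j$, which is available because $|\mathsf{S}_{\Omega_j}'|\equiv1$ on $\partial\Omega_j$ (immediate from $\mathsf{S}_{\Omega_j}(\zeta)=\bar\zeta$ there), so that the winding of $\mathsf{S}_{\Omega_j}'$ along each oval equals $\pm2$ according as the oval bounds $\Omega_j$ on its bounded or its unbounded side. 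Adding (a) and (b) over all $j$, the ramification and critical-point contributions should cancel, leaving a linear inequality among $\#(\mathrm{ovals})$, $q$, and $d$; the extra terms $q_{\mathrm{odd}}$ and $4(q-q_1)$ appear because for each bounded component of connectivity $m_j$ exactly one oval is ``outer'' (turning number $+1$) while $m_j-1$ are ``inner'' (turning number $-1$), whereas on the unbounded component all ovals are of the same type, so the net winding depends on the parity of the $m_j$.

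The step I expect to be the genuine obstacle is the joint bookkeeping in Step~2. One has to handle the point at infinity with care --- there $\zeta$ has a pole and $\mathsf{S}_{\Omega_j}$ may have a pole or a spurious high-order zero in the $\zeta$-coordinate that must not be miscounted as ramification --- and one must verify that, after summing over the components, the ramification and critical-point terms cancel exactly, so that the inequality comes out with the sharp right-hand side $2d+2$ and the precise correction $q_{\mathrm{odd}}+4(q-q_1)$ rather than a weaker linear bound. This delicate combination is exactly what is carried out in \cite[\S3]{LM16}, and I would follow that argument.
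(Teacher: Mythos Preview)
The paper does not prove this theorem at all: it is quoted verbatim from \cite{LM16} (Theorems~2.3 and~3.3 there) and used as a black box, so there is no ``paper's own proof'' to compare against. Your proposal is therefore not competing with anything in the present paper; it is a sketch of the argument in \cite{LM16} itself.

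That said, your Step~1 is essentially the correct derivation and matches what \cite{LM16} does: differentiate the Euler--Lagrange equality to get $C_S=\bar\zeta-h$ on $S$, extend by continuity to $\partial S$, and read off that $C_S+h$ is a Schwarz function on each complementary component. One small point to tighten: when you identify $r_j$ with the principal parts of $h$ inside $\Omega_j$, you are implicitly using that $C_S$ is holomorphic on each $\Omega_j$ and vanishes at infinity, but the formula \eqref{Schwarz-QD} involves $C_{\Omega_j^c}$ rather than $C_S$; the difference $C_{\Omega_j^c}-C_S=\sum_{k\neq j}C_{\Omega_k}$ is again holomorphic on $\Omega_j$, so the conclusion stands, but this step deserves a line.

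Your Step~2 correctly identifies the ingredients (Schottky double, Riemann--Hurwitz for the extended coordinate, and the argument principle for $\mathsf{S}'$ using $|\mathsf{S}'|=1$ on the ovals), and you are right that the combinatorics of inner versus outer ovals and the treatment of $\infty$ is where the work lies. Your description is an honest outline rather than a proof, and you say as much; the actual execution is precisely \cite[\S3]{LM16}, which you cite. Since the present paper also simply cites \cite{LM16} for this, your proposal is entirely adequate for the purpose.
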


We also mention that the connectivity bound is indeed sharp \cite{LM14}. 

\begin{rem} [Boundary of a general quadrature domain] 
In general, a multiply connected quadrature domain has an ``almost'' algebraic boundary, meaning that for any quadrature domain $\Omega$ of order $d$, there exists a polynomial $P(\zeta, \eta)$ such that
\begin{align*}
    \partial \Omega \subset \{\zeta\in \mathbb{C}:P(\zeta,\bar{\zeta})=0\}. 
\end{align*}
Here, two sets differ by only a finite number of points called \emph{special points}, see \cite[Section 12]{GS05}.  
The polynomial $P(\zeta, \eta)$ is symmetric with respect to $\zeta$ and $\eta$, and is of degree $d$ in each variable if $\Omega$ is a bounded quadrature domain, or degree $d+1$ if it is an unbounded quadrature domain.  
The Schottky double of $\Omega$, a closed Riemann surface constructed by gluing the boundaries of two copies of $\Omega$, can be identified as a real algebraic curve given by $P(\zeta, \mathsf{S}(\zeta)) = 0$.  
From the perspective of logarithmic potential theory, the algebraic equation $P(\zeta, \mathsf{S}(\zeta)) = 0$ is often referred to as the \emph{spectral curve}, which appears in various contexts, see e.g. \cite{BK12, BS20, KKL24}. 
\end{rem}

We are now ready to prove the first assertion in Theorem~\ref{Thm_main droplet}.

\begin{proof}[Proof of the first assertion in Theorem~\ref{Thm_main droplet}]

Recall that the potential $Q$ is given by \eqref{eq:potential}. 
Let us define 
\begin{equation}  \label{eq:quad h}
\mathsf{H}(\zeta):= - \tau \re \zeta^2 - 2c(1-\tau^2)\log |\zeta-p|, \qquad \mathsf{h} (\zeta) := \partial \mathsf{H}(\zeta) =\tau \zeta + \frac{c(1-\tau^2)}{\zeta-p}. 
\end{equation}
The potential~\eqref{eq:potential} is clearly an algebraic potential since
\begin{equation}
 (1-\tau^2)Q(\zeta)  = |\zeta|^2 - \mathsf{H}(\zeta). 
\end{equation}  
Note that the quadrature function $\mathsf{h}$ in \eqref{eq:quad h} is of degree 2. 
By Sakai’s laminarity theorem \cite{Sa10,Sa92}, we can exclude the critical cases and assume that the boundary $\partial S$ is smooth and consists of finitely many Jordan curves, see also \cite[Section 5.1]{LM16}.
Then in our present case, the connectivity bound \eqref{eq_connectivity bound} reads as
\begin{equation}
    \#(\text{ovals}) + q_\text{odd} + 4(q-q_1) \leq 6.
\end{equation}
Therefore, the possible configurations are as follows:
\begin{enumerate}[label=(\alph*)]
    \item $\#\text{(ovals})=1$, $q_1=1$, $q_j=0$ for $j\geq 2$;
    \smallskip 
    \item $\#\text{(ovals})=2$, $q_1=2$, $q_j=0$ for $j\geq 2$;
     \smallskip 
    \item $\#\text{(ovals})=2$, $q_1=0$, $q_2 =1$, $q_j=0$ for $j\geq 3$;
     \smallskip 
    \item $\#\text{(ovals)}=3$, $q_1=3$, $q_j=0$ for $j\geq 2$.
\end{enumerate}
It is evident that (a), (b), (c), and (d) correspond to simply connected droplet, doubly connected droplet, droplet composed of two simply connected components, and triply connected droplet, respectively. These configurations are also presented in  \cite[Examples $d=0,1,2$ below Theorem 2.3]{LM16}.

We claim that in fact, the droplet $S$ cannot be triply connected. Suppose that the droplet $S$ is triply connected. Then $S^c$ consists of an unbounded component $\Omega_1$ and bounded components $\Omega_2, \Omega_3$, which are all simply connected quadrature domains. Denote by $r_1, r_2$ and $r_3$ the quadrature functions of $\Omega_1, \Omega_2$ and $\Omega_3$, respectively. Then it follows from \eqref{LM sum of quad functions} and \eqref{eq:quad h} that
\begin{align} \label{quad identity for triply}
    r_1(\zeta)+ r_2(\zeta) + r_3(\zeta) =  \tau \zeta + \frac{c(1-\tau^2)}{\zeta-p}.
\end{align}
Observe that since $\Omega_2$ and $\Omega_3$ are bounded quadrature domains of order $\geq 1$, both $r_2$ and $r_3$ must contain distinct poles located in $\Omega_2$ and $\Omega_3$, respectively. In contrast, $r_1$ does not contain a pole in $\Omega_2 \cup \Omega_3$. Therefore, $r_1 + r_2 + r_3$ must have at least two poles in $\mathbb{C}$. This leads to a contradiction, as the right-hand side of \eqref{quad identity for triply} contains only a single pole at $p$ in $\mathbb{C}$. Hence, we conclude that the droplet $S$ can only be doubly connected, simply connected, or composed of two disjoint connected components.
\end{proof}

\begin{rem}
It is clear from the proof that for $\tau > 0$, the connectivity bound \eqref{eq_connectivity bound} does not fully characterise the possible topological types of the droplet, since a triply connected domain may emerge from the connectivity bound, but it does not genuinely occur in practice. In contrast, in the extremal case $\tau = 0$, the connectivity bound is sufficient to completely determine the topological types of the droplet. More precisely, when $\tau = 0$, the quadrature function has degree 1. Consequently, the possible topological types of the droplet deduced from \eqref{eq_connectivity bound} are either simply connected or doubly connected, which is indeed the case, as discussed in Remark~\ref{Rem_droplet extremal}.
\end{rem}

\section{Doubly connected droplet} \label{Section_doubly}

In this section, we prove our main results in the regime where the droplet is doubly connected. As one might expect from the simpler description of the droplet in this case, the overall proof and computations are significantly simpler than their counterparts in the simply connected domain, which will be addressed in the next section.   

\subsection{Description of the droplet}\label{Subsection_doubly connected droplet}
It is convenient to introduce the notations
\begin{align}
    E &:=\Bigl\{(x,y)\in \mathbb{R}^2 : \Big(\frac{x}{1+\tau}\Big)^2 + \Big(\frac{y}{1-\tau}\Big)^2 \leq 1+c
    \Bigl\},  \label{def of E ellipse} 
    \\
    D & :=\Bigl\{(x,y)\in \mathbb{R}^2:(x-p)^2+y^2 < c(1-\tau^2)\Bigl\}. \label{def of D disc} 
\end{align} 
Notice that the droplet in \eqref{droplet_doubly connected} is the same as \( E \cap D^c \).
Recall that for a given domain $\Omega$, the Schwarz function $\mathsf{S}_\Omega$, Cauchy transform $C_\Omega$, and quadrature function $r_\Omega$ are defined by \eqref{def of Schwarz ftn}, \eqref{eq_Cauchy transform}, and \eqref{def of quad function}, respectively.

Let us formulate the following lemma.
\begin{lem} \label{Lem_QD ftn for E D}
We have  
\begin{equation} \label{Sch function of E D}
\mathsf{S}_{E^c}(\zeta)  = \frac{1+\tau^2} {2\tau}\zeta - \frac{1-\tau^2}{2\tau} \sqrt{\zeta^2-4\tau(1+c)}, \qquad  \mathsf{S}_{D}(\zeta)  = p+ \frac{c(1-\tau^2)}{\zeta-p} 
\end{equation}
and 
\begin{equation} \label{eq_Cauchy E}
 C_{E}(\zeta) = \frac{1-\tau^2}{2\tau}\Big(\zeta-\sqrt{\zeta^2-4\tau(1+c)}\Big), \qquad C_{D^c}(\zeta) =p .
\end{equation}
 Consequently, we have 
 \begin{equation} \label{QD ftn E D}
    r_{E^c}(\zeta)  =  \tau \zeta, \qquad 
    r_D(\zeta)  =   \frac{c(1-\tau^2)}{\zeta-p}. 
 \end{equation}
\end{lem}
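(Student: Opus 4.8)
The plan is to establish the four groups of identities by treating the disc $D$ and the exterior of the ellipse $E$ separately: the disc is handled by an elementary boundary computation, while $E^c$ is handled by the conformal mapping method recalled in Remark~\ref{rem_boundary of simply connected QD}. For the disc, write $D = \mathbb{D}(p,\rho)$ with $\rho^2 = c(1-\tau^2)$. On $\partial D$ one has $\overline{\zeta-p} = \rho^2/(\zeta-p)$, and since $p\in\R$ this gives $\bar\zeta = p + c(1-\tau^2)/(\zeta-p)$ there. The right-hand side is meromorphic on $D$ with a single simple pole at $p$, hence by uniqueness it equals $\mathsf{S}_D$, which is the second formula in \eqref{Sch function of E D}. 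Since $D$ is the order-$1$ bounded quadrature domain recorded for the open disc above, its quadrature function is $r_D(\zeta) = \rho^2/(\zeta-p) = c(1-\tau^2)/(\zeta-p)$; plugging this into the decomposition \eqref{Schwarz-QD} gives $C_{D^c} = \mathsf{S}_D - r_D = p$. This yields all the statements involving $D$.

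For $E^c$, let $\alpha = (1+\tau)\sqrt{1+c}$ and $\beta = (1-\tau)\sqrt{1+c}$ be the semi-axes of $E$, so $\alpha+\beta = 2\sqrt{1+c}$ and $\alpha^2-\beta^2 = 4\tau(1+c)$. The scaled Joukowsky map $\psi(w) = \sqrt{1+c}\,(w+\tau/w)$ maps $\bar{\mathbb{D}}^c$ conformally onto $E^c$ with a single simple pole at $\infty$: on $|w|=1$ it parametrizes $\partial E$ (as in the ellipse example above, rescaled), and its critical points $w=\pm\sqrt\tau$ lie in $\mathbb{D}$. Because $\psi$ has real coefficients, $\overline{\psi(w)} = \psi(1/w)$ on $|w|=1$, so $\mathsf{S}_{E^c}(\zeta) = \psi(1/F(\zeta))$ with $F = \psi^{-1}$. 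Inverting $\zeta = \psi(w)$ gives the quadratic $\sqrt{1+c}\,w^2 - \zeta w + \tau\sqrt{1+c} = 0$, whose root with $|w|>1$ is $w = (\zeta + \sqrt{\zeta^2 - 4\tau(1+c)})/(2\sqrt{1+c})$, where the square root is the branch that is $\sim\zeta$ at $\infty$ and analytic off the slit $[-2\sqrt{\tau(1+c)},\,2\sqrt{\tau(1+c)}]$, which lies in the interior of $E$ since $(1+\tau)^2 - 4\tau = (1-\tau)^2 > 0$. Using that the two roots multiply to $\tau$, one gets $1/w = (\zeta - \sqrt{\zeta^2 - 4\tau(1+c)})/(2\tau\sqrt{1+c})$; substituting into $\psi(1/w)$ and simplifying produces the first formula in \eqref{Sch function of E D}.

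To extract $r_{E^c}$ and $C_E$, expand $\mathsf{S}_{E^c}$ at $\infty$: $\mathsf{S}_{E^c}(\zeta) = \tau\zeta + (1-\tau^2)(1+c)/\zeta + O(\zeta^{-2})$. Hence $\mathsf{S}_{E^c}(\zeta) - \tau\zeta$ is analytic on $E^c$, vanishes at $\infty$, and has $1/\zeta$-coefficient $(1-\tau^2)(1+c) = \mathrm{Area}(E)/\pi$. By uniqueness of the splitting in \eqref{Schwarz-QD} — equivalently, because $C_E$ is the unique function analytic off $E$ with this behaviour at $\infty$ — we conclude $r_{E^c}(\zeta) = \tau\zeta$ and $C_E(\zeta) = \mathsf{S}_{E^c}(\zeta) - \tau\zeta = \tfrac{1-\tau^2}{2\tau}\big(\zeta - \sqrt{\zeta^2 - 4\tau(1+c)}\big)$, which are the remaining identities in \eqref{eq_Cauchy E} and \eqref{QD ftn E D}. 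As a consistency check, $r_{E^c}+r_D = \tau\zeta + c(1-\tau^2)/(\zeta-p) = \mathsf{h}$, matching \eqref{LM sum of quad functions}.

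The one genuinely delicate point is the branch bookkeeping for $\sqrt{\zeta^2 - 4\tau(1+c)}$ in the ellipse computation: placing the cut inside $E$, fixing the branch normalization at $\infty$, selecting the quadratic root with $|w|\ge 1$, and verifying that the resulting function is continuous up to $\partial E$ so that it indeed qualifies as a Schwarz function in the sense of \eqref{def of Schwarz ftn}. Everything else reduces to routine algebraic simplification.
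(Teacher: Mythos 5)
Your proof is correct but follows a genuinely different order of deduction from the paper's. The paper computes the Cauchy transforms first: $C_E$ is imported from \cite[Lemma 2.4]{By24}, and $C_{D^c}=p$ is obtained directly from Green's formula plus a one-line residue calculation; the quadrature functions are then read off from the decomposition \eqref{Schwarz-QD} as $r = \mathsf{S} - C$. You instead compute the Schwarz functions from scratch (the disc by the elementary boundary relation $\overline{\zeta - p} = \rho^2/(\zeta-p)$, the ellipse exterior via the scaled Joukowsky map and explicit branch tracking), identify the quadrature functions first -- $r_D$ from the known order-$1$ disc formula recorded in the paper's example (a), and $r_{E^c}$ as the polynomial part of $\mathsf{S}_{E^c}$ at $\infty$ -- and then recover the Cauchy transforms as $C = \mathsf{S} - r$. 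Both routes exploit the same identity \eqref{Schwarz-QD} but solve it for different unknowns. Your route has the advantage of being self-contained (it re-derives rather than cites the ellipse Cauchy transform, and its $1/\zeta$-coefficient check against $\mathrm{Area}(E)/\pi$ is a nice built-in verification), at the cost of the branch bookkeeping for $\sqrt{\zeta^2-4\tau(1+c)}$, which you correctly flag and correctly resolve (cut $[-2\sqrt{\tau(1+c)},2\sqrt{\tau(1+c)}]$ lies strictly inside $E$ since $(1+\tau)^2 - 4\tau = (1-\tau)^2>0$). The paper's route sidesteps the branch issues but leans on an external lemma. Your closing consistency check $r_{E^c}+r_D=\mathsf{h}$ is a good touch that the paper leaves implicit.
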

\begin{proof}
The formulas for the Schwarz functions \eqref{Sch function of E D} follow directly from straightforward computations using \eqref{def of E ellipse}, \eqref{def of D disc}, and \eqref{def of Schwarz ftn}.  
The evaluation of $C_{E}$ is given by \cite[Lemma 2.4]{By24}. For the evaluation of $C_{D^c}$, notice that for $\zeta \in D$, by using Green's formula and residue calculus, we have 
\begin{align*}\label{eq_Cauchy D^c}
    C_{D^c}(\zeta) &= \int_{D^c} \frac{1}{\zeta-\eta}\ud A(\eta) = -\frac{1}{2\pi i}\int_{\partial D}\frac{\bar{\eta}}{\zeta-\eta}\ud \eta = -\frac{1}{2\pi i}\int_{\partial D}\frac{ \mathsf{S}_{D^c}(\eta)}{\zeta-\eta}\ud \eta = p.
\end{align*}
Finally, \eqref{QD ftn E D} follows from \eqref{Sch function of E D}, \eqref{eq_Cauchy E} and \eqref{Schwarz-QD}. 
\end{proof}

By Remark~\ref{rem_uniqueness of QD}, \( D \) is the unique bounded quadrature domain with quadrature function \( r_D \) of order 1.  
Similarly, \( E^c \) is the unique unbounded quadrature domain with quadrature function \( r_{E^c} \), whose complement has area \( \pi(1+c)(1-\tau^2) \), since \( r_{E^c} \) is a polynomial of degree less than 2.

\medskip 

Using the previous lemma, we establish the following result. Recall that the potential \( Q \) is given by \eqref{eq:potential}.

\begin{prop}\label{prop_doubly connected droplet} The droplet \( S_Q \) is doubly connected if and only if \( D \subset E \). In this case, the droplet is given  by $E \cap D^c$. 
\end{prop}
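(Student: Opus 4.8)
The plan is to exploit the characterisation from Section~\ref{Section_topology}: the complement $S_Q^c$ is a disjoint union of quadrature domains whose quadrature functions sum to $\mathsf{h}(\zeta)=\tau\zeta+c(1-\tau^2)/(\zeta-p)$. In the doubly connected case, $S_Q^c$ must be a single unbounded simply connected quadrature domain $\Omega_1$ together with a single bounded simply connected quadrature domain $\Omega_2$ (configuration (b) in the proof of the first assertion of Theorem~\ref{Thm_main droplet}). The first step is the decomposition of the quadrature function: $r_{\Omega_1}+r_{\Omega_2}=\mathsf{h}$. Since $\Omega_2$ is bounded of order $\ge1$, its quadrature function has a pole inside $\Omega_2$, while $r_{\Omega_1}$ must be a polynomial of degree $\le1$ (as $\Omega_1$ is unbounded and $\mathsf h$ has a unique affine part $\tau\zeta$). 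Therefore $r_{\Omega_1}(\zeta)=\tau\zeta$ and $r_{\Omega_2}(\zeta)=c(1-\tau^2)/(\zeta-p)$, matching $r_{E^c}$ and $r_D$ from Lemma~\ref{Lem_QD ftn for E D}. By the uniqueness statements recorded after Lemma~\ref{Lem_QD ftn for E D} — $D$ is the unique bounded quadrature domain with its order-$1$ quadrature function, and $E^c$ is the unique unbounded quadrature domain with quadrature function $\tau\zeta$ whose complement has area $\pi(1+c)(1-\tau^2)$ — we would identify $\Omega_2=D$ and (after checking the area constraint, which is forced since $|S_Q|=\pi(1-\tau^2)(1+c)$ by \eqref{eq:eqmeasure} and the total mass of $\mu_Q$ equals $1+c$) $\Omega_1=E^c$. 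Hence $S_Q=\hat{\mathbb{C}}\setminus(\Omega_1\cup\Omega_2)=E\cap D^c$. This forces $D\subset E$, since $D=\Omega_2$ and $\Omega_1=E^c$ are disjoint.

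For the converse, assume $D\subset E$. I would verify directly that $E\cap D^c$ satisfies the Euler–Lagrange conditions \eqref{eq:variational}, using the Schwarz functions from Lemma~\ref{Lem_QD ftn for E D}. Concretely, set $g(\zeta):=\partial_\zeta\big(\text{logarithmic potential of }\tfrac{1}{1-\tau^2}\mathbbm 1_{E\cap D^c}\,\ud A\big)$. On the droplet this Cauchy-transform derivative should equal $\tfrac12\partial Q(\zeta)=\tfrac{1}{1-\tau^2}\bar\zeta$-type expression minus the contribution of $\mathsf h$; using additivity $C_{(E\cap D^c)}=C_E-C_D$ (valid because $D\subset E$ so the indicator splits) together with $C_E$ and $C_{D^c}$ from \eqref{eq_Cauchy E}, one checks that the variational equality holds on $E\cap D^c$. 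Then one verifies the inequality in \eqref{eq:variational} holds strictly outside, i.e. that moving mass into either $D$ or $E^c$ increases the energy; here the subharmonicity of the obstacle function and the fact that $Q$ is subharmonic away from $p$ do the work, and $D\subset E$ guarantees the candidate set is closed with the right two boundary ovals. By uniqueness of the equilibrium measure, $S_Q=E\cap D^c$, which is doubly connected precisely because $D$ is a nonempty open disc compactly contained in the interior of $E$.

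The main obstacle is the converse direction, specifically the rigorous verification that $E\cap D^c$ is genuinely the droplet rather than just a set satisfying the equality part of \eqref{eq:variational} on its interior. The subtle point is checking the global inequality $\int\log|z-w|^{-1}\ud\mu(w)+\tfrac12Q(z)\ge C_Q$ for all $z\in\mathbb C$, in particular for $z$ in the ``hole'' $D$ and for $z$ outside $E$; near the point charge at $p\in D$ the potential $Q$ has a $-2c\log|z-p|$ singularity driving the density to vanish there, and one must confirm the obstacle is touched only on $E\cap D^c$. I expect this to follow from a maximum-principle argument applied to the difference of the two sides of \eqref{eq:variational}, which is harmonic in $D$ and in $E^c\setminus\{\infty\}$ and matches smoothly across $\partial(E\cap D^c)$ by construction of the Schwarz functions; the condition $D\subset E$ is exactly what prevents the two excluded regions from overlapping and spoiling this argument. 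The forward direction, by contrast, is essentially bookkeeping with the quadrature-domain uniqueness results already quoted.
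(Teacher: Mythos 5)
Your forward direction (doubly connected $\Rightarrow D\subset E$) is essentially identical to the paper's argument: apply Theorem~\ref{thm_S^c=QD} to write $S^c=\Omega_1\sqcup\Omega_2$, use the pole structure of $\mathsf h$ to pin down $r_{\Omega_1}=\tau\zeta$ and $r_{\Omega_2}=c(1-\tau^2)/(\zeta-p)$, then invoke the uniqueness of quadrature domains (Remark~\ref{rem_uniqueness of QD}) to identify $\Omega_2=D$ and $\Omega_1=E^c$, so that $D\subset E$ is forced by disjointness. One bookkeeping slip: the total mass of $\mu_Q$ is $1$, not $1+c$, so $|S_Q|=\pi(1-\tau^2)$. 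The $\pi(1+c)(1-\tau^2)$ is the area of $\Omega_1^c=S_Q\cup\Omega_2$, which is the quantity the unbounded-uniqueness result actually constrains; you arrived at the right number but by the wrong route.

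For the converse ($D\subset E\Rightarrow S=E\cap D^c$), the paper does not re-derive anything; it simply cites \cite[Proposition~2.1]{By24}, where the variational conditions are verified by explicit computation. Your proposal to re-prove it via the Cauchy/Schwarz function identities of Lemma~\ref{Lem_QD ftn for E D} and a maximum-principle argument is a reasonable blueprint, and you correctly identify the genuine difficulty: the equality part of \eqref{eq:variational} on $E\cap D^c$ follows from the additivity $C_{E\cap D^c}=C_E-C_D$ and the explicit formulas, but the global inequality on $D$ and on $E^c$ is the content that must actually be checked. Your sketch stops at ``I expect this to follow''; as written it is an outline, not a proof, and the subharmonicity/maximum-principle argument needs to be executed (one must in particular handle the singularity at $p\in D$ and the behaviour at $\infty$). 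Since the paper offloads exactly this step to a citation, your gap is the same step the paper chooses not to redo; you would either need to carry it out or cite the same reference.
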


\begin{proof} It has already been established in \cite[Proposition 2.1]{By24} that if the parameters \( (p, c, \tau) \) are chosen such that \( D \subset E \), then \( S = E \cap D^c \). This follows from explicit computations verifying the variational conditions \eqref{eq:variational}.  
Therefore, it suffices to prove that if the droplet is doubly connected, then $D \subset E$. 
 
Suppose the droplet \( S \) is doubly connected for given parameters \( (p, c, \tau) \). Then \( S^c \) consists of an unbounded component \( \Omega_1 \) and a bounded component \( \Omega_2 \), both of which are simply connected. By definition, it is obvious that $\Omega_1 \cap \Omega_2 = \emptyset$.  
Since \( Q \) is an algebraic potential, by Theorem~\ref{thm_S^c=QD}, \( \Omega_1 \) and \( \Omega_2 \) are both quadrature domains. Let \( r_1 \) and \( r_2 \) denote the quadrature functions of \( \Omega_1 \) and \( \Omega_2 \), respectively. Then, by \eqref{LM sum of quad functions} and \eqref{eq:quad h}, we have
\begin{equation*}
    r_1(\zeta) + r_2(\zeta) =  \tau \zeta + \frac{c(1-\tau^2)}{\zeta -p}. 
\end{equation*} 
Note that \( r_1 \) and \( r_2 \) are rational functions, with all their poles contained in their respective quadrature domains. In particular, they cannot share the same pole.  
Additionally, since \( \Omega_2 \) is a bounded quadrature domain, \( r_2(\zeta) \to 0 \) as \( \zeta \to \infty \).  
These conditions imply that
\begin{equation*}
    r_1(\zeta) = \tau \zeta, \qquad
    r_2(\zeta) = \frac{c(1-\tau^2)}{\zeta-p}.
\end{equation*}
Then, by Lemma~\ref{Lem_QD ftn for E D} and the uniqueness property discussed above, it follows that \( \Omega_2 = D \).  
Furthermore, since the area of \( S \) is \( \pi(1-\tau^2) \) (cf.~\eqref{eq:eqmeasure}), we have
\begin{align*}
    \text{area }\Omega_1^c = \text{area }S + \text{area }\Omega_2 =   \pi(1+c)(1-\tau^2).
\end{align*}
Then, once again, by Lemma~\ref{Lem_QD ftn for E D} and the uniqueness property, it follows that \( \Omega_1 = E^c \).  
Therefore by \( \Omega_1 \cap \Omega_2 = \emptyset \),  we conclude that \( D \subset E \), which completes the proof.
\end{proof} 

\subsection{Electrostatic energies}
\label{subsection doubly connected energies}

In this subsection, we compute the logarithmic energy \eqref{def of log energy}. 
Recall that the Robin's constant is given by \eqref{eq:variational}. 
Since $\mu_W$ is a probability measure, we have 
\begin{equation} \label{energy in terms of Robin}
  I_W(\mu_W) =    C_W + \frac{1}{2}\int_\mathbb{C} W(\zeta) \ud \mu_W(\zeta).
\end{equation}

\begin{lem}\label{lem_energy doubly connected}
Suppose that the droplet $S$ is doubly connected for given parameters $(p,c,\tau)$. Then the Robin's constant, denoted $C_{\rm d}(p,c,\tau)$, is evaluated as
\begin{align}\label{Robin_doubly connected}
    C_{\rm d}(p,c,\tau) &= \frac{1+c}{2}-\frac{1+c}{2}\log(1+c).
\end{align}
Furthermore, we have 
\begin{align} \label{int of Q doubly}
    \int_\mathbb{C} Q(\zeta)\ud \mu_Q(\zeta) &= \frac{1}{2} +2c+ c^2\log c(1-\tau^2) - c(1+c)\log(1+c) -\frac{2c\,p^2}{1+\tau}.
\end{align}
In particular, $\mathcal{I}_{ \rm d}(p,c,\tau)$ is given by~\eqref{weighted energy doubly connected}.
\end{lem}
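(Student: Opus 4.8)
\textbf{Proof proposal for Lemma~\ref{lem_energy doubly connected}.}

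The plan is to compute the two ingredients in \eqref{energy in terms of Robin} separately: the Robin constant $C_{\rm d}(p,c,\tau)$ and the potential integral $\int_\mathbb{C} Q\,\ud\mu_Q$, and then assemble $\mathcal{I}_{\rm d}$ via $I_Q(\mu_Q) = C_{\rm d} + \tfrac12\int Q\,\ud\mu_Q$. For the Robin constant, I would evaluate the left side of \eqref{eq:variational} at a single conveniently chosen point $z \in S$ where the logarithmic potential $U^{\mu_Q}(z) = \int \log\frac{1}{|z-w|}\,\ud\mu_Q(w)$ is computable in closed form. Since $\mu_Q = \frac{1}{1-\tau^2}\mathbbm{1}_S\,\ud A$ with $S = E\cap D^c$, we have $U^{\mu_Q} = \frac{1}{1-\tau^2}\big(U^{\mathbbm 1_E \ud A} - U^{\mathbbm 1_D \ud A}\big)$, and each of these Newtonian-type logarithmic potentials of an ellipse and a disc is classical. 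The cleanest choice is to push $z\to\infty$: using $U^{\mathbbm 1_\Omega \ud A}(z) = |\Omega|_{\ud A}\log\frac{1}{|z|} + o(1)$ as $z\to\infty$ together with the behaviour of $Q(z) = \frac{1}{1-\tau^2}(|z|^2 - \tau\re z^2) - 2c\log|z-p|$, one cannot directly use $z\in S$ since $S$ is bounded; instead I would pick $z = p$ (the centre of the inner disc, which lies in $S$ whenever $D\subset E$ is nondegenerate, or a limiting argument if $p\in\partial$ configuration) or, more robustly, any real point and use the explicit antiderivative structure. Concretely, the Cauchy transforms from Lemma~\ref{Lem_QD ftn for E D} give $\partial_\zeta U^{\mu_Q}$ up to the $\log$ terms, so one integrates $\re\big(C_E - C_{D^c}\big)/(1-\tau^2)$ along a path; matching against $\tfrac12 Q$ on $S$ pins down $C_{\rm d}$. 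I expect \eqref{Robin_doubly connected} to drop out after the square-root terms in $C_E$ cancel appropriately on the relevant region.

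For the potential integral, write $\int_\mathbb{C} Q\,\ud\mu_Q = \frac{1}{1-\tau^2}\int_S\big(|z|^2 - \tau\re z^2\big)\,\ud\mu_Q - 2c\int_S \log|z-p|\,\ud\mu_Q$, i.e.\ $\frac{1}{(1-\tau^2)^2}\int_S(|z|^2-\tau\re z^2)\,\ud A - \frac{2c}{1-\tau^2}\int_S\log|z-p|\,\ud A$. By additivity over $S = E \setminus D$, each of these splits into an integral over the ellipse $E$ minus one over the disc $D$. The quadratic moments $\int_E |z|^2\,\ud A$, $\int_E \re z^2\,\ud A$, $\int_D |z-p|^2\,\ud A$, $\int_D |z|^2\,\ud A$, $\int_D \re z^2\,\ud A$ are elementary (the ellipse $E$ has semi-axes $(1+\tau)\sqrt{1+c}$, $(1-\tau)\sqrt{1+c}$; the disc $D$ has centre $p$ and radius $\sqrt{c(1-\tau^2)}$). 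The logarithmic integrals $\int_E\log|z-p|\,\ud A$ and $\int_D\log|z-p|\,\ud A$ are the ones requiring care: the disc integral is standard ($\int_{\mathbb{D}(p,\rho)}\log|z-p|\,\ud A = \frac{\rho^2}{2}(\log\rho^2 - 1)$ in our normalisation, up to the $1/\pi$), producing the $c^2\log(c(1-\tau^2))$ term after substituting $\rho^2 = c(1-\tau^2)$ and the overall factor; the ellipse integral $\int_E\log|z-p|\,\ud A$ with $p$ inside $E$ can be obtained from the known logarithmic potential of a uniformly charged ellipse (e.g.\ via the exterior conformal map or directly from $C_E$ in \eqref{eq_Cauchy E} by integrating), yielding the $-c(1+c)\log(1+c)$ and $\frac{2cp^2}{1+\tau}$ contributions together with the quadratic pieces.

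Assembling: add $C_{\rm d}$ and $\tfrac12\int Q\,\ud\mu_Q$; the constants $\frac{1+c}{2}$ and $\frac12(\frac12 + 2c) = \frac14 + c$ combine to $\frac34 + \frac{3c}{2}$, the $\log(1+c)$ terms combine to $-\frac{(1+c)^2}{2}\log(1+c)$, the disc-log term gives $\frac{c^2}{2}\log(c(1-\tau^2))$, and the $p^2$ terms give $-\frac{cp^2}{1+\tau}$, matching \eqref{weighted energy doubly connected} exactly. The main obstacle I anticipate is the evaluation of $\int_E\log|z-p|\,\ud A$ for $p$ interior to the ellipse: unlike the disc, there is no rotational symmetry, so one must either invoke the explicit formula for the logarithmic potential of an ellipse (which exists in the literature — it is quadratic inside and involves the exterior conformal map) or derive it by integrating the Cauchy transform $C_E$ from \eqref{eq_Cauchy E} along the real axis from $\infty$ down to $p$, being careful about the branch of $\sqrt{\zeta^2 - 4\tau(1+c)}$ and the constant of integration (fixed by the known behaviour $U^{\mathbbm 1_E \ud A}(z)\sim |E|\log(1/|z|)$ at infinity). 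Since the final answer is stated, this is bookkeeping rather than a genuine difficulty, and I would present it by quoting the ellipse-potential formula and verifying the constant.
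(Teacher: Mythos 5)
Your overall plan---evaluate the variational equality at a point of $S$ to get $C_{\rm d}$, then split $\int_\C Q\,\ud\mu_Q$ over $E$ and $D$ using explicit log potentials and quadratic moments, then assemble via \eqref{energy in terms of Robin}---is exactly the paper's strategy. The paper's key tool is precisely the ellipse potential formula you identify as the main obstacle: \eqref{eq:log cal} from~\cite[Lemma 2.4]{By24}, which states that for $\zeta\in E$, $\int_E\log\frac{1}{|\zeta-\eta|}\,\ud A(\eta) = -\tfrac12(|\zeta|^2-\tau\re\zeta^2) + \int_E\log\frac{1}{|\eta|}\,\ud A(\eta)$. Combined with the analogous disc formula \eqref{eq:log cal 2}, the quadratic and logarithmic pieces cancel against $\tfrac12 Q(\zeta)$ for every $\zeta\in S$, so there is no need to find a ``clever'' evaluation point; the constant that survives is simply $\frac{1}{1-\tau^2}\int_E\log\frac{1}{|\eta|}\,\ud A(\eta)$, computed in elliptic coordinates using $\int_0^{\pi/2}\log(a^2\cos^2\theta+b^2\sin^2\theta)\,\ud\theta = \pi\log\frac{a+b}{2}$. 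The quadratic moments are handled by Green's formula and residue calculus (equivalent to your ``elementary'' computation), and the ellipse-log integral $\int_E\log|z-p|\,\ud A$ is obtained from the same formula \eqref{eq:log cal} evaluated at $\zeta=p\in E$.

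One concrete error: you claim $p$ lies in $S$. It does not. Since $S = E\cap D^c$ and $p$ is the centre of the open disc $D$, $p$ is in the \emph{removed} hole, so $p\notin S$. This makes your primary suggestion for the Robin constant (evaluate the equality in \eqref{eq:variational} at $\zeta=p$) invalid; the equality there becomes an inequality. Your fallback of ``any real point'' works in principle, but the cleaner observation, which the paper uses, is that \eqref{eq:log cal} and \eqref{eq:log cal 2} make the full expression manifestly constant on $S$ without choosing a point at all. Note that $p\in E$ is still legitimate and is indeed used by the paper to evaluate $\int_E\log|\zeta-p|\,\ud A(\zeta)$ via \eqref{eq:log cal}---you only cannot use $p$ for the variational \emph{equality}, which requires $\zeta\in S$.
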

 
We note that Lemma~\ref{lem_energy doubly connected} generalises previous results for \( \tau = 0 \) given in \cite[Lemma 4.8, post-critical case]{BSY24}.  
Additionally, observe that the Robin's constant does not depend on \( p \) and \( \tau \) as long as the droplet is doubly connected.

\begin{proof}[Proof of Lemma~\ref{lem_energy doubly connected}]
  
By Proposition~\ref{prop_doubly connected droplet} and \eqref{eq:variational}, the Robin's constant can be written as 
\begin{align}
    C_{\rm d}(p,c,\tau) &= \frac{1}{(1-\tau^2)}\int_{E} \log\frac{1}{|\zeta-\eta|} \ud A(\eta) - \frac{1}{(1-\tau^2)}\int_{D} \log\frac{1}{|\zeta-\eta|} \ud A(\eta) + \frac{1}{2}Q(\zeta), 
\end{align}
for $\zeta \in S= E \cap D^c$. 
By \cite[Lemma 2.4]{By24}, we have 
\begin{equation} \label{eq:log cal}
  \int_E \log \frac{1}{|\zeta-\eta|} \ud A(\eta)  = -\frac{1}{2}(|\zeta|^2-\tau \re \zeta^2) + \int_E \log \frac{1}{|\eta|}\ud A(\eta), \qquad \textup{for }\zeta \in E.
\end{equation}
On the other hand, by \cite[Lemma 2.6]{By24}, we have 
\begin{equation}  \label{eq:log cal 2} 
 \int_D \log \frac{1}{|\zeta-\eta|} \ud A(\eta)  =    -c(1-\tau^2) \log |\zeta-p|,\qquad  \textup{for } \zeta \notin D.
\end{equation}
Combining all of the above with \eqref{eq:potential}, it follows that 
\begin{align}
    C_{\rm d}(p,c,\tau) = \frac{1}{(1-\tau^2)} \int_E\log \frac{1}{|\eta|}\ud A(\eta).
\end{align}
With the elliptic coordinate $(x,y) =  (r(1+\tau)\cos \theta, r(1-\tau)\sin \theta)$, this can be rewritten as 
\begin{align*}
    C_{\rm d}(p,c,\tau) 
    &= -\frac{1}{2\pi}\int_0^{2\pi}\int_0^{\sqrt{1+c}}\log \Big(r^2\big((1+\tau)^2\cos^2 \theta + (1-\tau)^2\sin^2\theta\big)\Big)r \ud r \ud \theta\\
    &= -2\int_0^{\sqrt{1+c}} r\log r \ud r - \frac{1+c}{4\pi} \int_0^{2\pi} \log\big((1+\tau)^2\cos^2 \theta + (1-\tau)^2\sin^2\theta\big) \ud \theta. 
\end{align*} 
Then the integral evaluation (see, e.g. \cite[Eq.(4.226.6)]{GR14})
\begin{align*}
    \int_0^{\frac{\pi}{2}}\log(a^2 \cos^2\theta + b^2 \sin^2\theta) \ud \theta = \pi \log\Big(\frac{a+b}{2}\Big), \qquad (a,b>0)
\end{align*}
gives rise to the desired formula \eqref{Robin_doubly connected}.

Next, we prove \eqref{int of Q doubly}. Note that by \eqref{eq:potential} and \eqref{eq for Frostman}, we have  
\begin{align*}
    \int_\mathbb{C}Q(\zeta) \ud \mu_Q(\zeta)&= \frac{1}{(1-\tau^2)^2}\int_{S} |\zeta|^2- \tau \re \zeta^2 \ud A(\zeta) + \frac{2c}{1-\tau^2}\int_S \log \frac{1}{|\zeta-p|}\ud A(\zeta).
\end{align*}
By Green's formula and the change of variables $\zeta=\sqrt{1+c}\,(z+\tau/z)$, we have
\begin{align}
\begin{split}
    &\quad \int_{E} |\zeta|^2-\tau \re \zeta^2 \ud A(\zeta) = \frac{1}{2\pi i} \int_{\partial E} \Big( \frac12 \bar{\zeta}-\tau \zeta \Big)|\zeta|^2 \ud \zeta
    \\
    &= \frac{(1+c)^2}{2\pi i}\int_{\partial \mathbb{D}} \bigg( \frac{1}{2}\Big(z+\frac{\tau}{z}\Big)\Big(\frac{1}{z}+\tau z\Big)^2 -\tau \Big(z+\frac{\tau}{z}\Big)^2 \Big(\frac{1}{z}+\tau z\Big) \bigg) \Big(1-\frac{\tau}{z^2}\Big)\ud z  = \frac{(1+c)^2}{2}(1-\tau^2)^2.
\end{split}
\end{align}
Here, the last equality follows directly from straightforward residue calculus.  

Similarly, with the change of variables $\zeta = p+\sqrt{c(1-\tau^2)}z$, we have
\begin{align}
\begin{split}
    \int_{D} |\zeta|^2-\tau \re \zeta^2 \ud A(\zeta) = \frac{1}{2\pi i} \int_{\partial D} \Big( \frac12 \bar{\zeta}-\tau \zeta \Big)|\zeta|^2 \ud \zeta  =\Big(\frac{c^2}{2} + \frac{cp^2}{1+\tau}\Big)(1-\tau^2)^2.
\end{split}
\end{align}
On the other hand, by applying \eqref{eq:log cal}, we have 
\begin{align}
\begin{split}
    \frac{1}{1-\tau^2}\int_{S} \log \frac{1}{|\zeta-p|} \ud A(\zeta)  
    &= -\frac{p^2}{2(1+\tau)}+\frac{1}{2}-\frac{1+c}{2}\log(1+c)+ \frac{c}{2}\log c(1-\tau^2).
\end{split}
\end{align}
Combining all of the above, we obtain \eqref{int of Q doubly}. 

The last assertion immediately follows from \eqref{Robin_doubly connected} and \eqref{int of Q doubly}, together with \eqref{energy in terms of Robin}.  
\end{proof}

\section{Simply connected droplet}  \label{Section_simply}

In this section, we establish our main results in the regime where the droplet is simply connected.  

An effective approach to determining the shape of the droplet is the so-called \textit{conformal mapping method}. This method relies on an \textit{a priori} assumption about the topology of the droplet--simple connectivity in our case--and then seeks to characterise a conformal map \( f \) that maps \( \bar{\mathbb{D}}^c \) conformally onto \( S^c \).
The general procedure is as follows: First, we make use of the explicit form of the Schwarz function to extend \( f \) analytically to the entire complex plane. Then, by identifying the locations and coefficients of its poles, we establish that \( f \) is a rational function via Liouville’s theorem. Indeed, for Hele-Shaw-type potentials, where the density of the equilibrium measure is flat, it is known in general that \( f \) is a rational univalent map on \( \bar{\mathbb{D}}^c \), cf. Remark~\ref{rem_boundary of simply connected QD}. 
For implementations of this strategy in various models, see \cite{ABK21,By24,BFL25}. This method is powerful for determining the explicit shape of the droplet; however, once the conformal map is specified, one must still verify the variational conditions \eqref{eq:variational} to justify the a priori assumption about the topology of the droplet.

Previous work in this direction \cite{ABK21,By24,BFL25} has shown that, although the conformal mapping method requires separate verification of the variational conditions \eqref{eq:variational}, it nevertheless yields the correct answer for the droplet. However, this is not entirely the case in our present setting.
More precisely, the conformal mapping method provides an ansatz for the rational map but does not fully characterise the droplet, as the range of the parameter \( \kappa \) obtained through this method is larger than the true solution, see Figure~\ref{Fig_Range of kappa}. Consequently, determining the correct range, which is indeed the challenging part, becomes essential when verifying the variational conditions. 

Let us be more precise within our current setup. 
Recall that the rational map \( f \) is given by~\eqref{eq_simply conformal map}, with parameters \( (R,a,\kappa) \) satisfying the algebraic conditions~\eqref{eq:sc2}, \eqref{eq:sc3}, and \eqref{eq:sc4}. Here, it is crucial that \( a \in (0,1) \) and \( \kappa \in [0, \kappa_{\rm cri}) \), where \( \kappa_{\rm cri} \) is the unique zero of \( H(a, \cdot) \) in~\eqref{def of H(a,kappa)}. 
Beyond \( \kappa_{\rm cri} \), there exist additional critical values of \( \kappa \) that determine the geometric properties of \( f(\partial \mathbb{D}) \):
\begin{equation}\label{eq_kappa max}
      \kappa_{\rm min} = -(1-\tau)(1-a)^2, \qquad   \kappa_{\rm max} = \frac{(1-\tau a^2)^2}{(1+\tau a^2)^2}(1+\tau)(1-a^2).
\end{equation} 
Later, in the proof of Proposition~\ref{prop_simply inequality}, we will verify that
$\kappa_{ \rm cri } \le \kappa_{ \rm max }$. 
We summarise the key geometric properties of \( f(\partial \mathbb{D}) \) depending on the values of $\kappa$, along with our overall argument to establish the main results.

\begin{itemize}
    \item (Univalence condition) In general, the rational map \( f \) of the form~\eqref{eq_simply conformal map} is not univalent on \( \bar{\mathbb{D}}^c \). However, Proposition~\ref{prop_univalence} shows that \( f \) is univalent on \( \bar{\mathbb{D}}^c \) if and only if \( \kappa \in [\kappa_{\rm min}, \kappa_{\rm max}] \). Furthermore, this condition is equivalent to requiring that \( S_Q^c \) is a quadrature domain. While this step is not essential to completing the proof of our main results, it provides a necessary condition for the range of \( \kappa \).  
    \smallskip 
    \item (A priori droplet condition) Suppose that the droplet \( S_Q \) is simply connected. Then, using the conformal mapping method introduced above, Proposition~\ref{prop:a priori} shows that \( S_Q \) is enclosed by the rational map \( f \) of the form~\eqref{eq_simply conformal map}, where \( \kappa \in [0, \kappa_{\rm max}] \).  
Here, compared to the univalence condition, which requires \( \kappa \in [\kappa_{\rm min}, \kappa_{\rm max}] \), the range \( \kappa \in [\kappa_{\rm min}, 0) \) is excluded. From a computational perspective, this follows from the fact that \( c \geq 0 \), which is necessary to ensure the finiteness of the partition functions in \eqref{Gibbs complex} and \eqref{Gibbs symplectic}.  
    \smallskip 
    \item (Variational conditions) As explained above, the previous step is not necessary to fully characterise the droplet, as \( \kappa_{\rm cri} \) can be strictly smaller than \( \kappa_{\rm max} \). In the final step, we must verify that a genuine droplet arises only when \( \kappa \in [0, \kappa_{\rm cri}) \).  
For this purpose, let \( K \) be a simply connected domain enclosed by the image of the unit circle under the rational map~\eqref{eq_simply conformal map}, where \( \kappa \in [0, \kappa_{\rm max}] \). Then, in Propositions~\ref{prop_simply equality} and~\ref{prop_simply inequality}, we show that the probability measure (cf. Lemma~\ref{Lem_total mass 1})
    \begin{align} \label{def of mu K ansatz}
    \mu_K := \frac{1}{1-\tau^2}\mathbbm{1}_K \ud A  
\end{align}
    satisfies the variational conditions \eqref{eq:variational} if and only if $\kappa \in [0, \kappa_{\rm cri})$.
\end{itemize}

See Figure~\ref{Fig_Range of kappa} for an illustration and summary of this discussion.  

\begin{figure}[t]
  \begin{center}
\begin{tikzpicture}[scale=3]
    \draw[thick] (-2.25,0) -- (-1,0);
    \draw[thick] (1,0) -- (2.25,0);
    \draw[thick, red] (-1,0) -- (1,0);
    
    \filldraw (-2,0) circle (1pt) node[below=5pt] {$\kappa_{ \rm min }$};
    \filldraw (-1,0) circle (1pt) node[below left=5pt] {$ 0 $};
    \filldraw (1,0) circle (1pt) node[below=5pt] { $ \kappa_{ \rm cri } $ };
    \filldraw (2,0) circle (1pt) node[below right=5pt] { $ \kappa_{ \rm max } $ };
    
    \draw (2,0) arc[start angle=0, end angle=180, x radius=2cm, y radius=0.75cm] node[midway, above=5pt] {Univalence condition; Prop.~\ref{prop_univalence}};
    \draw (1,0) arc[start angle=0, end angle=180, x radius=1cm, y radius=0.25cm] node[midway, above=5pt] {Variational conditions; Prop.~\ref{prop_simply equality}  \& ~\ref{prop_simply inequality}};
    \draw (2,0) arc[start angle=360, end angle=180, x radius=1.5cm, y radius=0.375cm] node[midway, below=5pt] {A priori droplet condition; Prop.~\ref{prop:a priori}};
\end{tikzpicture}
\end{center}
\caption{The plot illustrates the ranges of \( \kappa \) for which different geometric properties of \( f(\partial \mathbb{D}) \) arise. } \label{Fig_Range of kappa}
\end{figure}
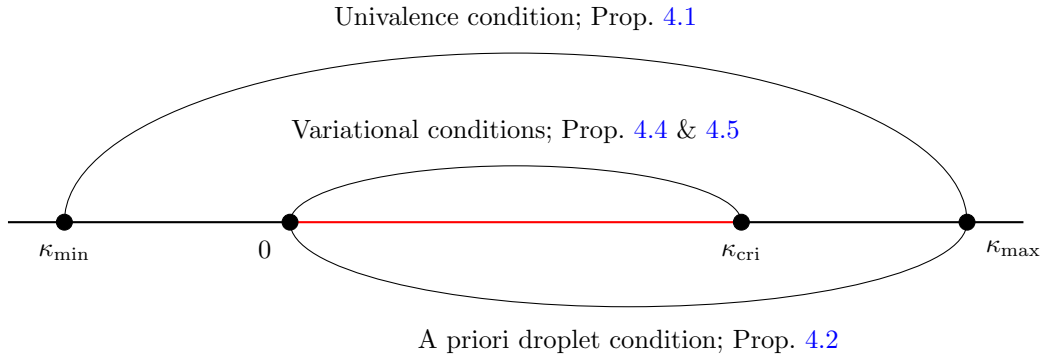

We note that \eqref{def of mu K ansatz} constitutes a slight abuse of notation, as $\mu_Q$ also denotes the equilibrium measure associated with the potential $Q$. However, since $\mu_Q$ does not appear elsewhere in this section, we believe no ambiguity should arise.

From the perspective of the phase diagram, the critical value \( \kappa_{\rm cri} \) corresponds to the intersection of Regimes II and III, marking the emergence of a new archipelago. See Figure~\ref{Fig_image of f} for images of \( f(\partial \mathbb{D}) \), where different phases can be observed depending on the values of \( \kappa \).

\begin{figure}[!tbp]
\begin{subfigure}{0.3\textwidth}
    \begin{center}
        \includegraphics[width=\textwidth]{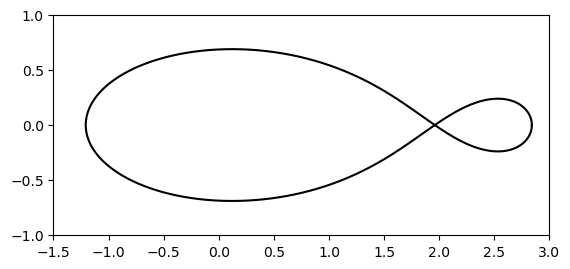}
        \subcaption{$\kappa = -0.2<\kappa_{\rm min}$}
    \end{center}
\end{subfigure}
\qquad \begin{subfigure}{0.3\textwidth}
    \begin{center}
        \includegraphics[width=\textwidth]{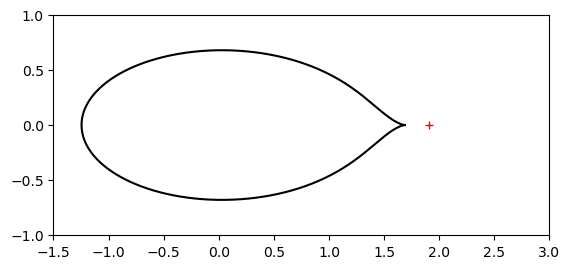}
        \subcaption{$\kappa = \kappa_{\rm min} = -0.063$}
    \end{center}
\end{subfigure}
\qquad \begin{subfigure}{0.3\textwidth}
    \begin{center}
        \includegraphics[width=\textwidth]{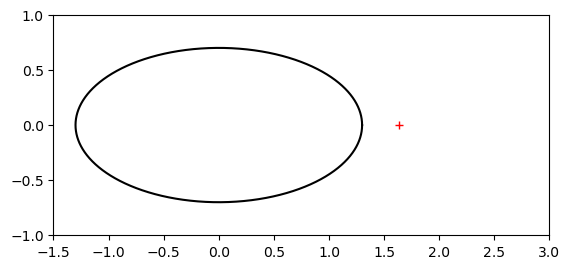}
        \subcaption{$\kappa=0$}
    \end{center}
\end{subfigure}

\begin{subfigure}{0.3\textwidth}
    \begin{center}
        \includegraphics[width=\textwidth]{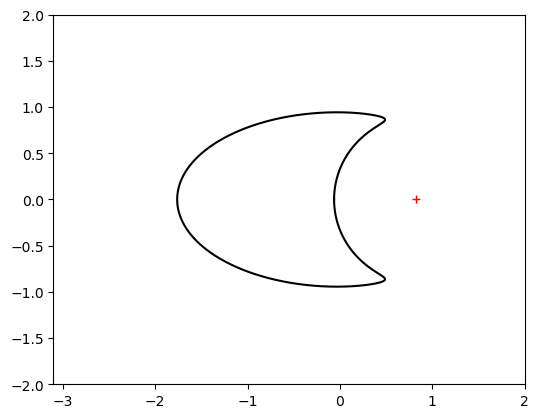}
        \subcaption{$\kappa=\kappa_{\rm cri}\approx 0.253$}
    \end{center}
\end{subfigure}
\qquad \begin{subfigure}{0.3\textwidth}
    \begin{center}
        \includegraphics[width=\textwidth]{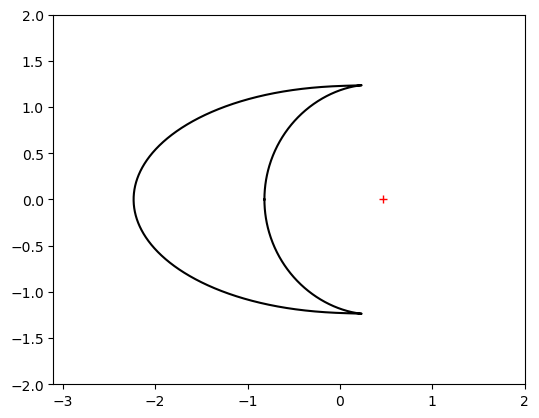}
        \subcaption{$\kappa=\kappa_{\rm max}\approx 0.367$}
    \end{center}
\end{subfigure}
\qquad \begin{subfigure}{0.3\textwidth}
    \begin{center}
        \includegraphics[width=\textwidth]{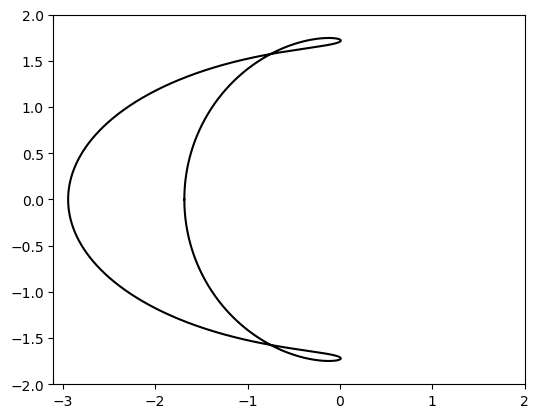}
        \subcaption{$\kappa=0.45>\kappa_{\rm max}$}
    \end{center}
\end{subfigure}
\caption{The plots show the image of \( f(\partial \mathbb{D}) \) for \( \tau = 0.3 \) and \( a = 0.7 \), with different values of \( \kappa \). The red cross in (B)–(E) indicates the point \( p \). It is clear that in (A) and (F), where \( \kappa < \kappa_{\rm min} \) or \( \kappa > \kappa_{\rm max} \), the rational map \( f \) is no longer univalent on \( \bar{\mathbb{D}}^c \).  
Plots (B) and (E) illustrate cases where univalence breaks, while plots (C) and (D) correspond to regimes where \( f(\partial \mathbb{D}) \) forms the boundary of the droplet. } \label{Fig_image of f}
\end{figure}

\subsection{Univalence criterion and conformal mapping method}
\label{subsection_simply connected droplet}

In this subsection, we discuss the a priori condition for the droplet. 

We first state the univalence criterion for the rational function $f$ of the form ~\eqref{eq_simply conformal map}. 

\begin{prop}
\label{prop_univalence}
Let \( f \) be a rational function of the form~\eqref{eq_simply conformal map}, where \( R > 0 \), \( a \in (0,1) \), and \( \tau \in [0,1) \). Then, for \( \kappa \in \mathbb{R} \), \( f \) is univalent on \( \bar{\mathbb{D}}^c \) if and only if \( \kappa \in [\kappa_{\rm min}, \kappa_{\rm max}] \).
\end{prop}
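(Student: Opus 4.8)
The plan is to split the equivalence into a local part, governed by the zeros of $f'$, and a global (injectivity) part, and to settle each by explicit algebra. Since the finite poles $0$ and $a$ of $f$ lie in $\mathbb{D}$ and $f$ has a simple pole at $\infty$, $f$ is holomorphic on $\bar{\mathbb{D}}^c$, and univalence there forces $f'\neq 0$ there. First I would compute
\[
f'(z)=R\Bigl(1-\frac{\tau}{z^{2}}+\frac{\kappa}{(z-a)^{2}}\Bigr)=\frac{R\,P(z)}{z^{2}(z-a)^{2}},\qquad P(z)=z^{2}(z-a)^{2}+\kappa z^{2}-\tau(z-a)^{2},
\]
so that $f$ is locally univalent on $\bar{\mathbb{D}}^c$ precisely when the monic real quartic $P$ has no zero in $\{|z|>1\}$.

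For the root count I would substitute $z=e^{i\theta}$ into $P(z)=0$ and solve for $\kappa$, getting $\kappa=g(\theta):=(1-ae^{-i\theta})^{2}(\tau-e^{2i\theta})$ with $\operatorname{Im}g(\theta)=2\sin\theta\,[a(1+\tau)-(1+\tau a^{2})\cos\theta]$. Hence a zero of $P$ can cross $\partial\mathbb{D}$ only when $\theta\in\{0,\pi\}$ or $\cos\theta=a(1+\tau)/(1+\tau a^{2})$, and the corresponding real values are $\kappa_{\rm min}=-(1-\tau)(1-a)^{2}$, a third value $-(1-\tau)(1+a)^{2}<\kappa_{\rm min}$, and $\kappa_{\rm max}$ from \eqref{eq_kappa max}. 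By continuity of the zeros of $P$ in $\kappa$ their number in $\{|z|>1\}$ is constant on each of the four intervals these values cut out; at $\kappa=0$ one has $P(z)=(z-a)^{2}(z^{2}-\tau)$, all of whose zeros lie in $\mathbb{D}$, whereas $|\kappa|\to\infty$ sends zeros off to $\infty$, and a check of the sign of $dz/d\kappa=-z^{2}/P'(z)$ at the boundary crossings confirms that the relevant zero indeed leaves $\bar{\mathbb{D}}$ just beyond $\kappa_{\rm min}$ and $\kappa_{\rm max}$. This yields that $P$ has all zeros in $\bar{\mathbb{D}}$ $\iff\kappa\in[\kappa_{\rm min},\kappa_{\rm max}]$, and in particular the ``only if'' direction of the proposition.

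For the ``if'' direction, suppose $\kappa\in[\kappa_{\rm min},\kappa_{\rm max}]$; then $f'\neq 0$ on $\bar{\mathbb{D}}^c$ by the above, so it remains to show no value of $f$ is attained twice in $\bar{\mathbb{D}}^c$. If $f(z_1)=f(z_2)$ with $z_1\neq z_2$, dividing the value difference gives the symmetric relation $1-\tfrac{\tau}{z_1z_2}+\tfrac{\kappa}{(z_1-a)(z_2-a)}=0$; writing the cubic $f(z)=w$ in reduced form shows moreover that the three preimages of any $w$ multiply to $a\tau$, so at most two of them lie in $\bar{\mathbb{D}}^c$, and the task is to improve this to at most one. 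For $|z_1|=|z_2|=1$ the symmetric relation, combined with $\bar z_j=1/z_j$, collapses in the variables $u=z_1+z_2$, $v=z_1z_2$ to a palindromic quadratic $\tau(1-a^{2})(v^{2}+1)+Bv=0$ with $B=(1+\tau)\kappa-(1+\tau^{2})(1-a^{2})$, and the key point is that this has no root with $|v|=1$ and $|u|<2$ once $\kappa\le\kappa_{\rm max}$, the borderline $\kappa=\kappa_{\rm max}$ being exactly where $f'$ acquires the boundary zero at $e^{\pm i\theta_0}$. Tracking how this forces both remaining preimages of a boundary value into $\bar{\mathbb{D}}$, and hence that no two points of the open exterior collide, one concludes by the argument principle for the exterior domain that $f$ maps $\bar{\mathbb{D}}^c$ conformally onto the unbounded complementary component of $f(\partial\mathbb{D})$. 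I expect this last step to be the main obstacle: the root count for $P$ is routine, but the injectivity analysis is delicate at the endpoints $\kappa\in\{\kappa_{\rm min},\kappa_{\rm max}\}$, where $f(\partial\mathbb{D})$ already touches itself tangentially while $f$ stays injective on the open exterior. (The $\tau=0$ case, where $f$ drops to degree $2$, is simpler and can be treated separately, cf.\ Remark~\ref{Rem_droplet extremal}.) An alternative to the direct computation is a continuity argument together with Hurwitz's theorem, seeded at the Joukowsky map at $\kappa=0$ whose univalence is known (example~(c)), combined with the fact that the univalence range must coincide with the range of $\kappa$ for which $S_Q^c$ is a quadrature domain.
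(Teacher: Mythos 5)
Your local-vs-global split is a natural first instinct, and the local part is sound: the factorisation $f'(z)=RP(z)/\bigl(z^2(z-a)^2\bigr)$, the restriction to $|z|=1$ via $\kappa=(1-ae^{-i\theta})^2(\tau-e^{2i\theta})$, and the identity $\operatorname{Im}\kappa(\theta)=2\sin\theta\bigl[a(1+\tau)-(1+\tau a^2)\cos\theta\bigr]$ are all correct, and combined with the count at $\kappa=0$ and $|\kappa|\to\infty$ they give that $f'\neq 0$ on $\{|z|>1\}$ precisely for $\kappa\in[\kappa_{\rm min},\kappa_{\rm max}]$. Since a holomorphic injection has nonvanishing derivative, this settles the ``only if'' direction.

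The ``if'' direction, however, is genuinely missing, and you flag it yourself. Nonvanishing derivative on the (multiply connected, once you adjoin $\infty$) domain $\bar{\mathbb{D}}^c$ does not imply injectivity, and the sketch with $u=z_1+z_2$, $v=z_1z_2$, a palindromic quadratic, and an appeal to the argument principle is not a proof — and the endpoints $\kappa\in\{\kappa_{\rm min},\kappa_{\rm max}\}$, where $f'$ vanishes on $\partial\mathbb{D}$ while $f$ should remain injective on the open exterior, are precisely where such an argument is most delicate. The paper avoids the split entirely: with $g(z)=z+\tau/z-\kappa/(z-a)$ (so $f=R(g-\mathrm{const})$), it forms the quadratic in $w$
\[
p_z(w)=zw(z-a)(w-a)\,\frac{g(z)-g(w)}{z-w},
\]
whose vanishing encodes simultaneously a coincidence $g(z)=g(w)$ with $z\neq w$ and (on the diagonal, since $p_z(z)=z^2(z-a)^2g'(z)$) a critical point of $g$. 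Thus $f$ is univalent on $\bar{\mathbb{D}}^c$ if and only if for every $|z|>1$ both roots of $p_z(\cdot)$ lie in $\bar{\mathbb{D}}$, and a continuity/symmetry argument reduces the test to $|z|=1$. The ``all roots of a quadratic in $\bar{\mathbb{D}}$'' criterion is then dispatched mechanically by a Schur--Cohn lemma, which after expansion produces, for each $|z|=1$, a quadratic inequality in $\kappa$ whose solution set is a closed interval; intersecting over $|z|=1$ and checking the two endpoints gives $[\kappa_{\rm min},\kappa_{\rm max}]$. In short, the divided-difference polynomial does in one step what your plan attempts in two, and it is the step you did not complete. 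If you want to salvage the split approach, you would still need a genuine injectivity argument — your alternative via continuity from the Joukowsky case plus Hurwitz is more promising than the $u,v$ computation, but it also needs the endpoint care you anticipate.
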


This proposition is not directly used to prove our main results. Nonetheless, it highlights an interesting feature of the rational map \( f \), and we defer the proof of Proposition~\ref{prop_univalence} to Appendix~\ref{section univalence}. We note that the proof is based on the Schur-Cohn test.

Next, we formulate the a priori condition for the droplet, which leads to the ansatz for the droplet under the assumption that it is simply connected.

\begin{prop} 
\label{prop:a priori}
Suppose the droplet \( S \) associated with the potential \( Q \) is simply connected. Then, there exists a rational map \( f \) of the form~\eqref{eq_simply conformal map} that conformally maps \( \bar{\mathbb{D}}^c \) onto \( S^c \). Here, \( (R,a,\kappa) \) satisfy the algebraic equations~\eqref{eq:sc2}, \eqref{eq:sc3}, and \eqref{eq:sc4}, where \( R > 0 \), \( a\in (0,1) \), and \( \kappa \in [0,\kappa_{\rm max}] \). 
\end{prop}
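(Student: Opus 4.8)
\textbf{Proof proposal for Proposition~\ref{prop:a priori}.}

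The plan is to implement the conformal mapping method outlined at the beginning of this section. Assume the droplet $S$ is simply connected; then $S^c$ is a simply connected domain containing $\infty$, and by the Riemann mapping theorem there exists a conformal map $f\colon \bar{\mathbb{D}}^c \to S^c$ normalized so that $f(\infty)=\infty$ and $f'(\infty)>0$. Writing $R := f'(\infty)$, I will show $f$ extends to a rational function of the form \eqref{eq_simply conformal map}. The key input is the Schwarz function of $S$. Since $Q$ is an algebraic potential with $(1-\tau^2)Q(\zeta)=|\zeta|^2-\mathsf{H}(\zeta)$ and $\mathsf{h}=\partial\mathsf{H}=\tau\zeta+c(1-\tau^2)/(\zeta-p)$, differentiating the Euler--Lagrange equation \eqref{eq:variational} on $\partial S$ shows that on $\partial S$ one has $\bar\zeta = \mathsf{h}(\zeta) + (\text{Cauchy transform of }S)$, and hence $S^c$ is a quadrature domain whose Schwarz function $\mathsf{S}_{S^c}$ is meromorphic on $S^c$ with $\mathsf{S}_{S^c}(\zeta)=\bar\zeta$ on $\partial S$; its only possible pole in $S^c$ comes from the term $c(1-\tau^2)/(\zeta-p)$, so $p\in S^c$ and $\mathsf{S}_{S^c}$ has a simple pole at $p$ with residue $c(1-\tau^2)$, plus a simple pole at $\infty$ with leading coefficient $\tau$ (from $\tau\zeta$). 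Pulling back through $f$: the relation $\mathsf{S}_{S^c}(f(z)) = \overline{f(1/\bar z)}$ on $|z|=1$, combined with the reflection principle, lets me analytically continue $f$ from $\bar{\mathbb{D}}^c$ across $\partial\mathbb{D}$ into $\mathbb{D}$. The continued $f$ can only have singularities at the points of $\mathbb{D}$ mapped by inversion to the poles of $\mathsf{S}_{S^c}$, namely at $z=0$ (corresponding to $\zeta=\infty$) and at the point $z=a\in(0,1)$ with $f(1/a)=p$ (corresponding to $\zeta=p$), each a simple pole. By Liouville's theorem $f$ is therefore rational of the form $f(z)=R(z + \tau/z - \kappa/(z-a) + \text{const})$; matching the residue at $\infty$ gives the coefficient $\tau$ (after normalizing $R$ out), matching the residue at $a$ against $c(1-\tau^2)$ and using $f(1/a)=p$ produces the constant $-\kappa/(a(1-\tau))$ and pins down the relations; the three algebraic equations \eqref{eq:sc2}--\eqref{eq:sc4} then come from: (i) the normalization that the area of $S$ equals $\pi(1-\tau^2)$, equivalently $\frac{1}{2\pi i}\int_{\partial\mathbb{D}} f\,\overline{f}\,\ud(\text{arg})$ computed by residues, (ii) the quadrature/residue condition encoding $c$, and (iii) the condition $f(1/a)=p$.

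Next I need the constraint $\kappa\in[0,\kappa_{\rm max}]$. The lower bound $\kappa\ge 0$ follows from $c\ge 0$: from \eqref{eq:sc3}, solving for $\kappa$ in terms of $c,R,a,\tau$, one checks that the sign of $\kappa$ matches that of $c$ (the bracket $\frac{1-\tau a^2}{a^2}+\frac{\kappa}{(1-a^2)^2}$ is forced positive in the relevant range), so $c\ge 0 \iff \kappa\ge 0$. The upper bound $\kappa\le\kappa_{\rm max}$ must come from the geometric requirement that $f$ actually be \emph{univalent} on $\bar{\mathbb{D}}^c$ — which it is, being a conformal map of $S^c$ — together with Proposition~\ref{prop_univalence}, which characterizes univalence of maps of the form \eqref{eq_simply conformal map} exactly as $\kappa\in[\kappa_{\rm min},\kappa_{\rm max}]$. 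So the argument is: the true $f$ is univalent, hence $\kappa\le\kappa_{\rm max}$; combined with $\kappa\ge 0$ this gives $\kappa\in[0,\kappa_{\rm max}]$.

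\textbf{The main obstacle} I anticipate is the rigorous analytic continuation step — justifying that $f$, a priori only defined and conformal on $\bar{\mathbb{D}}^c$, extends meromorphically across $\partial\mathbb{D}$ to all of $\hat{\mathbb{C}}$ with exactly the two simple poles claimed and nothing more. This requires knowing that $\partial S$ is sufficiently regular (real-analytic Jordan curve) so that the Schwarz function exists and $f$ extends by Schwarz reflection; the smoothness of $\partial S$ should be invoked via Sakai's regularity theory (as already used in Section~\ref{Section_topology}) together with the fact that $S^c$ is a quadrature domain, which forces algebraic (hence real-analytic away from finitely many special points) boundary. One must also rule out that the continued $f$ vanishes or has poles at unexpected points of $\mathbb{D}$: here the key point is that $\mathsf{S}_{S^c}$ is \emph{injective} on $S^c$ in the appropriate sense — it takes each value $\bar\zeta$ with $\zeta\in\partial S$ once — so the preimages under $1/\bar{(\cdot)}\circ f$ of the poles of $\mathsf{S}_{S^c}$ are exactly $\{0,a\}$. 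A secondary technical point is checking that $a\in(0,1)$ (rather than $a$ landing on or outside the unit circle), which follows because $p\in S^c$ and $p\ne\infty$, so its $f$-preimage $1/a$ lies in $\bar{\mathbb{D}}^c$ and is finite, i.e.\ $0<a<1$; degenerate coincidences $a=0$ or $a=1$ are excluded since they would correspond respectively to $p=\infty$ or to the boundary case $D\subset\partial E$ tangentially, i.e.\ the Regime I/II interface, which is excluded by the assumption that $S$ is simply connected (and not doubly connected).
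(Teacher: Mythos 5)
Your overall strategy matches the paper's: the droplet complement is a quadrature domain, the Schwarz function has a simple pole at $p$ and a $\tau\zeta$ term at infinity, and Schwarz reflection through $\partial\mathbb{D}$ produces a rational $f$ with simple poles only at $0$ and $a=1/F(p)$; the three algebraic relations come from matching residues and normalizing the area, and the range of $\kappa$ is cut down using Proposition~\ref{prop_univalence} together with $c\ge 0$ in \eqref{eq:sc3}. The paper skips the reflection-from-scratch argument by directly invoking Theorem~\ref{thm_S^c=QD} and Remark~\ref{rem_boundary of simply connected QD}, which already give that $f$ is a rational univalent map, but that is a presentational difference, not a different route.

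However, there is a genuine gap in your treatment of the sign of $a$. You claim $a\in(0,1)$ ``follows because $p\in S^c$ and $p\ne\infty$, so its $f$-preimage $1/a$ lies in $\bar{\mathbb{D}}^c$ and is finite, i.e.\ $0<a<1$.'' That inference is incorrect: $1/a\in\bar{\mathbb{D}}^c$ finite, together with reality of $a$ (forced by the $\zeta\mapsto\bar\zeta$ symmetry of $S$), gives only $a\in(-1,0)\cup(0,1)$. Nothing in the quadrature-domain or reflection argument pins the sign of $a$; indeed the paper explicitly records only $a\in(-1,0)\cup(0,1)$ in this proof and \emph{postpones} the proof of $a>0$ to Remark~\ref{Rem_positivity a}. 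Ruling out $a\in(-1,0)$ requires the finer bound $\kappa\le(1-\tau)(1-a^2)$ that only emerges from the analysis of $H(a,\kappa)$ in the variational step (Proposition~\ref{prop_simply inequality}); once that bound is available, \eqref{eq:sc4} forces $p<0$ whenever $a<0$, contradicting $p\ge 0$. Your ``degenerate coincidences $a=0$ or $a=1$'' discussion only rules out endpoints, not the interval $(-1,0)$.

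A smaller imprecision: to get $\kappa\ge 0$ from $c\ge 0$ and \eqref{eq:sc3}, it is not true that the bracket $\tfrac{1-\tau a^2}{a^2}+\tfrac{\kappa}{(1-a^2)^2}$ is ``forced positive in the relevant range'' a priori. The correct argument (as in the paper) is that $c\ge 0$ forces either $\kappa\ge 0$ or $\kappa\le -\tfrac{(1-\tau a^2)(1-a^2)^2}{a^2}$, and the second alternative is then excluded because it lies strictly below $\kappa_{\rm min}$, which univalence (Proposition~\ref{prop_univalence}) already rules out. Your conclusion is right, but the justification as written is circular.
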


Before presenting the proof, we first comment on the positivity of the parameter \( a \). Intuitively, this may seem clear since the point \( p \), where the point charge is inserted, is non-negative, meaning that the droplet leans towards the left half-plane. However, a rigorous proof requires a finer upper bound on \( \kappa \), which is more naturally discussed after Proposition~\ref{prop_simply inequality}. Therefore, we postpone the proof of $a>0$ to the end of Section~\ref{subsection_simply connected variational}, see Remark~\ref{Rem_positivity a}.

\begin{proof}[Proof of Proposition~\ref{prop:a priori}]
By Theorem~\ref{thm_S^c=QD}, \( S^c \) is a simply connected, unbounded quadrature domain with quadrature function \( \mathsf{h} \) given by~\eqref{eq:quad h}. Since \( \mathsf{h} \) has a pole at \( p \), it follows that \( p \) does not belong to \( S \). 
Furthermore, there exists a rational conformal map that conformally maps $\bar{\mathbb{D}}^c$ onto $S^c$, cf. Remark~\ref{rem_boundary of simply connected QD}. Therefore, there is a unique rational function $f$ such that $f=R \, z +O(1)$ as $z\to \infty$ where $R>0$.  We denote by $\mathsf{S}\equiv \mathsf{S}_{S^c}$ the Schwarz function of $S^c$. Then by~\eqref{Schwarz-QD}, we have
\begin{align*}
    \mathsf{S}(\zeta) = C_S(\zeta) + \tau\zeta + \frac{c(1-\tau^2)}{\zeta-p}, \qquad \zeta \in (\interior S)^c.
\end{align*}
Recall here that the Cauchy transform \(C_S\) is defined by \eqref{eq_Cauchy transform}.  
Notice here that for $z\in \partial \mathbb{D}$, by \eqref{def of Schwarz ftn}, we have 
\begin{align}\label{eq:conformal inversion}
    f(z) = f(1/\bar{z}) = \overline{f(1/z)} = C_S(f(1/z)) + \tau f(1/z) + \frac{c(1-\tau^2)}{f(1/z)-p}.
\end{align}
Observe that the last expression is meromorphic with respect to \( z \) on \( \mathbb{D} \), with at most two poles at \( z = 0 \) and \( z = a \), where \( f(1/a) = p \). Note that such an \( a \in \mathbb{D} \setminus \{0\} \) exists uniquely since \( f \) conformally maps \( \bar{\mathbb{D}}^c \) onto \( S^c \). Moreover, due to the univalence of $f$, these poles are simple.  
Therefore, the conformal map \( f \) is of the form 
\begin{align} \label{def of f ansatz}
    f(z) = R\Big(z+ r_1 + \frac{r_2}{z} +\frac{r_3}{z-a}\Big),
\end{align}
where $r_1, r_2, r_3 \in \mathbb{R}$ and $a \in (-1,0)\cup (0,1)$ by the symmetry of the droplet along the real axis. 

Next, we determine the coefficients $r_1,r_2$, and $r_3$. Then we verify that the parameters satisfy the algebraic equations~\eqref{eq:sc2}, \eqref{eq:sc3}, and \eqref{eq:sc4}.  
To establish this, we compute the asymptotics of each term in~\eqref{eq:conformal inversion} as \( z \to 0 \) and \( z \to a \).
We first notice from \eqref{def of f ansatz} that  
\begin{align*}
\begin{split}
    f(z) = \begin{cases}
\displaystyle \frac{Rr_2}{z}+R\Big(r_1-\frac{r_3}{a^2}\Big)+R\Big(1-\frac{r_3}{a^2}\Big)z +O(z^2), &\textup{as } z\to 0,
\smallskip 
\\
\displaystyle \frac{Rr_3}{z-a}+O(1),  &\textup{as } z\to a.
\end{cases}
\end{split}
\end{align*}
Consequently, it follows from straightforward computations that 
\begin{align*}
f(1/z) & = \begin{cases}
\displaystyle \frac{R}{z}+Rr_1+R(r_2+r_3)z+O(z^2), &\textup{as } z\to 0,
\smallskip 
\\
O(1),  &\textup{as } z\to a,
\end{cases}
\\
\frac{1}{f(1/z)-p} &= \begin{cases}
\displaystyle \frac{z}{R} +O(z^2), &\textup{as } z\to 0,
\smallskip 
\\
\displaystyle -\frac{1}{R}\Big(\frac{1-\tau a^2}{a^2}-\frac{r_3}{(1-a^2)^2}\Big)^{-1}\frac{1}{z-a}+O(1),  &\textup{as } z\to a.
\end{cases}
\end{align*}
Note that by definition \eqref{eq_Cauchy transform}, we have 
\begin{align*}
    C_S(\zeta) =   \frac{1-\tau^2}{\zeta}+O(1/\zeta^2), \qquad \zeta \to \infty,
\end{align*}
where we have used $\text{area }S = \pi(1-\tau^2)$, cf. \eqref{eq:eqmeasure}. 
Then it follows that 
\begin{align*}
C_S(f(1/z)) = \begin{cases}
\displaystyle \frac{1-\tau^2}{R}z +O(z^2), &\textup{as } z\to 0,
\smallskip 
\\
\displaystyle O(1),  &\textup{as } z\to a.
\end{cases}
\end{align*} 
In consistency with~\eqref{eq_simply conformal map}, let \( r_3=-\kappa \). Then, by comparing the asymptotic behavior of both sides of~\eqref{eq:conformal inversion} as \( z \to 0 \), we obtain  
\begin{align}
    r_1 = -\frac{\kappa}{a(1-\tau)},\qquad r_2 = \tau
\end{align}
and
\begin{align}\label{eq:sc5}
    (1+c)(1-\tau^2) = R^2\Big(1-\tau^2+\frac{1+\tau a^2}{a^2}\kappa\Big).
\end{align}
Therefore, we have shown that \( f \) is of the form~\eqref{eq_simply conformal map}. 
By comparing the asymptotic behaviour of both sides of~\eqref{eq:conformal inversion} as \( z \to a \), we obtain~\eqref{eq:sc3}. Combining~\eqref{eq:sc3} and~\eqref{eq:sc5} leads to~\eqref{eq:sc2}. Finally, the condition \( f(1/a) = p \) yields~\eqref{eq:sc4}.

Now we show that \( \kappa \in [0, \kappa_{\rm max}] \). Since \( f \) is assumed to be univalent on \( \bar{\mathbb{D}}^c \), Proposition~\ref{prop_univalence} ensures that \( \kappa \in [\kappa_{\rm min}, \kappa_{\rm max}] \). 
Note that by \eqref{eq:sc3} and the requirement $c\geq 0$, we have  
\begin{equation*}
    \kappa \geq 0 \quad \text{ or }\quad \kappa \leq -\frac{1-\tau a^2}{a^2}(1-a^2)^2.
\end{equation*}
On the other hand, by \eqref{eq_kappa max} and the fact that \( \kappa > \kappa_{\min} \), the second range is not possible, which leads to \( \kappa \in [0, \kappa_{\rm max}] \). 
\end{proof}

\subsection{Variational conditions}
\label{subsection_simply connected variational}

Recall that \( K \) is a simply connected domain enclosed by the image of the unit circle under the rational map~\eqref{eq_simply conformal map}, where \( \kappa \in [0, \kappa_{\rm max}] \), and that $\mu_K$ is given by \eqref{def of mu K ansatz}.  
Our goal is to show that the ansatz $\mu_K$ is indeed the equilibrium measure associated with the potential $Q$. In other words, we aim to prove that  
\begin{equation}
K= S \qquad \textup{if } \kappa \in [0, \kappa_{\rm cri}). 
\end{equation}  
To establish this, we must exclude the range $\kappa \in [\kappa_{\rm cri}, \kappa_{\rm max}]$. As previously mentioned, this follows from verifying the variational conditions~\eqref{eq:variational}.  
In Proposition~\ref{prop_simply equality}, we prove that the equality part of~\eqref{eq:variational} holds for $\mu_Q = \mu_K$ for all $\kappa \in [0, \kappa_{\rm max}]$. On the other hand, in Proposition~\ref{prop_simply inequality}, we establish that the inequality part of~\eqref{eq:variational} holds if and only if $\kappa \in [0, \kappa_{\rm cri})$. By the uniqueness of the equilibrium measure, Propositions~\ref{prop_simply equality} and~\ref{prop_simply inequality} fully characterise the simply connected case.

Note that the case \( \tau = 0 \) was already addressed in~\cite{BBLM15} (see also Remark~\ref{Rem_droplet extremal}). Therefore, we focus on the case \( \tau \in (0,1) \), which simplifies certain aspects of the presentation.

Throughout this subsection, we assume that \( \tau \in (0,1) \), \( a \in (0,1) \), and \( \kappa \in [0, \kappa_{\rm max}] \), with \( c \) and \( p \) given by~\eqref{eq_simply connected c} and~\eqref{eq_simply connected p}.  
We first discuss that $\mu_K$ is indeed a probability measure.

\begin{lem} \label{Lem_total mass 1}
The measure \( \mu_K \) in \eqref{def of mu K ansatz} has a total mass of 1.
\end{lem}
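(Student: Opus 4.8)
\textbf{Proof plan for Lemma~\ref{Lem_total mass 1}.}

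The plan is to compute the area of $K$ directly using the conformal map $f$ in \eqref{eq_simply conformal map} and Green's formula, and then show that $\mathrm{area}(K) = \pi(1-\tau^2)$, so that $\mu_K$ has total mass $\frac{1}{1-\tau^2}\cdot\frac{\mathrm{area}(K)}{\pi} = 1$. Since $K$ is the closure of the interior of the Jordan curve $f(\partial\mathbb{D})$ and $f$ maps $\bar{\mathbb{D}}^c$ onto the exterior (at least when $\kappa\in[\kappa_{\rm min},\kappa_{\rm max}]$, the univalence range from Proposition~\ref{prop_univalence}), the area of $K$ can be expressed as a contour integral
\begin{align*}
    \mathrm{area}(K) = \frac{1}{2i}\int_{\partial K} \bar{\zeta}\, \ud\zeta = -\frac{1}{2i}\int_{|z|=1} \overline{f(z)}\, f'(z)\, \ud z,
\end{align*}
where the sign and orientation come from the fact that $f(\partial\mathbb{D})$ is traversed clockwise as $\partial\mathbb{D}$ is traversed counterclockwise (this is the standard feature of exterior conformal maps, cf. the Joukowsky example (c) in Section~\ref{Section_topology}). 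On $|z|=1$ we have $\bar z = 1/z$, so $\overline{f(z)} = \overline{f(1/\bar z)}$; since all coefficients $R, \tau, \kappa, a$ are real, $\overline{f(z)} = f(1/z)$ for $|z|=1$, reducing the computation to a residue calculation for the rational function $z\mapsto f(1/z) f'(z)$.

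The key steps are: (1) write $f(z) = R(z + \tau/z - \kappa/(z-a) - \kappa/(a(1-\tau)))$ and compute $f'(z) = R(1 - \tau/z^2 + \kappa/(z-a)^2)$; (2) form the product $f(1/z)f'(z)$, a rational function of $z$ whose only poles inside $|z|<1$ are at $z=0$ (from $f'$ having a $1/z^2$ term and from $f(1/z)$ having a pole structure) and possibly at $z=a$; (3) sum the residues at these poles. Concretely, $f(1/z) = R(1/z + \tau z - \kappa/(1/z - a) - \kappa/(a(1-\tau))) = R(1/z + \tau z + \kappa z/(az-1) - \kappa/(a(1-\tau)))$, which is regular at $z=a$ since $a\in(0,1)$ means $az-1\neq 0$ there — wait, at $z=a$ we get $a^2-1\neq 0$, so indeed $f(1/z)$ is regular at $z=a$; thus the only pole of $f(1/z)f'(z)$ inside the disc is at $z=0$. (4) Extract the residue at $z=0$ of $f(1/z)f'(z)$: this is a finite Laurent-coefficient computation, and after multiplying by $R^2$ and collecting terms one should land on an expression that, upon invoking the constraint equation \eqref{eq:sc2} — namely $1 = \frac{R^2}{1-\tau^2}(1-\tau^2 + 2\tau\kappa - \kappa^2/(1-a^2)^2)$ — simplifies to exactly $\pi(1-\tau^2)$ after accounting for the $1/\pi$ in $\ud A$ and the $1/(1-\tau^2)$ prefactor in $\mu_K$.

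I expect the main obstacle to be the bookkeeping in step (4): the residue of $f(1/z)f'(z)$ at $z=0$ involves cross terms between the $1/z$, $\tau z$, and $\kappa z/(az-1)$ pieces of $f(1/z)$ against the $1$, $-\tau/z^2$, and $\kappa/(z-a)^2$ pieces of $f'(z)/R$, and one must carefully expand $\kappa/(z-a)^2 = \frac{\kappa}{a^2}(1 + 2z/a + \cdots)$ and $\kappa z/(az-1) = -\kappa z(1 + az + a^2z^2 + \cdots)$ to the correct order. The point is that only the $1/z$-coefficient of the product survives after integration, so one needs the expansions to just high enough order; still, there are enough terms that a sign error is easy. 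The payoff is that the final identity is precisely \eqref{eq:sc2}, which is already part of the hypotheses defining $(R,a,\kappa)$, so no new input is needed — the lemma is essentially a consistency check that \eqref{eq:sc2} was set up to encode ``total mass one.'' As a sanity check, one can verify the $a\to 1$, $\kappa\to 0$ degeneration against the Joukowsky case in Remark~\ref{rem_doubly simply intersection}, where $R^2\to 1+c$ and the area of the ellipse $f(\partial\mathbb{D})$ is $\pi R^2(1-\tau^2)$, consistent with the doubly connected normalisation $\mathrm{area}(E) = \pi(1+c)(1-\tau^2)$ minus $\mathrm{area}(D) = \pi c(1-\tau^2)$.
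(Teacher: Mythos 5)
Your overall strategy is the paper's: apply Green's formula, pull back to $\partial\mathbb{D}$ via $\bar\zeta = f(1/z)$ on $|z|=1$, and compute residues. However, there is a genuine gap in step (3) that would produce the wrong answer.

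You conclude that ``the only pole of $f(1/z)f'(z)$ inside the disc is at $z=0$'' by checking that $f(1/z)$ is regular at $z=a$. This is correct for $f(1/z)$, but you have overlooked that $f'(z) = R\bigl(1 - \tau/z^2 + \kappa/(z-a)^2\bigr)$ has a \emph{double pole at $z=a$}, so the product $f(1/z)f'(z)$ does have a pole at $z=a$. Its residue equals $-R\kappa f'(1/a)/a^2$, and using \eqref{eq:sc3} (which you never invoke) this evaluates to $-c(1-\tau^2)$. Dropping this term is fatal: the residue at $z=0$ alone, after applying \eqref{eq:sc2} (equivalently \eqref{eq:sc5}), gives $(1+c)(1-\tau^2)$, which is the normalized area of the outer ellipse $E$ in \eqref{def of E ellipse}, not of $K$. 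Only by summing both residues, $(1+c)(1-\tau^2) - c(1-\tau^2) = 1-\tau^2$, do you obtain $\mathrm{area}(K) = \pi(1-\tau^2)$. This also means the lemma is not merely a consequence of \eqref{eq:sc2}: it encodes \eqref{eq:sc3} as well.

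A secondary point: the minus sign you insert in the change of variables is incorrect. Since $f$ maps $\bar{\mathbb{D}}^c$ conformally onto $K^c$ with $f(\infty)=\infty$, the map is orientation-preserving, and as $z$ traverses $\partial\mathbb{D}$ counterclockwise the image $f(z)$ traverses $\partial K$ counterclockwise as the positively oriented boundary of $K$. (Check this on the Joukowsky map: $f(e^{i\theta}) = (1+\tau)\cos\theta + i(1-\tau)\sin\theta$ winds counterclockwise as $\theta$ increases.) The correct identity is $\mathrm{area}(K) = \frac{1}{2i}\int_{\partial\mathbb{D}} f(1/z)f'(z)\,\mathrm{d}z$ with no sign change, exactly as in the paper. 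With your spurious minus sign and your omission of the $z=a$ residue, the two errors do not cancel --- you would get a negative area.
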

\begin{proof}
It suffices to show that the area of $K$ is $\pi(1-\tau^2)$. By using Green's formula, we have 
\begin{align*}
    \frac{ \text{area }K }{ \pi }=  \frac{1}{2\pi i } \int_{\partial K} \bar{\zeta} \ud \zeta = \frac{1}{2 \pi i }\int_{\partial \mathbb{D}} f(1/z)f'(z) \ud z 
    =  \underset{ z=0 }{\textrm{Res}}  \Big[f(1/z)f'(z) \Big]+ 
 \underset{ z=a }{\textrm{Res}}  \Big[f(1/z)f'(z) \Big].
\end{align*}
Here, residue calculus using \eqref{eq_simply conformal map}, \eqref{eq:sc2}, and \eqref{eq:sc3} gives that
\begin{align*}
   \underset{ z=0 }{\textrm{Res}}  \Big[f(1/z)f'(z) \Big] = (1+c)(1-\tau^2), \qquad 
     \underset{ z=a }{\textrm{Res}}  \Big[f(1/z)f'(z) \Big] = -c(1-\tau^2).
\end{align*}
This completes the proof. 
\end{proof}

We define \( \mathcal{U}_K: \mathbb{C} \to (-\infty, \infty] \) by
\begin{align}\label{def of functional US}
    \mathcal{U}_K(\zeta) = \frac{1}{1-\tau^2}\int_{K} \log\frac{1}{|\zeta-\eta|^2}\ud A(\eta) + Q(\zeta).
\end{align}
This is the left hand side of \eqref{eq:variational}, up to multiplicative constant. 
Notice that \( \mathcal{U}_K \) is a continuous function that diverges to infinity as \( |\zeta| \to \infty \) and at \( \zeta = p \) if $c>0$.

\begin{prop}\label{prop_simply equality}
For $\zeta \in K$, we have $\mathcal{U}_K(\zeta)=\ell_K$ for some constant $\ell_K$. 
\end{prop}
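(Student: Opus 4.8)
The plan is to verify the equality part of the Euler--Lagrange condition \eqref{eq:variational} by differentiating the potential $\mathcal{U}_K$ and showing the derivative vanishes inside $K$. Concretely, since $\mathcal{U}_K$ is real-valued, it suffices to show $\partial_\zeta \mathcal{U}_K(\zeta)=0$ for $\zeta$ in the interior of $K$; combined with connectedness of $K$ this gives $\mathcal{U}_K \equiv \ell_K$ on $K$ for some constant $\ell_K$. From \eqref{def of functional US} and \eqref{eq:potential}, writing $\Delta$-potentials explicitly, one has
\begin{align*}
\partial_\zeta \mathcal{U}_K(\zeta) = -\frac{1}{1-\tau^2}\,\overline{C_K(\zeta)} + \frac{1}{1-\tau^2}\Big(\bar\zeta - \tau\zeta\Big) - \frac{c}{\zeta-p},
\end{align*}
where $C_K$ is the Cauchy transform of $K$ (using that $\partial_\zeta \int_K \log|\zeta-\eta|^{-2}\,dA(\eta) = -C_K(\zeta)$ and, for $\zeta\in K$, the companion identity $\partial_\zeta \int_K \log|\zeta-\eta|^{-2}\,dA(\eta) $ has antiholomorphic part $-\overline{C_{K^c}}$... ) — more cleanly, I will use the standard fact that for $\zeta$ in the interior of $K$, $\bar\partial_\zeta \mathcal U_K = 0$ automatically by $\Delta \mathcal U_K = 0$ there, so only $\partial_\zeta \mathcal U_K = 0$ must be checked, and this reduces to the identity $\overline{C_K(\zeta)} = \bar\zeta - \tau\zeta - c(1-\tau^2)/(\zeta-p)$ for $\zeta\in \mathrm{Int}\,K$; equivalently, taking conjugates and using that $\zeta \mapsto \mathsf{S}(\zeta)$ is the Schwarz function, this is exactly the relation $\mathsf{S}_{K^c}(\zeta) = C_K(\zeta) + \tau\zeta + c(1-\tau^2)/(\zeta-p)$ read on $K^c$, which is precisely \eqref{Schwarz-QD} applied to the quadrature domain $K^c$.

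The key computational step is therefore to evaluate the Cauchy transform $C_K(\zeta)$ for $\zeta \in \mathrm{Int}\,K$ using the conformal map $f$. By Green's formula, $C_K(\zeta) = \frac{1}{2\pi i}\int_{\partial K} \frac{\bar\eta}{\zeta-\eta}\,d\eta = \frac{1}{2\pi i}\int_{\partial \mathbb D} \frac{f(1/z)\,f'(z)}{\zeta - f(z)}\,dz$, and since for $\zeta\in\mathrm{Int}\,K$ the point $z=F(\zeta)$ with $|F(\zeta)|<1$ is a pole of the integrand coming from the factor $\zeta - f(z)$, the residue theorem picks up contributions from $z=0$, $z=a$, and $z=F(\zeta)$. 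The residue at $z=F(\zeta)$ contributes $f(1/F(\zeta))$, which is $\mathsf{S}_{K^c}(\zeta)$; the residues at $z=0$ and $z=a$ are rational in $\zeta$ and, after invoking the algebraic relations \eqref{eq:sc2}, \eqref{eq:sc3}, \eqref{eq:sc4}, must collapse to $-\tau\zeta - c(1-\tau^2)/(\zeta-p)$ (up to the conjugation bookkeeping). Matching these via the defining equations for $(R,a,\kappa)$ is where the identities \eqref{eq:sc2}--\eqref{eq:sc4} are used in an essential way — this is the analogue of Lemma~\ref{Lem_QD ftn for E D} for the simply connected case.

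The main obstacle I anticipate is the bookkeeping in the residue computation: $f(1/z)$ has its own poles at $z=0$ and $z=a$ (since $f(1/z)$ blows up there), $f'(z)$ has a double pole at $z=a$ and contributes at $z=0$ as well, and these interact with the simple pole at $z=F(\zeta)$, so one must carefully expand $f(1/z)f'(z)/(\zeta-f(z))$ near each singularity. The cleanest route is probably to compute $C_K$ first as a rational function of $\zeta$ plus the Schwarz-function term by splitting $\frac{1}{\zeta-f(z)}$ appropriately, then substitute \eqref{eq:sc2}, \eqref{eq:sc3}, \eqref{eq:sc4} to identify the rational part with $\tau\zeta + c(1-\tau^2)/(\zeta-p)$. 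Once $\partial_\zeta \mathcal U_K \equiv 0$ on $\mathrm{Int}\,K$ is established, continuity of $\mathcal U_K$ up to $\partial K$ (clear from \eqref{def of functional US} since the logarithmic potential of a bounded density is continuous) extends the constancy to all of $K$, yielding the constant $\ell_K$ and completing the proof.
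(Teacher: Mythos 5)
Your strategy — reduce to $\partial_\zeta \mathcal U_K = 0$ on $\interior K$ and compute via Green's formula / residues on the $z$-plane — is the paper's strategy. But the pole analysis you set up is wrong in a way that removes exactly the hard part of the proof.

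For $\zeta \in \interior K$ there is no point ``$z = F(\zeta)$ with $|F(\zeta)|<1$.'' The map $F$ is the conformal inverse of $f : \bar{\mathbb D}^c \to K^c$, so it maps $K^c$ into $\bar{\mathbb D}^c$; it is undefined on $\interior K$, and whenever it is defined its modulus exceeds $1$. Since $f$ is a degree-$3$ rational map, the equation $f(z)=\zeta$ has three roots $z_\zeta^{(1)}, z_\zeta^{(2)}, z_\zeta^{(3)}$ counted with multiplicity, and for $\zeta \in \interior K$ \emph{all three} lie inside $\mathbb D$. The residue computation therefore collects five contributions: $z=0$, $z=a$, and the three $z_\zeta^{(j)}$. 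Your proposal accounts for three of these (one of which doesn't exist), so the bookkeeping you call the ``main obstacle'' is in fact a structural omission, not just algebraic tedium.

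This also breaks the identification you rely on. The Schwarz function $\mathsf S_{K^c}(\zeta)=f(1/F(\zeta))$ is defined only on $\bar{K^c}$, so ``the residue at $z=F(\zeta)$ contributes $\mathsf S_{K^c}(\zeta)$'' has no meaning for $\zeta \in \interior K$. And the claim that the residues at $z=0$ and $z=a$, via \eqref{eq:sc2}--\eqref{eq:sc4}, ``collapse to $-\tau\zeta - c(1-\tau^2)/(\zeta-p)$'' is wrong: those residues are $-\zeta/\tau$ and $-p$ respectively (both elementary), and the crucial $c(1-\tau^2)/(\zeta-p)$ term actually emerges from $\sum_{j=1}^3 f(1/z_\zeta^{(j)})$, which the paper evaluates by expanding $f$ and applying Vieta's formulas from the cubic \eqref{eq:f cubic equation} together with \eqref{eq:sc3}. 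That symmetric-function computation is the substance of the proof; your outline sidesteps it by assuming a single preimage. (A minor additional slip: Green's formula yields $C_K(\zeta) = \bar\zeta + \tfrac1{2\pi i}\int_{\partial K}\tfrac{\bar\eta}{\zeta-\eta}\,d\eta$ for $\zeta\in\interior K$; you dropped the $\bar\zeta$. It happens to cancel the $\bar\zeta$ coming from $Q$, so this doesn't affect the endpoint, but it should be tracked.)

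An alternative route that does avoid the three-root residue sum would be: (i) verify the quadrature function of $K^c$ equals $\mathsf h$ by matching principal parts of $\mathsf S_{K^c}$ at $p$ and $\infty$ using \eqref{eq:sc2}--\eqref{eq:sc4}, which is a computation on $K^c$ where $F$ is defined; (ii) deduce $C_K(\zeta)=\bar\zeta-\tau\zeta-c(1-\tau^2)/(\zeta-p)$ on $\partial K$ from \eqref{Schwarz-QD} and $\mathsf S=\bar\zeta$ on $\partial K$; (iii) note that the difference of the two sides is holomorphic on $\interior K$ (both have $\bar\partial=1$ there) and continuous up to $\partial K$ where it vanishes, hence it is identically zero on $K$. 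If you want to pursue the Schwarz-function angle you hinted at, this is how to make it rigorous — but it is a different argument from the one you wrote, and step (i) still requires the algebraic work you were hoping to bypass.
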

 
By Proposition~\ref{prop_simply equality}, we can take  
\begin{equation} \label{def of lK Robin 2}
\ell_K =  \mathcal{U}_K(f(1)). 
\end{equation} 
Notice that once \( K \) is proven to be the droplet, this value coincides with twice the Robin's constant.

\begin{proof}[Proof of Proposition~\ref{prop_simply equality}]
Since $K$ is a simply connected subset of $\mathbb{C}$, it is enough to show that the derivative $\partial_\zeta \mathcal{U}_K(\zeta)$ vanishes in the interior of $K$. Applying Green's formula and change of variables $\zeta =f(z)$, we have 
\begin{align}\label{eq_deriv of functional US}
\begin{split}
    (1-\tau^2)\partial \mathcal{U}_K(\zeta) &= -\int_K \frac{1}{\zeta-\eta}\ud A(\eta) +  \bar{\zeta}- \tau \zeta - \frac{c(1-\tau^2)}{\zeta-p} 
    \\
    &= -\frac{1}{2\pi i}\int_{\partial K}\frac{\bar{\eta}}{\zeta-\eta} \ud\eta - \bar{\zeta} \cdot \mathbbm{1}_{ \{\zeta \in \interior K\} } + \bar{\zeta}- \tau \zeta - \frac{c(1-\tau^2)}{\zeta-p} 
    \\
    &= \frac{1}{2\pi i}\int_{\partial \mathbb{D}}\frac{f(1/z)f'(z)}{f(z)-\zeta} \ud z + \bar{\zeta} \cdot \mathbbm{1}_{ \{ \zeta \notin \interior K \} } - \tau \zeta - \frac{c(1-\tau^2)}{\zeta-p}.
\end{split}
\end{align}
Notice that by \eqref{eq_simply conformal map}, the integrand in the line contains its poles in $\mathbb{D}$ at $0, a$, and $f^{-1}(\zeta)\cap \mathbb{D}$.
Since the equation $f(z)=\zeta$ is equivalent to
\begin{equation}\label{eq:f cubic equation}
    z^3 - \Big(a+\frac{\kappa}{a(1-\tau)} + \frac{\zeta}{R}\Big)z^2 + \Big(\tau + \frac{\tau\kappa}{1-\tau} + \frac{a\zeta}{R}\Big)z -\tau a=0, 
\end{equation}
for given $\zeta \in \mathbb{C}$, there exist $z_\zeta^{(j)}$ $(j=1,2,3)$ such that $f(z_\zeta^{(j)}) = \zeta$. 

If \( \zeta \in \interior K \), since \( f \) is a conformal mapping from \( \bar{\mathbb{D}}^c \) to \( K^c \), all \( z_\zeta^{(j)} \) are contained in \( \mathbb{D} \).
Then for $\zeta \in \interior K$, we have 
\begin{equation} \label{integral residue 0 a 123}
 \frac{1}{2\pi i}\int_{\partial \mathbb{D}}\frac{f(1/z)f'(z)}{f(z)-\zeta} \ud z =  \underset{ z=0 }{\textrm{Res}}  \Big[  \frac{f(1/z)f'(z)}{f(z)-\zeta} \Big]+ \underset{ z=a }{\textrm{Res}}  \Big[  \frac{f(1/z)f'(z)}{f(z)-\zeta} \Big] + \sum_{j=1}^3  \underset{ z=z_\zeta^{(j)} }{\textrm{Res}}  \Big[ \frac{f(1/z)f'(z)}{f(z)-\zeta} \Big]. 
\end{equation}
By straightforward computations, we have
\begin{align*}
\frac{f(1/z)f'(z)}{f(z)-\zeta} = \begin{cases}
\displaystyle -\frac{R}{z^2} -\frac{\zeta}{\tau}\frac{1}{z}+ O(1), &\textup{as } z\to 0,
\smallskip 
\\
\displaystyle -\frac{p}{z-a}+ O(1),  &\textup{as } z\to a, 
\end{cases}
\end{align*} 
which leads to 
\begin{align} \label{residues at 0 a}
   \underset{ z=0 }{\textrm{Res}} \Big[ \frac{f(1/z)f'(z)}{f(z)-\zeta} \Big] = -\frac{\zeta}{\tau}, \qquad  \underset{ z=a }{\textrm{Res}} \Big[  \frac{f(1/z)f'(z)}{f(z)-\zeta} \Big] = -p.
\end{align}
Using \eqref{eq:f cubic equation}, observe that 
\begin{align}\label{eq:rootsrelation}
\begin{split}
    z_\zeta^{(1)} + z_\zeta^{(2)} + z_\zeta^{(3)} &= a+\frac{\kappa}{a(1-\tau)} + \frac{\zeta}{R}, 
    \\
    z_\zeta^{(1)}z_\zeta^{(2)}+z_\zeta^{(2)}z_\zeta^{(3)}+ z_\zeta^{(3)}z_\zeta^{(1)} &= \tau + \frac{\tau\kappa}{1-\tau}+\frac{a\zeta}{R},
\end{split}
\begin{split}
    z_\zeta^{(1)}z_\zeta^{(2)}z_\zeta^{(3)} &= \tau a,\\
    \frac{1}{z_\zeta^{(1)}}+ \frac{1}{z_\zeta^{(2)}}+\frac{1}{z_\zeta^{(3)}}&= \frac{1}{a}+\frac{\kappa}{a(1-\tau)}+ \frac{\zeta}{\tau R}.
\end{split}
\end{align}
Notice also that 
\begin{equation}
 \underset{ z=z_\zeta^{(j)} }{\textrm{Res}}  \Big[ \frac{f(1/z)f'(z)}{f(z)-\zeta} \Big] = f(1/z_\zeta^{(j)}). 
\end{equation}
By using \eqref{eq_simply conformal map} and \eqref{eq:rootsrelation}, we have
\begin{align*}
    &\quad \sum_{j=1}^3 f(1/z_\zeta^{(j)}) =R\sum_{j=1}^3\bigg( \frac{1}{z_\zeta^{(j)}} +\tau z_\zeta^{(j)}-\frac{\kappa z_\zeta^{(j)}}{1-az_\zeta^{(j)}}-\frac{\kappa}{a(1-\tau)}\bigg) \\
    &=R\bigg[ -\frac{3\tau \kappa}{a(1-\tau)}+ \sum_{j=1}^3 \bigg(\frac{1}{z_\zeta^{(j)}}+\tau z_\zeta^{(j)} - \frac{\kappa }{a(1-az_\zeta^{(j)})}\bigg) \bigg] = \Big(\tau+\frac{1}{\tau}\Big)\zeta + p + R \bigg(\frac{(2-a^2)\kappa}{a(1-a^2)}-\sum_{j=1}^3\frac{\kappa }{a(1-az_\zeta^{(j)})}\bigg).
\end{align*}
Furthermore, using~\eqref{eq:sc3} we have 
\begin{align*}
&\quad \frac{(2-a^2)\kappa}{a(1-a^2)}-\sum_{j=1}^3\frac{\kappa }{a(1-az_\zeta^{(j)})}  = \frac{ \kappa}{a\prod_{j=1}^3(1-az_\zeta^{(j)})}
\\
&\quad \times \bigg(\frac{-1+2a^2}{1-a^2}-\frac{a^3}{1-a^2}(z_\zeta^{(1)}+z_\zeta^{(2)}+z_\zeta^{(3)})+\frac{a^2}{1-a^2}(z_\zeta^{(1)}z_\zeta^{(2)}+z_\zeta^{(2)}z_\zeta^{(3)}+z_\zeta^{(3)}z_\zeta^{(1)})-\frac{a^3(2-a^2)}{1-a^2}z_\zeta^{(1)}z_\zeta^{(2)}z_\zeta^{(3)}\bigg)
\\
&= - \frac{Ra^2}{(1-a^2)(f(1/a)-\zeta)}\frac{(1-\tau a^2)(1-a^2)^2\kappa +a^2\kappa^2}{a^4(1-a^2)} =\frac{1}{R}\frac{c(1-\tau^2)}{\zeta-p}. 
\end{align*}
Combining all of the above, we obtain 
\begin{equation} \label{eq:rescomp}
 \sum_{j=1}^3 f(1/z_\zeta^{(j)}) =  \Big(\tau+\frac{1}{\tau}\Big)\zeta + p + \frac{c(1-\tau^2)}{\zeta-p}.
\end{equation} 
Therefore by using \eqref{integral residue 0 a 123}, \eqref{residues at 0 a} and \eqref{eq:rescomp}, we obtain $  \partial \mathcal{U}_K(\zeta)=0 $ for  $\zeta \in \interior K$. This completes the proof. 
\end{proof}

Recall that $\kappa_{ \rm cri } \in [0,\kappa_{\rm max}] $ is a unique zero  of $H(a,\cdot)$ in~\eqref{def of H(a,kappa)}.

\begin{prop}\label{prop_simply inequality}
Let \( \zeta \in K^c \). Then the inequality \(\mathcal{U}_K(\zeta) > \ell_K \) holds if and only if \( \kappa \in [0, \kappa_{\rm cri}) \). 
\end{prop}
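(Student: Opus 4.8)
The plan is to translate the inequality $\mathcal{U}_K(\zeta) \ge \ell_K$ for $\zeta \in K^c$ into a monotonicity statement along the relevant level curves, reducing it to a single-variable sign analysis that ultimately produces the function $H(a,\cdot)$. Since $\mathcal{U}_K$ is subharmonic on $K^c$ (because $\Delta \mathcal{U}_K = \Delta Q \ge 0$ there, $Q$ being subharmonic away from $p$ and $p \in K$ by Proposition~\ref{prop:a priori}), and $\mathcal{U}_K \to +\infty$ at $\infty$ and at $p$, the minimum principle tells us that $\mathcal{U}_K$ attains its infimum over $K^c$ on $\partial K$. By Proposition~\ref{prop_simply equality} and continuity, $\mathcal{U}_K \equiv \ell_K$ on $\partial K$, so the content of the proposition is precisely that $\mathcal{U}_K$ has no interior local minimum in $K^c$ strictly below $\ell_K$ — equivalently, that the critical points of $\mathcal{U}_K$ in $K^c$ do not dip below $\ell_K$. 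So the first step is: compute $\partial \mathcal{U}_K$ on $K^c$ using the residue computation already set up in the proof of Proposition~\ref{prop_simply equality}, but now keeping the extra term $\bar\zeta \cdot \mathbbm{1}_{\{\zeta \notin \interior K\}}$ that vanished in the interior case. This gives an explicit meromorphic-type expression for $\partial \mathcal{U}_K(\zeta)$ in $K^c$ whose zeros are the candidate critical points.

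The second step is to locate the critical points. By the symmetry of the configuration about the real axis, $\mathcal{U}_K$ is symmetric under $\zeta \mapsto \bar\zeta$, so $\partial_x \mathcal{U}_K$ vanishes on the real axis, and the critical points either lie on $\R$ or come in conjugate pairs. I would pull everything back to the $z$-plane via $\zeta = f(1/w)$ with $|w| \le 1$ (so that $\zeta$ ranges over $K$; for $\zeta \in K^c$ we use $\zeta = f(z)$, $|z|>1$) and show that the equation $\partial \mathcal{U}_K(\zeta) = 0$ becomes a polynomial equation in the appropriate uniformizing variable. The key point is that $f(1/z_\zeta^{(1)}) + f(1/z_\zeta^{(2)}) + f(1/z_\zeta^{(3)})$ from \eqref{eq:rescomp} now differs from the right-hand side of \eqref{eq_deriv of functional US} exactly by the $\bar\zeta$ correction, and collecting terms should give a low-degree polynomial (I expect degree governed by $d = 2$, so a cubic or quartic) whose real roots outside the curve are the critical points. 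Among these, the "new archipelago" is born at the real point where $f'$ fails to be injective — i.e., at a solution of $f'(x_0) = 0$ with $x_0 \in (-1, ?)$, or more precisely where the relevant branch of the inverse develops a double root. Tracking this degeneracy as $\kappa$ increases is what pins down $\kappa_{\rm cri}$, and the condition should be exactly $H(a, \kappa) = 0$.

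The third step is the sign analysis: having identified the critical point $\zeta_* = \zeta_*(a,\kappa) \in K^c \cap \R$ lying to the right of the droplet (this is the natural candidate, since the point charge at $p$ pushes particles apart and the instability forms on the far side), evaluate $g(a,\kappa) := \mathcal{U}_K(\zeta_*) - \ell_K$ and show $g(a,\kappa) > 0 \iff \kappa < \kappa_{\rm cri}$. One clean way: differentiate $g$ in $\kappa$ with $a$ fixed, using the envelope/Hadamard-type formula (the derivative of $\mathcal{U}_K$ with respect to a domain deformation is boundary data, which is controlled by Proposition~\ref{prop_simply equality}), to show $g$ is strictly decreasing in $\kappa$; then check the endpoint behaviour — $g > 0$ at $\kappa = 0$ (where $K$ degenerates toward the $c=0$ elliptic droplet, for which positivity is classical) and $g \le 0$ at $\kappa = \kappa_{\rm cri}$ by the defining property of $H(a,\cdot)$. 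Combined with Proposition~\ref{prop_simply equality}, this gives: $\mu_K$ satisfies both variational conditions iff $\kappa \in [0, \kappa_{\rm cri})$, hence $K = S$ in that range. Along the way one also records $\kappa_{\rm cri} \le \kappa_{\rm max}$, since beyond $\kappa_{\rm max}$ the map is not even univalent so $K$ is not well-defined as a Jordan domain.

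\textbf{Main obstacle.} The hard part will be the second and third steps together: identifying precisely which real critical point controls the inequality and showing that the sign of $\mathcal{U}_K - \ell_K$ at that point is governed by a \emph{single} monotone function $H(a,\cdot)$ rather than a messy sign pattern. In particular, one must rule out that some \emph{other} critical point (a conjugate pair off the real axis, or a second real critical point produced by the $1/(\zeta - p)$ term) violates the inequality before $\kappa$ reaches $\kappa_{\rm cri}$; this requires a careful global count of critical points — bounded using the degree-$2$ structure and the connectivity bound of Theorem~\ref{thm_S^c=QD} — and checking each remains above level $\ell_K$. The algebra here is heavy but, by the $\tau = 0$ precedent in \cite{BBLM15}, tractable; the genuinely new feature for $\tau > 0$ is that the would-be third component can detach, so the analysis must confirm that the first violation of the inequality is always a \emph{birth-of-a-cut} event on the real axis and identify its location with the zero of $H(a,\cdot)$.
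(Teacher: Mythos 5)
The opening step of your proposal contains a genuine error that would unravel the whole argument: you invoke a ``minimum principle'' for $\mathcal{U}_K$ on $K^c$, but $\mathcal{U}_K$ is \emph{sub}harmonic there (indeed $\bar\partial\partial\,\mathcal{U}_K = 1/(1-\tau^2) > 0$ off $p$), and subharmonic functions satisfy a \emph{maximum} principle, not a minimum principle. There is no general reason for a subharmonic function that blows up at $\infty$ and at $p$ and equals $\ell_K$ on $\partial K$ to stay above $\ell_K$ in the interior of $K^c$, and in fact the ``only if'' direction of the proposition asserts exactly that this fails once $\kappa\ge\kappa_{\rm cri}$: there the interior critical point $\zeta_*$ dips strictly below $\ell_K$. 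So if your claimed minimum principle were valid, it would prove too much. (You also place $p\in K$, but $p=f(1/a)$ with $1/a>1$ lies in $K^c$; that is precisely why $\mathcal{U}_K$ has a singularity \emph{outside} $K$, which is what makes the critical-point analysis nontrivial.) The paper instead only uses that a global minimum of $\mathcal{U}_K$ on $\C\setminus\{p\}$ exists, and then argues that every local minimum in $K^c$ lies above $\ell_K$; the nontrivial work is in Lemmas~\ref{lem:three crit} and~\ref{lem:critical points}, which classify all critical points of $\mathcal{U}_K$ in $K^c$ (one real point $\zeta_*$ for all $\kappa>0$, plus two conjugate non-real ones exactly when $\kappa\in(\kappa_1,\kappa_{\rm max})$), compute the Hessian at $\zeta_*$ to show it is a local minimum precisely for $\kappa>\kappa_1$, and invoke a mountain-pass argument on the annular region $K^c$ to rule out $\zeta_*$ being a local minimum for $\kappa\le\kappa_1$ and to rule out the non-real critical points being minima.

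Two further points of divergence. First, $\kappa_{\rm cri}$ is not defined by any degeneracy of $f'$ or by a double-root condition as you suggest; it is defined as the zero of $H(a,\cdot)$, which is (up to the positive factor $R^2/(1-\tau^2)$) the explicit evaluation of $\mathcal{U}_K(\zeta_*)-\ell_K$ via the integral formula~\eqref{eq_functional computation 2}. Second, your proposed monotonicity of $g(a,\kappa)=\mathcal{U}_K(\zeta_*)-\ell_K$ in $\kappa$ is not what the paper proves and is not obviously true: $H(a,\cdot)$ can rise before it falls. The paper instead shows $\partial_\kappa^3 H>0$ together with $\lim_{\kappa\to\infty}\partial_\kappa^2 H<0$, hence $\partial_\kappa^2 H<0$ on $(0,\infty)$, so $H(a,\cdot)$ is strictly \emph{concave}; concavity plus $H>0$ on $[0,\kappa_1]$ and $H(\kappa_{\rm max})\le 0$ yields a unique zero $\kappa_{\rm cri}\in(\kappa_1,\kappa_{\rm max}]$. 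Your sketch of the sign analysis would therefore need to be replaced by this concavity argument, and the ``envelope/Hadamard-type'' deformation formula you gesture at would have to be made precise to have a chance of producing even the weaker monotonicity you want. The appeal to the connectivity bound of Theorem~\ref{thm_S^c=QD} to count critical points is also misplaced; that theorem controls the topology of $S^c$ as a union of quadrature domains, not the zero set of $\bar\zeta-\mathsf{S}(\zeta)$, and the correct count comes from the explicit polar-coordinate analysis in Lemma~\ref{lem:three crit}.
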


We prove Proposition~\ref{prop_simply inequality} by breaking the argument into several steps. By definition \eqref{def of functional US}, it is evident that \( \mathcal{U}_K \) attains its global minimum on \( \mathbb{C} \setminus \{p\} \). Therefore, it suffices to show that any local minima outside \( K \) have values greater than \( \ell_K \). 

We first characterise the critical points of $\mathcal{U}_K$ located outside $K$. 
For $\zeta \in K^c$, let $z_\zeta^{(1)}=F(\zeta)$ where $F:K^c\to \bar{\mathbb{D}}^c$ is the conformal inverse of $f$. Then $z_\zeta^{(2)}$ and $z_\zeta^{(3)}$ are contained in $\mathbb{D}$. This in turn implies that for $\zeta \in K^c\setminus\{p\}$, by \eqref{eq_deriv of functional US} and Proposition~\ref{prop_simply equality}, we have  
\begin{align*}
 (1-\tau^2) \partial \mathcal{U}_K(\zeta) &= \bar{\zeta}-\tau\zeta -\frac{c(1-\tau^2)}{\zeta-p}+ \underset{ z=0 }{\textrm{Res}}  \Big[  \frac{f(1/z)f'(z)}{f(z)-\zeta} \Big]+ \underset{ z=a }{\textrm{Res}}  \Big[  \frac{f(1/z)f'(z)}{f(z)-\zeta} \Big] + \sum_{j=2}^3  \underset{ z=z_\zeta^{(j)} }{\textrm{Res}}  \Big[ \frac{f(1/z)f'(z)}{f(z)-\zeta} \Big]
 \\
 &= \bar{\zeta} - f(1/z_\zeta^{(1)}). 
\end{align*} 
Since the Schwarz function $\mathsf{S}(\zeta)\equiv \mathsf{S}_{K^c}(\zeta)$ of $K^c$ is given by 
\begin{equation} \label{Schwarz in terms of f F}
\mathsf{S}(\zeta) = f(1/F(\zeta)), 
\end{equation}
we have shown that 
\begin{align} \label{eq_functional US derivative}
    (1-\tau^2)\partial\mathcal{U}_K(\zeta) =
    \begin{cases}
        0 & \zeta \in \interior K,
        \smallskip 
        \\
        \bar{\zeta}- \mathsf{S}(\zeta) & \zeta \in K^c\setminus \{p\}.
    \end{cases}
\end{align}
Thus, the characterisation of critical points reduces to finding \( z \) such that
\begin{equation} \label{eqn for cri pts}
f(\bar{z}) = f(1/z), \qquad |z|>1. 
\end{equation}  
Define
\begin{align} \label{def of kappa1}
    \kappa_1 := \frac{(1-\tau)^2}{(1+\tau)}(1-a^2). 
\end{align}
Note that $\kappa_1 \le \kappa_{ \rm \max }$, where $\kappa_{ \rm max }$ is given by \eqref{eq_kappa max}. Here, notice that the equality holds when $\tau=0$. 

\begin{lem}\label{lem:three crit}
The critical points of \( \mathcal{U}_K \) that lie outside \( K \) are given as follows.
\begin{itemize}
    \item[\textup{(i)}] \textup{(Single critical point regime)} If \( \kappa \in (0, \kappa_{1}] \cup \{ \kappa_{\rm max } \}  \), there is exactly one real critical point located in the interval \( (p, \infty) \).
    \smallskip
    \item[\textup{(ii)}] \textup{(Three critical points regime)} If \( \kappa \in (\kappa_1, \kappa_{\rm max}) \), in addition to the real critical point in \( (p, \infty) \), there exist two distinct non-real conjugate critical points.
\end{itemize} 
\end{lem}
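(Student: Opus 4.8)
The plan is to treat real and non-real critical points separately, starting from the reduction already established in \eqref{eqn for cri pts}: a point $\zeta \in K^c \setminus \{p\}$ is a critical point of $\mathcal{U}_K$ precisely when, writing $z = F(\zeta)$ so that $|z| > 1$, one has $f(\bar z) = f(1/z)$. Since $1/z = \bar z/|z|^2$, setting $w := \bar z$ and $\rho := |w|^2 = |z|^2 > 1$ turns this into $f(w) = f(w/\rho)$, and a direct computation gives the factorisation
\begin{equation*}
f(w) - f(w/\rho) = R(\rho - 1)\Big( \frac{w}{\rho} - \frac{\tau}{w} + \frac{\kappa w}{(w-a)(w-a\rho)} \Big).
\end{equation*}
Since $\rho > 1$, the critical-point equation is equivalent to the vanishing of the parenthesis, and the whole argument is the analysis of this single equation in $w$, bearing in mind that $\rho = |w|^2$ is constrained by $w$ and not a free parameter.

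For real critical points I would put $w \in \mathbb{R}$, $|w| > 1$, so that $\rho = w^2$ and the parenthesis collapses to $\frac{1-\tau}{w} + \frac{\kappa}{(w-a)(1-aw)}$; its vanishing is the quadratic $(1-\tau)(w-a)(1-aw) + \kappa w = 0$, whose two roots multiply to $1$. Because $\kappa \ge 0$, the discriminant of this quadratic is bounded below by $(1-\tau)^2(1-a^2)^2 \ge 0$ and the sum of roots is positive, so the roots are $z_0 > 1 > 1/z_0 > 0$. Evaluating the quadratic at $w = 1/a$ produces $-\kappa/a < 0$, which forces $z_0 > 1/a$; since $f' > 0$ on $(1,\infty)$ (clear from the explicit form of $f'$) and $f(1/a) = p$, the associated critical value $f(z_0)$ lies in $(p,\infty)$. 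This produces exactly one real critical point, present in both regimes of the lemma.

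For non-real critical points I would write $w = s e^{i\phi}$ with $s = |w| > 1$, $\sin\phi \ne 0$, and $t := s + 1/s > 2$; substituting and clearing $\eta := e^{i\phi}\neq 0$ turns the parenthesis equation into $(\eta^2 - \tau)(\eta^2 - at\eta + a^2) = -\kappa\eta^2$, and dividing by $\eta^2$ and separating real and imaginary parts (using $1/\eta = \bar\eta$) yields
\begin{equation*}
\cos\phi = \frac{(1+\tau)at}{2(1+\tau a^2)}, \qquad \kappa = (1-a^2)\Big( (1+\tau) - \frac{4\tau\cos^2\phi}{1+\tau} \Big).
\end{equation*}
The first relation together with $t > 2$ and $\cos\phi < 1$ confines $\cos\phi$ to the interval $\big( \frac{(1+\tau)a}{1+\tau a^2},\, 1 \big)$, which is nonempty since $\tau a < 1$. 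On this interval the right-hand side of the second relation is strictly decreasing in $\cos\phi$, and a short computation shows that its values at the endpoints $\cos\phi = 1$ and $\cos\phi = \frac{(1+\tau)a}{1+\tau a^2}$ are exactly $\kappa_1$ in \eqref{def of kappa1} and $\kappa_{\rm max}$ in \eqref{eq_kappa max}, respectively. Hence the system is solvable if and only if $\kappa \in (\kappa_1, \kappa_{\rm max})$, and in that case the solution $\cos\phi$, hence $t$, hence $s > 1$, is unique, and running the equivalences backwards produces the conjugate pair of genuine critical points $f(s e^{\pm i\phi}) \in K^c$, which are non-real and distinct because $\sin\phi \neq 0$ and $f$ is injective on $\bar{\mathbb{D}}^c$. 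Together with the previous paragraph this establishes (i) for $\kappa \in (0,\kappa_1]\cup\{\kappa_{\rm max}\}$ and (ii) for $\kappa \in (\kappa_1,\kappa_{\rm max})$.

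The main obstacle I anticipate is the non-real case: one must carry out the reduction carefully enough to land on the clean two-equation system above, and — the real content of the lemma — recognise that the endpoints of the resulting range of $\kappa$ coincide exactly with the previously defined thresholds $\kappa_1$ and $\kappa_{\rm max}$. Some care is also needed to check that each algebraic solution genuinely corresponds to a critical point lying strictly outside $K$: the degenerate endpoint $\cos\phi = 1$ forces $z$ real, so the "pair" has merged into the real critical point of the second paragraph, while $\cos\phi = \frac{(1+\tau)a}{1+\tau a^2}$ forces $t = 2$, i.e. $|z| = 1$, placing the point on $\partial K$ rather than in $K^c$; both are correctly excluded by working with the open interval.
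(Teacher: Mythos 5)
Your proof is correct and follows essentially the same route as the paper: both reduce the critical-point equation $f(\bar z)=f(1/z)$ to a real quadratic for the real root and, after separating real and imaginary parts in polar coordinates, to the same two-equation system in $(\cos\phi,\,|z|+|z|^{-1})$ whose solvability interval is exactly $(\kappa_1,\kappa_{\rm max})$; your upfront factorisation via $w=\bar z$, $\rho=|w|^2$ is a tidy repackaging of the same algebra. One trivial slip: evaluating $(1-\tau)(w-a)(1-aw)+\kappa w$ at $w=1/a$ gives $+\kappa/a$, not $-\kappa/a$, but since that polynomial has negative leading coefficient this still places $1/a$ strictly between its roots, so your conclusion $z_0>1/a$ stands.
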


\begin{proof}
Let us write 
\begin{align}\label{def of rational g}
    g(z) = z+ \frac{\tau}{z}-\frac{\kappa}{z-a}. 
\end{align} 
Since $f$ is given by \eqref{eq_simply conformal map}, we have 
\begin{equation} \label{rel of f and g}
f(z) =R\Big(g(z)-\frac{\kappa}{a(1-\tau)}\Big). 
\end{equation}
Thus the identity \eqref{eqn for cri pts} is equivalent to $g(\bar{z})=g(1/z)$. 

Notice that if \( \kappa = 0 \), the equation \( g(\bar{z}) = g(1/z) \) simplifies to \( \bar{z} = \tau z \), which has no roots with an absolute value greater than 1.

From now on we consider the case $\kappa \in (0,\kappa_{\rm max}]$. Suppose that $z \in \mathbb{R}$. Then $g(z)=g(1/z)$ reads as
\begin{equation}\label{eq_real critical point equation}
    z+ \frac{1}{z} = a+ \frac{1}{a}+\frac{\kappa}{a(1-\tau)}.
\end{equation}
By solving this equation, let 
\begin{equation} \label{def of zstar}
z_* := \frac1{2a} \bigg( a^2+ 1+\frac{\kappa}{1-\tau} + \sqrt{ \Big( a^2+ 1+\frac{\kappa}{1-\tau} \Big)^2-4a^2 }\,\bigg). 
\end{equation}
Then $z_* >1/a$, $g(z)=g(1/z)$, and $f(z_*) \in (p,\infty)$ is a real critical point lying outside $K$.

Assume that a non-real $|z|>1$ satisfies $g(\bar{z})=g(1/z)$. In terms of the polar coordinates $z= re^{i\theta}$, we have 
\begin{align*}
 (1-\tau a^2) \cos 2\theta = \frac{1+r^2}{r} a(1-\tau) \cos \theta -\kappa +\tau -a^2, \qquad   (1+\tau a^2) \sin 2\theta = \frac{1+r^2}{r}a(1+\tau)\sin \theta.
\end{align*} 
Due to the assumptions $r>1$ and  $\sin \theta \neq 0$, these equations are equivalent to
\begin{align*}
    \cos^2\theta = \frac{1+\tau}{4\tau}\Big(1+\tau -\frac{\kappa}{1-a^2}\Big) \in [0,1),\qquad
    \frac{1+r^2}{2r} =\frac{1+\tau a^2}{a(1+\tau)}\cos \theta >1,
\end{align*}
which admit solutions $(r,\theta)$ if and only if $\kappa \in (\kappa_1, \kappa_{\rm max})$. If it is the case, there are precisely two conjugate non-real critical points of $\mathcal{U}_K$ outside $K$.
\end{proof}

We denote the real critical point of \( \mathcal{U}_K \) in \( (p, \infty) \) by \( \zeta_* \equiv \zeta_*(a, \kappa, \tau) \) and its conformal preimage by 
\begin{equation} \label{def of zeta star}
z_* = F(\zeta_*). 
\end{equation}
Note that \( z_* \in (1/a, \infty) \) satisfies~\eqref{eq_real critical point equation}.
As \( \kappa \) increases from \( \kappa_1 \) to \( \kappa_{\rm max} \), the non-real critical points of \( \mathcal{U}_K \) outside \( K \) move away from \( \zeta_*(a, \kappa_1, \tau) \) and approach \( \partial K \). 

Before examining whether the critical points identified above are local minima, we first establish that all points in \( K \) are local minima of \( \mathcal{U}_K \) when \( \kappa < \kappa_{\rm max} \).
Namely, we claim that there exists an open neighbourhood \( V \) of \( K \) such that for any \( \zeta \in V \setminus K \), we have \( \mathcal{U}_K(\zeta) > \ell_K \).
We note that the following argument is essentially identical to that in~\cite[Lemma 2.2]{BBLM15}.

Since the Schwarz function \( \mathsf{S} \) can be analytically extended to an open neighbourhood of \( (\interior K)^c \) when \( \kappa < \kappa_{\rm max} \), we define
\begin{align}\label{eq:UtildeS}
    \widetilde{\mathcal{U}}_K(\zeta) := \frac{1}{1-\tau^2} \Big( |\zeta|^2 - |f(1)|^2 - 2\re \int_{f(1)}^\zeta \mathsf{S}(\eta) \, \ud \eta \Big) + \ell_K
\end{align}
on an open neighbourhood of \( (\interior K)^c \), where it coincides with \( \mathcal{U}_K \).
Observe that by~\eqref{eq_functional US derivative}, 
$$
(1-\tau^2) \partial \widetilde{\mathcal{U}}_K(\zeta) = \bar{\zeta} - \mathsf{S}(\zeta) , \qquad (1-\tau^2) \bar{\partial} \partial \widetilde{\mathcal{U}}_K = 1.
$$
For \( \zeta \in \partial K \), let \( \mathsf{n} \) be the unit normal vector at \( \zeta \) pointing outward from \( \partial K \). Then, since \( \widetilde{\mathcal{U}}_K (\zeta) = \ell_K \) along \( \partial K \), the gradient \( \mathsf{grad} \; \widetilde{\mathcal{U}}_K(\zeta) \) is parallel to \( \mathsf{n} \). Furthermore, the determinant of the Hessian of \( \widetilde{\mathcal{U}}_K  \) vanishes since \( \widetilde{\mathcal{U}}_K \) is constant along \( \partial K \).
On the other hand, the trace of the Hessian of \( \widetilde{\mathcal{U}}_K(\zeta) \) is given by  
\[
4\bar{\partial} \partial \widetilde{\mathcal{U}}_K(\zeta) = \frac{4}{1-\tau^2} \bar{\partial} (\bar{\zeta} - \mathsf{S}(\zeta)) = \frac{4}{1-\tau^2} > 0,
\]
which implies that \( \widetilde{\mathcal{U}}_K \) attains local minima at \( \zeta \in \partial K \). Consequently, due to the compactness of \( \partial K \), there exists an open neighbourhood \( V \) of \( K \) such that \( \mathcal{U}_K = \widetilde{\mathcal{U}}_K > \ell_K \) on \( V \setminus K \).

Next, we determine the local minima of $\mathcal{U}_K$ lying outside $K$. Recall that $\zeta_*$ is defined by~\eqref{def of zeta star}.

\begin{lem}\label{lem:critical points}
The function \( \mathcal{U}_K \) has local minima outside \( K \) if and only if \( \kappa \in (\kappa_1, \kappa_{\rm max}] \). In this case, \( \zeta_* \in (p, \infty) \) is the unique local minimum.
\end{lem}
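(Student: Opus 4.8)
The plan is to convert the statement into a sign condition on the Hessian of $\mathcal{U}_K$ at each critical point, to settle this at the real critical point $\zeta_*$ by an explicit computation, and to handle the non-real critical points and the boundary values $\kappa\in\{0,\kappa_1,\kappa_{\rm max}\}$ by a short topological count. For the criterion: by \eqref{eq_functional US derivative} (and the analyticity of $\mathsf{S}$ near any point of $K^c\setminus\{p\}$ for $\kappa<\kappa_{\rm max}$) one has $(1-\tau^2)\,\partial^2\mathcal{U}_K(\zeta)=-\mathsf{S}'(\zeta)$ and $(1-\tau^2)\,\bar\partial\partial\mathcal{U}_K(\zeta)=1$, so at a critical point $\zeta$ the Hessian of $\mathcal{U}_K$ has trace $\tfrac{4}{1-\tau^2}>0$ and determinant $\tfrac{4}{(1-\tau^2)^2}\bigl(1-|\mathsf{S}'(\zeta)|^2\bigr)$. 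Hence $\zeta$ is never a local maximum; it is a nondegenerate local minimum exactly when $|\mathsf{S}'(\zeta)|<1$, a saddle when $|\mathsf{S}'(\zeta)|>1$, and degenerate when $|\mathsf{S}'(\zeta)|=1$. By \eqref{Schwarz in terms of f F}, $\mathsf{S}'(\zeta)=-f'(1/F(\zeta))\big/\bigl(F(\zeta)^2 f'(F(\zeta))\bigr)$.

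Next I would evaluate this at $\zeta_*=f(z_*)$, where $z_*\in(1/a,\infty)$ is the root of \eqref{eq_real critical point equation} given by \eqref{def of zstar}; it satisfies $(z_*-a)(1-az_*)=-\kappa z_*/(1-\tau)$. Using $f'(z)=R\bigl(1-\tau/z^2+\kappa/(z-a)^2\bigr)$ one obtains $1\mp\mathsf{S}'(\zeta_*)=\bigl(z_*^2 f'(z_*)\pm f'(1/z_*)\bigr)\big/\bigl(z_*^2 f'(z_*)\bigr)$, with positive denominator; the two numerators simplify to $R\bigl((1-\tau)(1+z_*^2)+\kappa z_*^2\bigl(\tfrac{1}{(z_*-a)^2}+\tfrac{1}{(1-az_*)^2}\bigr)\bigr)$ and --- after inserting $(z_*-a)^2(1-az_*)^2=\kappa^2 z_*^2/(1-\tau)^2$ --- to $R(z_*^2-1)\bigl((1+\tau)-(1-a^2)(1-\tau)^2/\kappa\bigr)$, respectively. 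Since $z_*>1/a>1$, $R>0$ and $\kappa>0$, the first is strictly positive, so $\mathsf{S}'(\zeta_*)<1$ always; the second has the sign of $(1+\tau)\kappa-(1-a^2)(1-\tau)^2$, i.e. $\operatorname{sgn}\!\bigl(1+\mathsf{S}'(\zeta_*)\bigr)=\operatorname{sgn}(\kappa-\kappa_1)$ for $\kappa_1$ as in \eqref{def of kappa1}. Therefore $|\mathsf{S}'(\zeta_*)|<1$ iff $\kappa>\kappa_1$: the point $\zeta_*$ is a local minimum for $\kappa\in(\kappa_1,\kappa_{\rm max}]$ (recall $\kappa_1<\kappa_{\rm max}$ when $\tau\in(0,1)$), a saddle for $\kappa\in(0,\kappa_1)$, and degenerate for $\kappa=\kappa_1$.

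It then remains to handle uniqueness and the remaining cases. For $\kappa\in(\kappa_1,\kappa_{\rm max})$ there are, besides $\zeta_*$, exactly two non-real critical points, forming a complex-conjugate pair; I would rule them out as local minima by a short count of the connected components of the sublevel sets $\{\mathcal{U}_K<t\}$ on $\Omega:=(\mathbb{C}\setminus\overline K)\setminus\{p\}$, using that $\mathcal{U}_K>\ell_K$ on a one-sided neighbourhood of $\partial K$ (established just above this lemma), that $\mathcal{U}_K\to+\infty$ at $p$ (here $c>0$, by \eqref{eq:sc3} since $\kappa>0$) and at $\infty$, that $\mathcal{U}_K$ has no local maxima, and that $\zeta_*$ is always at least a local minimum in the real direction (as $\mathsf{S}'(\zeta_*)<1$): since the number of components must pass from $1$ (a collar of $\partial K$) to $1$ (a connected ``pair of pants'') while it can only rise at minima and fall at saddles, and since $\zeta_*$ can merge at most one component, both non-real critical points must be saddles, so $\zeta_*$ is the unique local minimum. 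The same count also shows that no local minimum outside $K$ exists when $\kappa\in(0,\kappa_1]$ (confirming, in particular, the degenerate case $\kappa=\kappa_1$, where $\zeta_*$ is the only critical point in $K^c$). For $\kappa=\kappa_{\rm max}$ the unique critical point in $K^c$ is $\zeta_*$, a nondegenerate local minimum since $\kappa_{\rm max}>\kappa_1$, hence the unique one; and for $\kappa=0$ (so $c=0$) there are no critical points in $K^c$. Assembling these cases gives the lemma.

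I expect the main work to be the algebraic simplification in the second step --- it is short once $(z_*-a)(1-az_*)=-\kappa z_*/(1-\tau)$ is used to eliminate $z_*$, but it is precisely what makes $\kappa_1$ emerge --- together with the exclusion of the non-real critical points as minima; if the component count is felt to be too soft, the latter can instead be obtained by a direct (but more laborious) evaluation of $|\mathsf{S}'|$ at those points via the angular description in the proof of Lemma~\ref{lem:three crit}.
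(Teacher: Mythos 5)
Your proposal is correct and reaches the same conclusion, but it takes a slightly different (and in one place more explicit) route from the paper. The paper computes the Hessian at $\zeta_*$ only for $\kappa\in(\kappa_1,\kappa_{\rm max}]$, showing $|\mathsf{S}'(\zeta_*)|<1$ there; for $\kappa\in(0,\kappa_1]$ it does not classify $\zeta_*$ directly, but instead argues by the mountain pass theorem that a local minimum at $\zeta_*$ would force a further critical point, contradicting Lemma~\ref{lem:three crit}. You instead push the Hessian computation to all $\kappa$: using $(z_*-a)(1-az_*)=-\kappa z_*/(1-\tau)$ you obtain $\operatorname{sgn}\bigl(1+\mathsf{S}'(\zeta_*)\bigr)=\operatorname{sgn}(\kappa-\kappa_1)$ together with $1-\mathsf{S}'(\zeta_*)>0$ always, which classifies $\zeta_*$ as a saddle for $\kappa<\kappa_1$, degenerate at $\kappa=\kappa_1$, and a minimum for $\kappa>\kappa_1$ in one stroke. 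This is the same algebra the paper carries out (the identity $z_*^2f'(z_*)-f'(1/z_*)=R(1+\tau)(z_*^2-1)(\kappa-\kappa_1)/\kappa$ is literally in the proof), but you read off both signs rather than just one, which makes the bifurcation at $\kappa_1$ visible without invoking mountain pass for the sub-$\kappa_1$ range. For the exclusion of the non-real conjugate pair as minima, the paper again uses mountain pass to produce one saddle and then conjugate symmetry of $\mathcal{U}_K$ to upgrade this to both; you propose a sublevel-set count on $\Omega=(\mathbb{C}\setminus\overline{K})\setminus\{p\}$. The idea is sound — an Euler-characteristic accounting (annulus collar $\chi=0$ to pair of pants $\chi=-1$, each index-$0$ point contributing $+1$, each index-$1$ point $-1$, no index-$2$ points) forces both non-real critical points to have index $1$ once $\zeta_*$ has index $0$. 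Be aware, though, that your phrasing ``it can only rise at minima and fall at saddles'' is imprecise (a saddle may attach a $1$-handle without changing $b_0$), and ``$\zeta_*$ can merge at most one component'' misdescribes a minimum (which creates, rather than merges, a component); stating the argument through the Euler characteristic rather than the raw component count makes it watertight and, in particular, painlessly handles both orderings of the critical values and the degenerate case $\kappa=\kappa_1$. With those two phrasings tightened, your proof is a valid alternative to the paper's and is arguably somewhat more self-contained, since it replaces two invocations of the mountain pass theorem by one explicit sign computation and one index count.
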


\begin{proof}
When \( \kappa = 0 \), there are no critical points in \( K^c \), so the proof is complete.

Now, suppose \( \kappa \in (0, \kappa_1] \). In this case, \( \zeta_* \) is the unique critical point of \( \mathcal{U}_K \) outside \( K \). If \( \zeta_* \) were a local minimum in \( K^c \), the mountain pass theorem would guarantee the existence of another critical point outside \( K \), leading to a contradiction.
To provide further details, we follow the standard argument used, for instance, in~\cite[Lemma 2.3]{BBLM15}. Consider continuous paths from a fixed point on \( \partial K \), say \( f(1) \in \partial K \), to the local minimum \( \zeta_* \), ensuring that the paths do not pass through \( p \).
Since there exists an open neighbourhood \( V \) where \( \mathcal{U}_K(\zeta) > \ell_K \) for all \( \zeta \in V \setminus K \), and since \( \zeta_* \) is a local minimum, the maximum value of \( \mathcal{U}_K(\zeta) \) along these paths is attained at neither the starting nor the endpoint.
Taking the minimum of all such maximum values, we obtain a point \( \zeta_1 \in K^c \setminus \{ p, \zeta_* \} \) where the min–max value is achieved. A standard variational argument then shows that \( \zeta_1 \) is also a critical point, contradicting Lemma~\ref{lem:three crit}.

Next, we consider the case $\kappa \in (\kappa_1, \kappa_{\rm max}]$ and show that $\zeta_*$ is a local minimum. As shown above, the Hessian of $\mathcal{U}_K$ has trace $4/(1-\tau^2)$ and determinant
\begin{align*}
    4\Big( (\bar{\partial}\partial \mathcal{U}_K(\zeta_*))^2 -(\partial^2 \mathcal{U}_K(\zeta_*))(\bar{\partial}^2\mathcal{U}_K(\zeta_*))\Big) = \frac{4}{(1-\tau^2)^2}(1-|\mathsf{S}'(\zeta_*)|^2).
\end{align*}
Since $\mathsf{S}(\zeta)=f(1/F(\zeta))$ for $\zeta \in K^c$, we have 
\begin{align*}
    \mathsf{S}'(\zeta_*) = -f'(1/F(\zeta_*))\frac{F'(\zeta_*)}{F^2(\zeta_*)}=-\frac{f'(1/z_*)}{z_*^2 f'(z_*)}.
\end{align*}
By using the fact that $z_*$ solves \eqref{eq_real critical point equation}, we have 
\begin{align*}
    z_*^2f'(z_*) -f'(1/z_*) &= R(z_*^2-1)\Big(1+\tau-\frac{\kappa (1-a^2)z_*^2}{(z_*-a)^2(1-az_*)^2} \Big) = R(1+\tau)(z_*^2-1)\frac{\kappa-\kappa_1}{\kappa}>0,
    \\
    z_*^2f'(z_*) +f'(1/z_*) &= R\Big((1-\tau)(1+z_*^2) + \frac{\kappa z_*^2}{(z_*-a)^2}+\frac{\kappa z_*^2}{(1-az_*)^2}\Big)>0.
\end{align*}
Therefore $|\mathsf{S}'(\zeta_*)|<1$, which implies that the Hessian is a positive definite matrix. Thus $\zeta_*$ is the unique local minimum of $\mathcal{U}_K$ outside $K$.

Finally, we show that the non-real critical points of \( \mathcal{U}_K \) outside \( K \), which arise when \( \kappa \in (\kappa_1, \kappa_{\rm max}) \), are not local minima.
Indeed, at least one of these non-real critical points cannot be a local minimum, as the mountain pass argument guarantees the existence of a saddle point.
Moreover, since the non-real critical points are conjugates and \( \mathcal{U}_K \) is symmetric with respect to the real axis, we conclude that \( \zeta_* \) is the unique local minimum when \( \kappa \in (\kappa_1, \kappa_{\rm max}] \).
\end{proof}

We are now ready to complete the proof of Proposition~\ref{prop_simply inequality}.

\begin{proof}[Proof of Proposition~\ref{prop_simply inequality}]
We have established that if \( \kappa \in [0, \kappa_1] \), the only local minima of \( \mathcal{U}_K \) are the points in \( K \), ensuring that the variational conditions are satisfied in this case.
Furthermore, for \( \kappa \in [0, \kappa_1] \), we have \( \mathcal{U}_K(\zeta_*) > \ell_K \).

We claim that \( \mathcal{U}_K(\zeta_*) \leq \ell_K \) when \( \kappa = \kappa_{\rm max} \).
By Lemma~\ref{lem:critical points}, for \( \kappa \in (\kappa_1, \kappa_{\rm max}) \), the value of \( \mathcal{U}_K \) at the non-real critical points outside \( K \) is greater than \( \mathcal{U}_K(\zeta_*) \).
By the continuity of \( \mathcal{U}_K \), its value at the non-real critical points converges to \( \ell_K \) as these points approach \( \partial K \) when \( \kappa \to \kappa_{\rm max} \). Thus
\begin{align*}
    \mathcal{U}_K(\zeta_*)\big|_{\kappa = \kappa_{\rm max}} = \lim_{\kappa \to \kappa_{\rm max}} \mathcal{U}_K(\zeta_*) \leq \ell_K\big|_{\kappa=\kappa_{\rm max}},
\end{align*}
which proves the claim.

It remains to show that there exists \( \kappa_{\rm cri} \in (\kappa_1, \kappa_{\rm max}] \) such that \( \mathcal{U}_K(\zeta_*) > \ell_K \) if and only if \( \kappa \in [0, \kappa_{\rm cri}) \). Recall that the Schwarz function $\mathsf{S}$ is given by \eqref{Schwarz in terms of f F}. 
For \( \zeta \in K^c \) and \( z = F(\zeta) \), applying~\eqref{eq:UtildeS} and integrating by parts, we obtain
\begin{align}\label{eq_functional computation 1}
\begin{split}
    \mathcal{U}_K(\zeta) -\ell_K  &= \frac{1}{1-\tau^2}\Big(|\zeta|^2 - \re \zeta\,\mathsf{S}(\zeta) - \re \int_{1/z}^{z} f(1/w)f'(w) \ud w\Big).
\end{split}
\end{align}
Notice that by \eqref{eq_simply conformal map}, we have  
\begin{align*}
    f(1/w)f'(w) = R^2\Big(\frac{1}{w}+\tau w -\frac{\kappa w}{1-aw}- \frac{\kappa}{a(1-\tau)}\Big)\Big(1-\frac{\tau}{w^2}+\frac{\kappa}{(w-a)^2}\Big).
\end{align*}
Then by evaluating the integral in \eqref{eq_functional computation 1} together with \eqref{eq_simply connected c} and~\eqref{eq_simply connected p}, we obtain 
\begin{align}\label{eq_functional computation 2}
\begin{split}
    \mathcal{U}_K(\zeta)-\ell_K &= \frac{1}{1-\tau^2}\Big(|f(z)|^2 - \re f(z)f(1/z) +\frac{ R\kappa p ( z^2-1) }{(z-a)(az-1)}\Big)
 - 2c \log \frac{|az-1|}{|z-a|} - 2(1+c)\log |z|.
\end{split}
\end{align}
We define 
\begin{align}\label{def of H(a,kappa)}
\begin{split}
    H(a,\kappa) &:= \frac{1-\tau}{a} \Big(1+\tau a^2 -\frac{1-\tau a^2}{1-\tau}\frac{\kappa}{1-a^2} \Big)\Big(z_*-\frac{1}{z_*}\Big)
    \\
    &\quad - 2\Big(\frac{1-\tau a^2}{a^2}\kappa + \frac{\kappa^2}{(1-a^2)^2}\Big)\log\frac{|az_*-1|}{|z_*-a|} - 2\Big(1-\tau^2 +\frac{1+\tau a^2}{a^2}\kappa\Big)\log |z_*|,
\end{split}
\end{align}
where $z_*$ is given by \eqref{def of zstar}.  
Notice that by \eqref{def of zeta star}, we have 
\begin{align} 
\begin{split}
    H(a,\kappa)  = \frac{1-\tau^2}{R^2}(\mathcal{U}_K(\zeta_*)-\ell_K) . 
\end{split}
\end{align}  

Since \( H(a, \kappa) \) is continuous, our claim reduces to verifying that \( H(a, \cdot) \) has a unique zero in \( (\kappa_1, \kappa_{\rm max}] \).
The existence of a zero is ensured by the fact that 
$$ \min_{\kappa \in [0,\kappa_1]}H(a, \kappa) > 0 \geq H(a, \kappa_{\rm max}) ,$$
as shown above.
Thus, it suffices to prove that \( H(a, \kappa) \) is concave with respect to \( \kappa \) in the range \( [0, \kappa_{\rm max}] \).

By differentiating $H(a,\kappa)$ with respect to $\kappa$, we have 
\begin{align*}
    \frac{\partial^3}{\partial \kappa^3}H(a,\kappa) = \frac{2\big((1-\tau)(1-a^2)(1-\tau a^2) + (1+\tau a^2)\kappa\big)}{a^2\kappa^2 \sqrt{\big((1-\tau)(1+a^2) + \kappa\big)^2 - 4(1-\tau)^2 a^2}}>0, \qquad \kappa > 0.
\end{align*}
On the other hand, after lengthy but straightforward computations, one can observe that
\begin{align*}
    \lim_{\kappa \to \infty} \frac{\partial^2}{\partial \kappa^2 }H(a,\kappa) = -\frac{2(1-\tau a^2)}{(1-\tau)a^2(1-a^2)} - \frac{4\log a}{(1-a^2)^2}\le -\frac{2\tau}{(1-\tau)a^2}<0.
\end{align*}
Here, we have used the fact that \(-2a^2\log a\le (1-a^2).\) Hence, \( H(a, \kappa) \) is concave with respect to \( \kappa \) in the range \( [0, \kappa_{\rm max}] \), implying that \( H(a, \cdot) \) has a unique zero \( \kappa_{\rm cri} \in (\kappa_1, \kappa_{\rm max}] \).
As a result, the inequality part of the variational condition~\eqref{eq:variational} holds if and only if \( \kappa \in [0, \kappa_{\rm cri}) \).
\end{proof} 

In the proof of Proposition~\ref{prop_simply inequality}, we have shown that 
\begin{equation}
\kappa_1 \le \kappa_{ \rm cri } \le \kappa_{ \rm max }. 
\end{equation}
As previously mentioned below \eqref{def of kappa1}, for $\tau=0$, we have $\kappa_1=\kappa_{ \rm max }$. This in turn implies that in the extremal case $\tau=0$, there is no additional phase transition of the droplet yielding the multi-component regime.

\begin{rem}[Positivity of the parameter $a$] \label{Rem_positivity a} 
Set $\kappa=(1-\tau)(1-a^2)$. Then $z_* = (1+\sqrt{1-a^2})/a$ and
\begin{align*}
     H(a, (1-\tau)(1-a^2))  &= 4\tau(1-\tau)\Big(\sqrt{1-a^2}-(2-a^2)\log \frac{1+\sqrt{1-a^2}}{a}\Big)\leq 0,
\end{align*}
which yields that $\kappa_{\rm cri}\leq (1-\tau)(1-a^2)$.  
Suppose that \( a \in (-1,0) \). Following the same steps as in Propositions~\ref{prop_simply equality} and~\ref{prop_simply inequality}, we arrive at the conclusion that the values of \( c \) and \( p \) given by~\eqref{eq_simply connected c} and~\eqref{eq_simply connected p} induce a simply connected droplet \( S \) if and only if \( \kappa \in [0, \kappa_{\rm cri}) \), where \( \kappa_{\rm cri}(a) = \kappa_{\rm cri}(-a) \) since \( H(a,\kappa) = H(-a,\kappa) \).  
The symmetry breaks at \( p \geq 0 \) because~\eqref{eq:sc4} and the condition \( \kappa < (1-\tau)(1-a^2) \) imply  
\begin{equation*}
    0\leq p = \frac{R}{a} \Big(1+\tau a^2 - \frac{1-\tau a^2}{1-\tau} \frac{\kappa}{1-a^2} \Big) < \frac{R}{a} \Big(1+\tau a^2 -(1-\tau a^2)\Big) = 2R\tau a \leq 0,
\end{equation*}
which is a contradiction. Thus, we conclude that if \(S  \) is simply connected, then \( a \in (0,1) \).
\end{rem}

\begin{rem} 
\label{rem_uniqueness problem}
In general, it is not clear whether the solutions \( (a, \kappa) \) of the algebraic equations~\eqref{eq:sc2}, \eqref{eq:sc3}, and~\eqref{eq:sc4} exist or are unique for given parameters \( (p, c, \tau) \).
However, if we assume that the parameters \( (p, c, \tau) \) correspond to a simply connected droplet \( S \), Propositions~\ref{prop_univalence} and ~\ref{prop:a priori} guarantee the existence of a solution with \( a \in (0,1) \) and \( \kappa \in [0, \kappa_{\rm max}] \).
Furthermore, Propositions~\ref{prop_simply equality} and~\ref{prop_simply inequality} imply that \( \kappa \in [0, \kappa_{\rm cri}) \).
Regarding uniqueness, the uniqueness of the equilibrium measure ensures that no two pairs \( (a, \kappa) \) correspond to the same set of parameters \( (p, c, \tau) \). This follows from the fact that each pair \( (a, \kappa) \) induces a distinct conformal map \( f \), as can be verified by examining the poles and residues.

In the extremal case \( \tau = 0 \), the existence and uniqueness problem can be addressed more explicitly. In~\cite[Appendix A]{BBLM15}, the authors examined the existence problem using the discriminant of~\eqref{cubic eqn for tau0} and addressed uniqueness by selecting the smallest nonzero root of~\eqref{cubic eqn for tau0}.
These considerations suggest that the algebraic equations arising from the conformal mapping method require additional conditions or further information to fully determine the droplet. 
\end{rem}

\begin{rem}
\label{rem_intersection regime II and III}
Here, we present a detailed exposition of two limits: \( c \to \infty \) with fixed \( p > 0 \), and \( p \to \infty \) with fixed \( c \geq 0 \) when \( \tau \in (0,1) \).  
Heuristically, in both cases, the parameters \( (p,c,\tau) \) will eventually fall within Regime II, as discussed in Remark~\ref{rem_extermal tau=1}.

Consider Regime II as the union of disjoint curves \( \kappa \mapsto (p(a,\kappa), c(a,\kappa), \tau) \) for \( \kappa \in [0, \kappa_{\rm cri}) \), indexed by \( a \in (0,1) \). Notice that direct computations show \( \partial c/\partial \kappa > 0 \) and \( \partial c/\partial a < 0 \). 
Also, $p(a,\kappa)$ diverges to infinity as $a\to0$. Since \( c(a,0) = 0 \), these curves originate from \( \{c=0\} \times \{p > 1+\tau\} \) and move upward as \( \kappa \) increases in the \( (p,c) \)-plane.  
As the intersection of Regimes I and II occurs at \( a = 1 \) (Remark~\ref{rem_doubly simply intersection}), the intersection of Regimes II and III occurs at \( \kappa = \kappa_{\rm cri} \). To prove our claim, it suffices to show that the critical line induced by \( \kappa = \kappa_{\rm cri} \) converges to \( p \to 0 \) as \( a \to 0 \).


Notice that as $a \to 0$, we have 
\begin{align*}
    H(a,\kappa) = (1-\tau)\Big(1-\big(\frac{\kappa}{1-\tau}\big)^2-2\frac{\kappa}{1-\tau} \log \frac{\kappa}{1-\tau}\Big)\frac{1}{a^2}+ 2(1- (\tau-\kappa)^2)\log a + O(1).
\end{align*}
Therefore it follows that as $ a \to 0$, 
$$
H(a, \kappa) \to \begin{cases}
+ \infty &\textup{if } \kappa <1-\tau,
\smallskip 
\\
-\infty &\textup{if } \kappa >1-\tau.
\end{cases}
$$ 
Here, we have used the fact that the function \( x \mapsto 1 - x^2 - 2x \log x \) changes sign only at \( x = 1 \).
Combining with the fact $\kappa_{\rm cri}\leq (1-\tau)(1-a^2)$, we obtain $\kappa_{\rm cri}\to (1-\tau)$ as $a\to 0$. Hence, we conclude that $p(a,\kappa)>0$ in Regime II and in particular, $p(a,\kappa_{\rm cri})\to 0$ as $a\to 0$. 
\end{rem}

\subsection{Electrostatic energies}\label{subsection simply connected energies}

In this section, we derive the weighted logarithmic energy for the simply connected regime. Recall that the Robin's constant is given by \eqref{eq:variational}. 

\begin{lem}\label{lem_energy simply connected}
Suppose that the droplet \( S \) is simply connected, with parameters \( a \in (0,1) \), \( \kappa \in [0, \kappa_{\rm cri}) \), and \( \tau \in [0,1) \), where \( c \) and \( p \) are given by~\eqref{eq_simply connected c} and~\eqref{eq_simply connected p}.
Then the Robin's constant, denoted $C_{\rm s}(p,c,\tau)$, is evaluated as
\begin{align}\label{eq_Robin simply}
    C_{\rm s}(p,c,\tau) &= 
    \frac{1+c}{2}- \frac{R\kappa p}{2a(1-\tau^2)}+c\log a - (1+c)\log R,
\end{align}
where $R$ is defined by~\eqref{eq:sc2}.
Furthermore, we have 
\begin{align}\label{eq_simply energy energy term}
\begin{split}
    \int_\mathbb{C} Q(\zeta) \ud \mu_Q(\zeta) &= \frac{1}{2}+2c-\frac{2cp^2}{1+\tau}+2c(1+2c)\log \frac{a}{R}+2c^2\log \frac{c(1-\tau^2)(1-a^2)}{\kappa}
    \\
    &\quad -\frac{R^3\kappa p}{(1-\tau^2)^2a^3}\Big((1+\tau a^2)\kappa-(1-\tau)(1-a^2)(1-\tau a^2)\Big)+\frac{(1-a^2)Rcp}{a(1+\tau)}-\frac{Rc\kappa p}{a(1-\tau^2)}.
\end{split}
\end{align}
In particular, $\mathcal{I}_{ \rm s}(p,c,\tau)$ is given by \eqref{weighted energy simply connected}.
\end{lem}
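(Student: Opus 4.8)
The proof follows the same template as Lemma~\ref{lem_energy doubly connected}: first compute the Robin constant $C_{\rm s}$, then the weighted integral $\int_\mathbb{C}Q\,\ud\mu_Q$, and finally assemble $\mathcal{I}_{\rm s}=I_Q(\mu_Q)$ via \eqref{energy in terms of Robin}. Throughout we are in Regime II, so by Propositions~\ref{prop_simply equality} and~\ref{prop_simply inequality} the droplet $S$ equals $K$ with $\kappa\in[0,\kappa_{\rm cri})$, and $C_{\rm s}=\tfrac12\ell_K$ with $\ell_K=\mathcal{U}_K(f(1))$ as in \eqref{def of lK Robin 2}.

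For the Robin constant I would \emph{not} evaluate the logarithmic potential of $K$ at the boundary point $f(1)$ directly, but instead extract $\ell_K$ from the already-derived formula \eqref{eq_functional computation 2} by letting $\zeta=f(z)\to\infty$. The key elementary fact is that the logarithmic potential of the bounded domain $K$ satisfies
\begin{equation*}
\int_K\log\frac{1}{|\zeta-\eta|^2}\,\ud A(\eta)=-2(1-\tau^2)\log|\zeta|+O(1/\zeta),\qquad \zeta\to\infty,
\end{equation*}
with \emph{no} constant term: expanding $\log|\zeta-\eta|$ in powers of $1/\zeta$ and integrating term by term, the only non-decaying contribution is $(\int_K\ud A)\log|\zeta|=(1-\tau^2)\log|\zeta|$, while all area moments $\int_K\eta^k\,\ud A$ enter only at order $O(1/\zeta)$. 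Combining this with $\mathcal{U}_K(\zeta)=Q(\zeta)+\tfrac{1}{1-\tau^2}\int_K\log\frac{1}{|\zeta-\eta|^2}\ud A(\eta)$ and the Laurent expansions of $f(z)$ and $f(1/z)$ at $z=\infty$, and then matching constant terms on the two sides of \eqref{eq_functional computation 2}, gives a formula for $\ell_K$; the $z^2$, $z^1$ and $\log z$ coefficients match automatically and serve as consistency checks. The bulk of the resulting expression collapses to $1+c$ using \eqref{eq:sc5} (equivalently \eqref{eq:sc2} together with \eqref{eq:sc3}), which yields \eqref{eq_Robin simply}.

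For the weighted integral I would write $\int_\mathbb{C}Q\,\ud\mu_Q=\tfrac{1}{1-\tau^2}\int_K\big[\tfrac{1}{1-\tau^2}(|\eta|^2-\tau\re\eta^2)-2c\log|\eta-p|\big]\ud A(\eta)$ and treat the two pieces separately, exactly as in the doubly connected case. For the quadratic piece, Green's formula gives $\int_K(|\eta|^2-\tau\re\eta^2)\ud A(\eta)=\tfrac{1}{2\pi i}\oint_{\partial K}(\tfrac12\bar\eta-\tau\eta)|\eta|^2\,\ud\eta$; substituting $\eta=f(z)$, $\bar\eta=f(1/z)$ for $z\in\partial\mathbb{D}$ turns this into a rational contour integral over $\partial\mathbb{D}$, evaluated by residues at $z=0$ and $z=a$ with the help of \eqref{eq_simply conformal map}, \eqref{eq:sc2} and \eqref{eq:sc3}. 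For the logarithmic piece, note that $p\notin K$ (by Proposition~\ref{prop:a priori}) and $K$ is simply connected, so $\log(\eta-p)$ is single-valued holomorphic on $K$; hence $\int_K\log(\eta-p)\ud A(\eta)=\tfrac{1}{2\pi i}\oint_{\partial K}\bar\eta\log(\eta-p)\,\ud\eta=\tfrac{1}{2\pi i}\oint_{|z|=1}f(1/z)\log(f(z)-p)f'(z)\,\ud z$, which I would evaluate by an integration by parts to remove the logarithm (the boundary term vanishes because $p$ lies outside the Jordan curve $f(\partial\mathbb{D})$) followed by residue calculus on the rational integrand, whose poles in $\mathbb{D}$ are $z=0$, $z=a$ and the two in-disc preimages of $p$. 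Using \eqref{eq:sc2}, \eqref{eq:sc3} and \eqref{eq:sc4} repeatedly to simplify produces \eqref{eq_simply energy energy term}. Inserting \eqref{eq_Robin simply} and \eqref{eq_simply energy energy term} into $I_Q(\mu_Q)=C_{\rm s}+\tfrac12\int Q\,\ud\mu_Q$ and simplifying gives \eqref{weighted energy simply connected}; as additional checks one can take $\tau\to0$ (comparing with \cite[Proposition 2.1]{BSY24} via Remark~\ref{Rem_droplet extremal}) and $a\to1$ (comparing with $\mathcal{I}_{\rm d}$ via Remark~\ref{rem_doubly simply intersection}).

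\textbf{Main obstacle.} The delicate part is the logarithmic integral $\int_K\log|\eta-p|\,\ud A(\eta)$: unlike the doubly connected case, where the corresponding potential of the disc $D$ was explicit from \eqref{eq:log cal 2}, here the branch points of $\log(f(z)-p)$ sit inside $\mathbb{D}$ at the in-disc preimages of $p$ and coincide with neither the poles $\{0,a\}$ of $f$ nor with the point $z=a$, so both the integration by parts and the ensuing residue bookkeeping must be handled carefully; and turning the raw residue output into the compact closed forms \eqref{eq_simply energy energy term} and \eqref{weighted energy simply connected} requires heavy, nonroutine use of the coupled algebraic equations \eqref{eq:sc2}--\eqref{eq:sc4} to eliminate the auxiliary quantities and collapse the rational expressions. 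The remaining computations are routine.
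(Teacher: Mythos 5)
Your computation of the Robin constant coincides with the paper's: both extract $\ell_K$ by comparing the $\zeta\to\infty$ expansions on the two sides of \eqref{eq_Robin equation} (equivalently \eqref{eq_functional computation 1}), using \eqref{eq_functional computation 2} and \eqref{eq_Schwarz asymp}, and collapsing via \eqref{eq:sc5}. Your treatment of the quadratic moment $\int_S(|\eta|^2-\tau\re\eta^2)\,\ud A$ by Green's formula and residues at $z=0,a$ is likewise identical to the paper's.

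Where you genuinely diverge is the logarithmic piece $\int_S\log|\eta-p|\,\ud A(\eta)$, and here there is a gap. You propose to compute $\tfrac{1}{2\pi i}\oint_{|z|=1}f(1/z)\log(f(z)-p)f'(z)\,\ud z$ by ``an integration by parts to remove the logarithm (the boundary term vanishes because $p$ lies outside the Jordan curve $f(\partial\mathbb D)$).'' The stated reason does not justify the claim. Integration by parts on the closed contour $\partial\mathbb D$ gives a boundary term $u(z_0)\,\Delta V$, where $u=\log(f(z)-p)$ (single-valued, since $p\notin K$) and $V$ is a primitive of $f(1/z)f'(z)$; but $V$ is \emph{not} single-valued around $\partial\mathbb D$, because $f(1/z)f'(z)$ has nonzero total residue in $\mathbb D$ — indeed $\frac{1}{2\pi i}\oint f(1/z)f'(z)\,\ud z = 1-\tau^2$ by Lemma~\ref{Lem_total mass 1}. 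Thus $\Delta V = 2\pi i(1-\tau^2)\neq 0$ and the boundary term is $2\pi i(1-\tau^2)\log(f(z_0)-p)$, which is nonzero (the apparent base-point dependence is compensated by the branch choice in the remaining integral). In addition, the remaining integral $\oint V\,\frac{f'}{f-p}\,\ud z$ involves a multivalued $V$, so one cannot apply the residue theorem on $\mathbb D$ directly; the residues at $0$, $a$ and the two in-disc preimages of $p$ you list would have to be taken after an explicit decomposition that isolates and tracks the logarithmic parts of $V$. All of this is fixable, but it is considerably more delicate than your phrasing admits, and the fix is not indicated.

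The paper avoids this entirely by a second, and substantially cleverer, use of \eqref{eq_Robin equation}: in addition to the $\zeta\to\infty$ expansion that yields the Robin constant, it sends $\zeta\to p$ (i.e.\ $z\to F(p)=1/a$) in the same identity. The singular term $-2c\log|\zeta-p|$ appears on both sides and cancels, and comparing the finite parts, together with $f'(1/a)=a^2c(1-\tau^2)/(R\kappa)$ from \eqref{eq:sc3}, immediately gives \eqref{eq_simply energy log p computation} for $\int_S\log\frac{1}{|\eta-p|}\,\ud A$ with no contour integral and no branch-cut bookkeeping at all. If you reroute your proof through this local expansion near $p$ instead of the contour integral with the logarithm, the gap disappears and the rest of your argument goes through.
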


We mention that Lemma~\ref{lem_energy simply connected} extends previous results for \( \tau = 0 \) given in \cite[Lemma 4.8, pre-critical case]{BSY24}. 

\begin{proof}[Proof of  Lemma~\ref{lem_energy simply connected}] 
Note that by \eqref{eq_deriv of functional US} and~\eqref{eq_functional US derivative}, we have 
\begin{align}\label{eq_Schwarz asymp}
\begin{split}
    \mathsf{S}(\zeta)  = \tau \zeta + \frac{(1+c)(1-\tau^2)}{\zeta} + O(1/\zeta^2),  \qquad \zeta \to \infty. 
\end{split}
\end{align} 
Notice that the Robin's constant $C_{\rm s}(p,c,\tau)=\ell_S/2$, where $\ell_S$ is the constant value of $\mathcal{U}_S$ on $S$ defined as~\eqref{def of lK Robin 2}. Thus, it follows from \eqref{eq_functional computation 1} that 
\begin{align}\label{eq_Robin equation}
    \mathcal{U}_S(\zeta)- 2C_{\rm s}(p,c,\tau)  = \frac{1}{1-\tau^2}\Big(|\zeta|^2- \re \zeta\, \mathsf{S}(\zeta)- \re \int_{1/z}^z f(1/w)f'(w)\ud w\Big)
\end{align}
for $\zeta \notin S$ and $z= F(\zeta)$. By definition \eqref{def of functional US} of $\mathcal{U}_S$, the left hand side of \eqref{eq_Robin equation} has the asymptotic behaviour 
\begin{align*}
   \frac{1}{1-\tau^2}(|\zeta|^2-\re \zeta^2) - 2(1+c)\log|\zeta| -2C_{\rm s}(p,c,\tau) +O(1/\zeta), \qquad \zeta \to \infty.
\end{align*}
On the other hand, by using~\eqref{eq_Schwarz asymp} and~\eqref{eq_functional computation 2}, the right hand side has the asymptotic behaviour   
\begin{align*}
    \quad \frac{1}{1-\tau^2}(|\zeta|^2-\re \zeta^2) -(1+c) +\frac{R\kappa p}{a(1-\tau^2)} - 2c\log a -2(1+c)\log \frac{|\zeta|}{R} + O(1/\zeta), \qquad \zeta \to \infty.
\end{align*}
Comparing both sides of~\eqref{eq_Robin equation} at $\zeta \to \infty$, we obtain the desired identity \eqref{eq_Robin simply}.

Now we take $\zeta\to p$ and subsequently $z \to F(p)=1/a$ on both sides of~\eqref{eq_Robin equation}. Then the left hand side of~\eqref{eq_Robin equation} satisfies
\begin{align*}
    \frac{1}{1-\tau^2}\int_S \log \frac{1}{|\eta-p|^2}\ud A(\eta) + \frac{p^2}{1+\tau}-2c \log|\zeta-p|-2C_{\rm s}(p,c,\tau)+O(\zeta-p), \qquad \zeta \to p.
\end{align*}
Using~\eqref{eq_functional computation 2}, the right hand side of~\eqref{eq_Robin equation} satisfies
\begin{align*}
    \frac{(1-a^2)}{a(1+\tau)}Rp-\frac{R\kappa f'(1/a)}{a^2(1-\tau^2)}-2c\log |\zeta-p| + 2c\log \frac{(1-a^2)f'(1/a)}{a} + 2\log a+O(\zeta-p), \qquad \zeta \to p.
\end{align*}
Observe here that by \eqref{eq:sc3}, we have 
\begin{align*}
    f'(1/a) = R\Big(1-\tau a^2 + \frac{\kappa a^2}{(1-a^2)^2}\Big) = \frac{a^2c(1-\tau^2)}{R\kappa}.
\end{align*}
Then, comparing both sides of~\eqref{eq_Robin equation} we obtain 
\begin{align}\label{eq_simply energy log p computation}
\begin{split}
    &\quad \frac{1}{1-\tau^2}\int_S \log \frac{1}{|\eta-p|}\ud A(\eta) \\
    &= C_{\rm s}(p,c,\tau)-\frac{c}{2} -\frac{p^2}{2(1+\tau)}+\frac{(1-a^2)Rp}{2a(1+\tau)}+c\log\frac{ac(1-\tau^2)(1-a^2)}{R\kappa}+\log a\\
    &= \frac{1}{2}-\frac{p^2}{2(1+\tau)}-\frac{R\kappa p}{2a(1-\tau^2)}+\frac{(1-a^2)Rp}{2a(1+\tau)}+c\log\frac{c(1-\tau^2)(1-a^2)}{\kappa}+(1+2c)\log \frac{a}{R}.
\end{split}
\end{align}
Finally, from Green's formula and change of variables $\zeta = f(z)$
\begin{align*}
    \int_S |\zeta|^2-\tau \re \zeta^2 \ud A(\zeta) &= \frac{1}{2\pi i}\int_{\partial S}  \Big( \frac12 \bar{\zeta}-\tau \zeta \Big)|\zeta|^2  \ud \zeta= \frac{1}{2\pi i}\int_{\partial \mathbb{D}} \Big(\frac{1}{2}f(1/z)-\tau f(z)\Big)f(z)f(1/z)f'(z) \ud z.
\end{align*}
Then after straightforward computation evaluating residues at $z=0$ and $z=a$, we obtain
\begin{align}\label{eq_simply energy quadratic computation}
\begin{split}
    &\quad \int_S |\zeta|^2-\tau \re \zeta^2 \ud A(\zeta)
    \\
    &= \Big(c+\frac{1}{2}\Big)(1-\tau^2)^2-(1-\tau)(1-\tau^2)cp^2 -\frac{R^3\kappa p }{a^3}\Big((1+\tau a^2)\kappa-(1-\tau)(1-a^2)(1-\tau a^2)\Big).
\end{split}
\end{align}
Combining~\eqref{eq_simply energy log p computation} and~\eqref{eq_simply energy quadratic computation}, we conclude \eqref{eq_simply energy energy term}. 
Finally, the evaluation of $\mathcal{I}_{ \rm s }$ follows from \eqref{energy in terms of Robin}. 
\end{proof}

\section{Proofs of main results}  \label{Section_proofs}

This section culminates the results established in the previous sections and completes the proof of our main results.

\subsection{Proofs of Theorem~\ref{Thm_main droplet} and~\ref{Thm_droplet and energy}}  \label{Subsection_main droplet} 

We now summarise our results and highlight where each key ingredient of the proofs has been established.

\subsubsection{Doubly connected regime; Theorem~\ref{Thm_main droplet} (i) and  Theorem~\ref{Thm_droplet and energy} (i)}
 
In Proposition~\ref{prop_doubly connected droplet}, we established that the parameters \( (p, c, \tau) \) induce a doubly connected droplet if and only if \( D \subset E \), where \( D \) and \( E \) were defined in Section~\ref{Subsection_doubly connected droplet}.
In this case, the droplet is given by \( E \cap D^c \), and the weighted logarithmic energy is determined by~\eqref{weighted energy doubly connected}, as shown in Lemma~\ref{lem_energy doubly connected}.

It remains to verify that \( (p, c, \tau) \) lies in Regime I if and only if \( D \subset E \). This is an elementary computation, but we provide some details for the reader's convenience.
Suppose that \( c \) and \( \tau \) are fixed. If \( c(1-\tau^2) > (1+c)(1-\tau) \), then \( D \) cannot be contained in \( E \) for any \( p \).
Now, suppose \( c(1-\tau^2) \leq (1+c)(1-\tau) \). Note that the radius of curvature of \( \partial E \) at its rightmost point, \( (1+\tau)\sqrt{1+c} \), is given by 
\begin{align*}
    r_E:= \frac{(1-\tau)^2}{1+\tau}\sqrt{1+c}.
\end{align*}
As \( p \) increases from \( p = 0 \), the maximum value of \( p \) for which \( D \subset E \) holds is reached when \( \partial D \) and \( \partial E \) first become tangent.
If the radius of \( D \) is greater than \( r_E \), then \( \partial D \) and \( \partial E \) will be tangent at two conjugate points.
By eliminating $y$ in the algebraic equations of $\partial D$ and $\partial E$, we have 
\begin{align}\label{eq_ellipse quadratic equation}
    4\tau x^2-2(1+\tau)^2px +(1+\tau)^2p^2+(1+\tau)^2(1-\tau)(1-\tau-2c\tau) = 0.
\end{align}
Therefore, $D\subset E$ corresponds to the range of $p$ where discriminant of~\eqref{eq_ellipse quadratic equation} is not positive.
If the radius of $D$ is equal or smaller than $r_E$, $\partial D$ and $\partial E$ will meet at $(1+\tau)\sqrt{1+c}$, which is characterised by
\begin{align*}
    p+\sqrt{c(1-\tau^2)}\leq (1+\tau)\sqrt{1+c}.
\end{align*}
Combining above arguments, we have shown that $(p,c,\tau)$ falls within Regime I if and only if $D\subset E$.
Therefore, we complete the proof of Theorems~\ref{Thm_main droplet} (i) and~\ref{Thm_droplet and energy} (i).


\subsubsection{Simply connected regime; Theorem~\ref{Thm_main droplet} (ii) and  Theorem~~\ref{Thm_droplet and energy} (ii)}

By combining Propositions~\ref{prop:a priori}, ~\ref{prop_simply equality} and ~\ref{prop_simply inequality}, we have proven that the parameters \( (p, c, \tau) \) induce a simply connected droplet if and only if they fall within Regime II. Moreover, we have described the boundary of the droplet as the closure of the interior of the image of the unit circle under the rational map~\eqref{eq_simply conformal map}. Furthermore, the weighted logarithmic energy~\eqref{weighted energy simply connected} was established in Lemma~\ref{lem_energy simply connected}.

\subsubsection{Double component regime; Theorem~\ref{Thm_main droplet} (iii)}
To complete the proof, we verify that if the droplet consists of two disjoint simply connected components, then \( (p, c, \tau) \) belongs to Regime III. 
As shown in Section~\ref{Section_topology}, the only possible topology for the droplet in this case is two simply connected components, since the doubly connected case corresponds to Regime I and the simply connected case corresponds to Regime II.

\subsection{Proof of Corollary~\ref{Cor_moments}}  \label{Subsection_moments}

We now present the proof of Corollary~\ref{Cor_moments}. 
Recall that the partition functions $Z_N^{ \mathbb{C} }$ and $ Z_N^{ \mathbb{H} } $ are defined as normalisation constants in \eqref{Gibbs complex} and \eqref{Gibbs symplectic}. In both cases, it is well known that 
\begin{equation} \label{ZN expansions leading order}
\log Z_N^{ \mathbb{C} }(W) 
= -I_W( \mu_W ) N^2 +o(N^2) \qquad  \log Z_N^{ \mathbb{H} }(W) = -2\,I_W( \mu_W ) N^2 +o(N^2), 
\end{equation}
see e.g. \cite{BKS23} and references therein. 
Furthermore, it follows from \cite{AS21,Se23} that for the complex case, we have 
\begin{equation} \label{ZN expansion up to N}
\log Z_{N}^\C(W) = - I_W(\mu_W) N^2 +\frac12 N \log N + \Big( \frac{\log(2\pi)}{2}-1 - \frac12 \int_\C \log(\Delta W) \,\ud\mu_W \Big) N +o(N^{ \frac12 +\epsilon }) 
\end{equation}
for some $\epsilon >0$.  

We now connect the free energy expansions with the moments of the characteristic polynomials in Corollary~\ref{Cor_moments}. 
By their definitions, the moments of characteristic polynomials can be expressed in terms of the partition functions as 
\begin{equation} \label{moments in terms of ZN}
\mathbb{E} \Big[ \,  \Big|\det (X-z) \Big|^{2cN} \Big]  = \begin{cases}
Z_N^{ \mathbb{C} }(Q)/Z_N^{ \mathbb{C} }(W^{ \rm e } )  & \textup{for the complex case},
\smallskip 
\\
Z_N^{ \mathbb{H} }(Q)/Z_N^{ \mathbb{H} }(W^{ \rm e } )  & \textup{for the symplectic case},
\end{cases}
\end{equation}
where $Q$ is given by~\eqref{eq:potential} with $p=z$ and $W^{ \rm e }$ is given by \eqref{def of potential eGinibre}.
Indeed, by using the theory of planar orthogonal and skew-orthogonal polynomials (see e.g. \cite{BF24}), one can explicitly express $Z_N^{ \mathbb{C} }(W^{ \rm e } )$ and $Z_N^{ \mathbb{H} }(W^{ \rm e } )$ in terms of the Barnes $G$-function. 
It is also well known that 
\begin{equation} \label{energy for the eGinibre}
I_W( \mu_W^{ \rm e } ) = \frac34, 
\end{equation}
which can also be seen as \eqref{weighted energy doubly connected} with $c=0$. Then by combining Theorem~\ref{Thm_droplet and energy}, \eqref{ZN expansions leading order} (and also \eqref{ZN expansion up to N} for the complex case), \eqref{moments in terms of ZN} and \eqref{energy for the eGinibre}, we obtain the desired results. Here, for the complex case, we have used that 
\begin{align*}
\int_\C \log(\Delta Q) \,\ud\mu_Q  = \log \Big( \frac{1}{1-\tau^2}\Big),
\end{align*}
which follows from the fact that $\mu_Q $ has the total mass $1$.

\appendix

\section{Univalence criterion} \label{section univalence}

This appendix is devoted to the proof of Proposition~\ref{prop_univalence}. By definition, this reduces to finding the condition for a certain quadratic polynomial to have all its roots inside \( \bar{\mathbb{D}} \). Therefore, we present a specific case of the Schur-Cohn test that resolves this problem.  
Although the test is originally used to determine the number of roots of a polynomial of arbitrary degree within \( \mathbb{D} \), we focus on the quadratic case, which also accounts for roots on the boundary \( \partial \mathbb{D} \). For the general Schur-Cohn test, we refer to~\cite{Hen88}, and for its applications in quadrature domain theory, we refer to~\cite{AHT21}.

Let $p$ be a polynomial $p(w) = a_0 + a_1w + \ldots + a_nw^n \in \mathcal{P}_n$, where $\mathcal{P}_n$ denotes the set of complex coefficient polynomials of degree $\leq n$. The  reciprocal polynomial $p^\#(w)$ of $p \in \mathcal{P}_n$ is defined by
\begin{equation*}
    p^\#(w) = w^n\overline{p(1/\bar{w})} = \bar{a}_n +  \bar{a}_{n-1} w + \ldots +  \bar{a}_0 w^n.
\end{equation*}
The \textit{Schur transform} $S_n : \mathcal{P}_n \to \mathcal{P}_{n-1}$ is defined as
\begin{equation}
    S_n(p)(w) =  \bar{a}_0 p(w)- a_n p^\#(w).
\end{equation}
For $p \in \mathcal{P}_n$, we define
\begin{align*}
    p_0 = p, \qquad p_1= S_n(p_0), \qquad \ldots, \qquad  p_n= S_1(p_{n-1}).
\end{align*}

\begin{lem}\label{Schur-Cohn test}
Let $p\in \mathcal{P}_2$ with $p_1(0)<0$. Then all zeros of $p(w)$ lie in $\bar{\mathbb{D}}$ if and only if $p_2(0)\geq 0$.
\begin{proof}
Let $p(w)= a_0 + a_1w + a_2w^2 \in \mathcal{P}_2$. Then
\begin{align*}
    p_1(w) = |a_0|^2-|a_2|^2 + (\bar{a}_0a_1-a_2\bar{a}_1)w, \quad p_2(w) = (|a_0|^2-|a_2|^2)^2 - |\bar{a}_0a_1-a_2\bar{a}_1|^2.
\end{align*}
Then we have $a_2\neq 0$ since $p_1(0)<0$.

We denote by $w_1, w_2$ the two roots of $p$.
We first consider the case where a root of $p$ lies on $\partial \mathbb{D}$. Without loss of generality, set $|w_2|=1$. Then due to the condition $p_1(0)<0$, we have $|w_1|<1$. Furthermore,
\begin{align*}
    p_1(w)=|a_2|^2 (|w_1|^2-1)( 1 - \bar{w}_2w),
\end{align*}
which gives
\begin{align*}
    p_2(0) = |a_2|^4(|w_1|^2-1)^2(1-|w_2|^2) = 0.
\end{align*}

Next, assume that $p$ contains no zero on $\partial \mathbb{D}$. Then $p^\#$ also does not contain any zero on $\partial \mathbb{D}$. For $|w|=1$, the condition $p^\#(w)=w^n\overline{p(w)}$ implies
\begin{align*}
    |a_2p^{\#}(w)| > |\bar{a}_0p(w)|, \quad |w|=1.
\end{align*}
Since \( a_2 p^\# \) does not vanish on \( \partial \mathbb{D} \) by assumption, Rouché's theorem asserts that \( a_2 p^\# \) and \( p_1 = \bar{a}_0 p - a_2 p^\# \) have the same number of zeros in \( \mathbb{D} \).  
Note that the root of \( p_1 \) lies in \( \mathbb{D} \) if and only if \( p_2(0) < 0 \). Thus, the condition \( p_2(0) \geq 0 \) holds if and only if \( p^\# \) has no roots in \( \mathbb{D} \), which is equivalent to saying that all zeros of \( p \) are in \( \bar{\mathbb{D}} \).
\end{proof}
\end{lem}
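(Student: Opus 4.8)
The plan is to reduce the equivalence to a Rouché-theorem zero-count, after first recording explicit formulas for $p_1$ and $p_2$. Writing $p(w)=a_0+a_1w+a_2w^2$, I would compute $p^\#(w)=\bar a_2+\bar a_1 w+\bar a_0 w^2$ and hence $p_1(w)=\bar a_0 p(w)-a_2 p^\#(w)=(|a_0|^2-|a_2|^2)+(\bar a_0a_1-a_2\bar a_1)w$, the quadratic term cancelling, and then $p_2(0)=(|a_0|^2-|a_2|^2)^2-|\bar a_0a_1-a_2\bar a_1|^2$. The hypothesis $p_1(0)=|a_0|^2-|a_2|^2<0$ already forces $a_2\ne 0$, so $p$ is genuinely quadratic; I denote its roots $w_1,w_2$, so that $|a_0/a_2|=|w_1w_2|$. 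The structural input I would emphasise is the reflection principle: for $|w|=1$ one has $1/\bar w=w$, hence $|p^\#(w)|=|w|^2\,|\overline{p(1/\bar w)}|=|p(w)|$, and the zeros of $p^\#$ are exactly the points $1/\bar w_j$. Consequently $p$ has all its zeros in $\bar{\mathbb D}$ if and only if $p^\#$ has none of its zeros in $\mathbb D$ (with the caveat that when $a_0=0$ a zero of $p^\#$ escapes to $\infty$, which is harmless since it corresponds to the root $w=0\in\mathbb D$ of $p$).

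Next I would split into two cases. If $p$ has a root on $\partial\mathbb D$, say $|w_2|=1$, then $p_1(0)=|a_2|^2(|w_1w_2|^2-1)=|a_2|^2(|w_1|^2-1)<0$ forces $|w_1|<1$, so all roots of $p$ lie in $\bar{\mathbb D}$; and substituting $a_0=a_2w_1w_2$, $a_1=-a_2(w_1+w_2)$ one finds $p_1(w)=|a_2|^2(|w_1|^2-1)(1-\bar w_2 w)$, whence $p_2(0)=|a_2|^4(|w_1|^2-1)^2(1-|w_2|^2)=0\ge 0$, so the claimed equivalence holds. If instead $p$ has no root on $\partial\mathbb D$, then neither does $p^\#$, and since $|a_2|>|a_0|$ together with $|p^\#(w)|=|p(w)|$ on $|w|=1$ gives $|a_2p^\#(w)|>|\bar a_0p(w)|$ there, Rouché's theorem applied to $a_2p^\#$ and $a_2p^\#-\bar a_0p=-p_1$ shows that $p^\#$ and $p_1$ have the same number of zeros in $\mathbb D$. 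But $p_1$ has degree $\le 1$: writing $b_0=|a_0|^2-|a_2|^2$ and $b_1=\bar a_0a_1-a_2\bar a_1$, its zero $-b_0/b_1$ lies in $\mathbb D$ exactly when $|b_0|<|b_1|$, i.e. $p_2(0)<0$, while if $b_1=0$ then $p_1$ is a nonzero constant and $p_2(0)=b_0^2>0$. Hence $p^\#$ has no zero in $\mathbb D$ if and only if $p_2(0)\ge 0$, which by the reflection principle is equivalent to all roots of $p$ lying in $\bar{\mathbb D}$.

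The main obstacle is precisely the boundary case treated above: Rouché's theorem needs a strict inequality and no zeros of the comparison function on the contour, so the argument of the second case cannot be run when $p$ (equivalently $p^\#$) vanishes on $\partial\mathbb D$. What makes the statement go through is the hypothesis $p_1(0)<0$, which in that situation pins down $|w_1|<1$ and forces $p_2(0)$ to equal $0$ exactly; this is why the conclusion is phrased with $p_2(0)\ge 0$ rather than a strict inequality. A secondary technical point to keep track of is the degenerate subcase $a_0=0$, where $p^\#$ drops degree and one zero migrates to infinity; this needs to be acknowledged but is absorbed into the reflection-principle bookkeeping, since it matches the root $w=0\in\mathbb D$ of $p$.
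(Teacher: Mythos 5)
Your proposal is correct and follows essentially the same route as the paper: explicit computation of $p_1$ and $p_2$, the observation $a_2\ne 0$, the boundary-root case giving $p_2(0)=0$ exactly, and the Rouché comparison of $a_2p^\#$ with $-p_1$ on $\partial\mathbb D$ in the interior case. The only differences are cosmetic points of rigor that you add: you spell out the degenerate sub-case $\bar a_0a_1-a_2\bar a_1=0$ (where $p_1$ is a nonvanishing constant, so $p_2(0)>0$ and the zero-count is trivially zero — the paper's phrasing "the root of $p_1$ lies in $\mathbb D$ iff $p_2(0)<0$" tacitly covers this), and you flag the $a_0=0$ degree-drop of $p^\#$, which as you note is harmless. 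These are legitimate small tightenings but do not change the argument.
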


\begin{proof}[Proof of Proposition~\ref{prop_univalence}]
Since univalence is preserved under translation and scalar multiplication, by \eqref{rel of f and g}, it suffices to consider the univalence of \( g \) in \eqref{def of rational g}.
By definition, $g$ is univalent on $\bar{\mathbb{D}}^c$ if for all $|z|>1$, all zeros of the quadratic polynomial
\begin{align}\label{eq:univ1}
\begin{split}
    p_z(w) &:= zw(z-a)(w-a) \frac{g(z)-g(w)}{z-w}
    = z(z-a)w^2 -\big((az+\tau)(z-a) - \kappa z\big)w + a\tau(z-a) 
\end{split}
\end{align}
lie in $\bar{\mathbb{D}}$. 
By the continuous dependence of the roots of \( p_z \) on \( z \), the function \( g \) is univalent on \( \bar{\mathbb{D}}^c \) if and only if all roots of \( p_z \) lie in \( \bar{\mathbb{D}} \) for \( |z| = 1 \).

Letting $p= p_z$, the Schur transforms $p_1 = S_2(p)$ and $p_2=S_1(p_1)$ in ~\eqref{eq:univ1} under $|z|=1$ are given by 
\begin{align}
    p_1(w) &= \Big(|z-a|^2(a(1-\tau^2) +\tau(1-a^2)z) - \kappa(1+\tau a^2) z + a\kappa(1+\tau)\Big)w - |z-a|^2(1-\tau^2a^2),\\
    p_2(w) &= |z-a|^4(1-\tau^2a^2)^2 - \Big||z-a|^2(a(1-\tau^2) +\tau(1-a^2)z) - \kappa(1+\tau a^2) z + a\kappa(1+\tau)\Big|^2. \label{eq:univ2}
\end{align}
Note that \( p_1(0) < 0 \) for all \( |z|=1 \). By virtue of Lemma~\ref{Schur-Cohn test}, it suffices to determine the range of \( \kappa \) for which \( p_2(0) \geq 0 \) for all \( |z| = 1 \). Expanding~\eqref{eq:univ2},  the condition \( p_2(0) \geq 0 \) is equivalent to 
\begin{align}\label{eq:univ3}
\begin{split}
&\quad  (1-a^2)(1-\tau^2)|z-a\tau|^2|z-a|^4 \geq |a(1+\tau)z - (1+\tau a^2)|^2\kappa^2
    \\
    &-2|z-a|^2 \Big(
    a(1-\tau^2)\big((1+\tau a^2)\re z - a(1+\tau)\big)
    + \tau (1-a^2)\big(1+\tau a^2- a(1+\tau)\re z\big)\Big)\kappa .   
\end{split}
\end{align}
The solution of~\eqref{eq:univ3} with respect to \( \kappa \) is given by a closed interval on the real line for each \( |z| = 1 \), since it is a quadratic inequality in \( \kappa \).  
Since our objective is to find the intersection of such closed intervals over all \( |z| = 1 \), the admissible range of \( \kappa \) is a single closed interval. Moreover, since \( \kappa = 0 \) satisfies the inequality~\eqref{eq:univ3} for all \( |z| = 1 \), the solution set is nonempty.

We first claim that \( \kappa_{\rm max} \), as defined in \eqref{eq_kappa max}, is the largest value of \( \kappa \) that satisfies~\eqref{eq:univ3} for all \( |z| = 1 \).  
Substituting \( \kappa = \kappa_{\rm max} \) into~\eqref{eq:univ3}, we obtain
\begin{align}
\begin{split}
    &\quad \frac{8\tau a^2(1+\tau)(1-a^2)}
    {(1+\tau a^2)^4}
    \Big((1+\tau a^2)\re z- a(1+\tau)\Big)^2\\
    &\times
    \Big( \big(1-
    \tau a^2 - 2\tau^2a^2 +2\tau a^4 +\tau^2 a^4 - \tau^3 a^6\big)
    -a(1-\tau)(1+\tau a^2)^2 \re z
    \Big) \geq 0, \label{eq:univ4}
\end{split}
\end{align}
For $|z|=1$, we have
\begin{align*}
    &\quad \big(1-
    \tau a^2 - 2\tau^2a^2 +2\tau a^4 +\tau^2 a^4 - \tau^3 a^6\big)-a(1-\tau)(1+\tau a^2)^2 \re z
    \\
    &\ge  (1-a)(1-\tau a)(1- \tau^2 a^4 + 2\tau a(1-a^2)) > 0. 
\end{align*}
Thus we have proven that $\kappa=\kappa_{\rm max}$ is admissible. Observe that the equality in~\eqref{eq:univ4} holds if and only if
\begin{equation*}
    z =\frac{a(1+\tau)}{1+\tau a^2}\pm
    \frac{\sqrt{(1-a^2)(1-\tau^2a^2)}}{1+\tau a^2}i .
\end{equation*}
Thus, if $\kappa> \kappa_{\rm max}$, the inequality~\eqref{eq:univ3} is violated at the same points. 

Lastly, we prove that \( \kappa_{\rm min} \) in \eqref{eq_kappa max} is the smallest value of \( \kappa \) that satisfies~\eqref{eq:univ3} for all \( |z| = 1 \).  
Again, substituting \( \kappa = \kappa_{\rm min} \) into~\eqref{eq:univ3} gives
\begin{align}\label{eq:univ5}
\begin{split}
    &\quad 4a(1-a)(1-\tau)(1-\re z)
    \\
    &\times\Big(2\tau a^2(1+\tau)(1+a)(\re z)^2 -2a(1+\tau)(1+\tau a)(1+\tau a^2)\re z
    \\
    &\qquad +(1+a^2+\tau a+2\tau a^2-\tau a^3-\tau^2a^2+2\tau^2a^3+\tau^2a^4+\tau^3a^3+\tau^3a^5)\Big) \geq0
\end{split}
\end{align}
for all $|z|=1$. Since
\begin{equation*}
    \frac{1}{2}  \frac{2a(1+\tau)(1+\tau a)(1+\tau a^2)}{2\tau a^2(1+\tau)(1+a)} = \frac{(1+\tau a)(1+ \tau a^2)}{2\tau a(1+a)} > 1,
\end{equation*}
it suffices to check the last term of inequality~\eqref{eq:univ5} when $z=1$. Indeed, one can notice that 
\begin{align*}
    &\quad2a^2\tau(1+a)(1+\tau) - 2a(1+\tau)(1+\tau a)(1+\tau a^2)
    \\
    &+(1+a^2+\tau a+2\tau a^2-\tau a^3-\tau^2a^2+2\tau^2a^3+\tau^2a^4+\tau^3a^3+\tau^3a^5) =(1-a)^2(1-\tau a)^2(1+\tau a) > 0.
\end{align*}
Therefore, the inequality~\eqref{eq:univ5} holds for \( \kappa = \kappa_{\rm min} \), with equality attained at \( z = 1 \).  
Again, if \( \kappa < \kappa_{\rm min} \), the inequality~\eqref{eq:univ3} would be violated at \( z = 1 \). Hence, the proof is complete.
\end{proof}



\end{document}